\newcommand{\bvec}[1]{\mathbf{#1}}
\newcommand{\valpha}{ {\bm{\alpha}} }
\newcommand{\vbeta}{ {\bm{\beta}} }
\newcommand{\va}{\bvec{a}}
\newcommand{\vb}{\bvec{b}}
\newcommand{\vk}{\bvec{k}}
\newcommand{\vq}{\bvec{q}}
\newcommand{\vr}{\bvec{r}}
\newcommand{\vv}{\bvec{v}}
\newcommand{\vx}{\bvec{x}}
\newcommand{\vy}{\bvec{y}}
\newcommand{\vz}{\bvec{z}}
\newcommand{\vG}{\bvec{G}}
\newcommand{\conj}[1]{\overline{#1}}
\newcommand{\I}{\mathrm{i}}
\newcommand{\mc}[1]{\mathcal{#1}}
\newcommand{\wt}[1]{\widetilde{#1}}
\newcommand{\abs}[1]{\left\lvert#1\right\rvert}
\newcommand{\ud}{\,\mathrm{d}}
\newcommand{\Or}{\mathcal{O}}
\newcommand{\RR}{\mathbb{R}}
\newtheorem{thm}{\protect\theoremname}
\theoremstyle{plain}
\theoremstyle{plain}
\newtheorem{rem}[thm]{\protect\remarkname}
\theoremstyle{plain}
\theoremstyle{plain}
\newtheorem{cor}[thm]{\protect\corollaryname}
\providecommand{\definitionname}{Definition}
\providecommand{\assumptionname}{Assumption}
\providecommand{\corollaryname}{Corollary}
\providecommand{\lemmaname}{Lemma}
\providecommand{\propositionname}{Proposition}
\providecommand{\remarkname}{Remark}
\providecommand{\theoremname}{Theorem}
\newcommand{\xsum}{\mathop{\sum\nolimits'}}
\crefname{equation}{Eq.}{Eqs.}
\newcommand{\XX}[1]{\textcolor{blue}{[XX: #1]}}
\newcommand{\REV}[1]{\textcolor{black}{#1}}
\numberwithin{equation}{section}
\numberwithin{figure}{section}
\numberwithin{table}{section}
\numberwithin{thm}{section}
\title{Unified analysis of finite-size error for periodic Hartree-Fock and second order M{\o}ller-Plesset perturbation theory}
\author[1]{Xin Xing}
\author[2]{Xiaoxu Li}
\author[3]{Lin Lin \thanks{Mailing address: 817 Evans Hall, University Dr, Berkeley, CA 94720}}
\affil[1]{Department of Mathematics, University of California, Berkeley, CA 94720}
\affil[2]{Department of Mathematics, University of California,
					Berkeley, CA 94720, USA, and 
					College of Education for the Future, Beijing Normal University at Zhuhai, Guangdong 519087, P.R.China.}
\affil[3]{Department of Mathematics, University of California, Berkeley, and Applied Mathematics and Computational Research
					Division, Lawrence Berkeley National Laboratory, Berkeley, CA 94720}
\date{}
\providecommand{\keywords}[1]{\textbf{\textit{Keywords:}} #1}
\providecommand{\msc}[1]{\textbf{\textit{2010 MSC:}} #1}
\begin{document}

\maketitle 

%
%

\begin{abstract}
Despite decades of practice, finite-size errors in many widely used electronic structure theories for periodic systems remain poorly understood. 
For periodic systems using a general Monkhorst-Pack grid, there has been no \REV{comprehensive and} rigorous analysis of the finite-size error in the Hartree-Fock theory (HF) and the second order M{\o}ller-Plesset perturbation theory (MP2), which are the simplest wavefunction based method, and the simplest post-Hartree-Fock method, respectively.
Such calculations can be viewed as a multi-dimensional integral discretized with certain trapezoidal rules. 
Due to the Coulomb singularity, the integrand has many points of discontinuity in general, and standard error analysis based on the Euler-Maclaurin formula gives overly pessimistic results. 
The lack of analytic understanding of finite-size errors also impedes the development of effective finite-size correction schemes.
We propose a unified \REV{analysis} to obtain  sharp convergence rates of  finite-size errors for the periodic HF and MP2 theories. 
Our main technical advancement is a generalization of the result of [Lyness, 1976] for obtaining sharp convergence rates of the trapezoidal rule for a class of non-smooth integrands. 
Our result is applicable to three-dimensional bulk systems as well as low dimensional systems (such as nanowires and 2D materials). Our unified analysis also allows us to prove the effectiveness of the Madelung-constant correction to the Fock exchange energy, and the effectiveness of a recently proposed staggered mesh method for periodic MP2 calculations \REV{[Xing, Li, Lin, J. Chem. Theory Comput. 2021]}.
Our analysis connects the effectiveness of the staggered mesh method with integrands with removable singularities, and suggests a new staggered mesh method for reducing finite-size errors of periodic HF calculations.
\end{abstract}
\noindent\msc{81Q99, 65G99, 65D32}

\noindent\keywords{finite-size error, periodic systems, Hartree-Fock, MP2 perturbation theory}

\section{Introduction}
Accurate estimate of ground state energies of periodic systems (e.g., crystals, nanotubes, nanowires, 2D materials, and surfaces) is of immense importance in quantum physics, chemistry, and materials science.
The simplest wavefunction based electronic structure theory is the Hartree-Fock (HF) theory, and the simplest post-HF wavefunction based method is the second order M{\o}ller-Plesset perturbation theories (MP2) (see e.g., \cite{SzaboOstlund1989,ShavittBartlett2009}).
The HF and MP2 theories are also ingredients in other electronic structure theories, such as in constructing accurate exchange-correlation energy functionals in Kohn-Sham density functional theory (DFT)~\cite{KohnSham1965,Martin2008}.  HF calculations for periodic systems have been routinely performed for decades. Despite the relatively large computational cost,  post-HF methods such as MP2 and coupled cluster (CC) theories have also been increasingly routinely performed for periodic systems~\cite{MarsmanGruneisPaierKresse2009,GruneisMarsmanKresse2010, MuellerPaulus2012,SchaeferRambergerKresse2017,McClainSunChanEtAl2017,GruberLiaoTsatsoulisEtAl2018}, thanks to the improvement of numerical algorithms and the increase of computational power.

For periodic systems, \REV{a fundamental physical quantity} is the energy per unit cell in the thermodynamic limit (TDL). 
The error between the energy computed from a finite-sized system and the exact value is called the finite-size error.
Due to the steep increase of the computational cost with respect to the system size (in particular for post-HF methods), reaching convergence in a brute force fashion is often beyond reach, and corrections to finite-size errors must be applied.
In order to develop finite-size correction schemes, accurate understanding of the scaling of the finite-size error is needed.

To the best of our knowledge, the finite-size errors of the Fock exchange energy  and the MP2 correlation energy have not been rigorously analyzed. (The correlation energy is defined to be the difference between the ground state energy of the post-HF theory and the HF ground state energy.
Throughout the paper, the exchange energy and the MP2 energy stand for the Fock exchange energy and the MP2 correlation energy, respectively.)
In a nutshell, let $\Omega$ be the unit cell of a periodic system, and $\Omega^*$ be the first Brillouin zone (BZ). 
The  exchange and the MP2  energies in the TDL can both be compactly written as the following integrals over $\Omega^*$: 
\begin{align}
E_\text{X}^{\text{TDL}} 
&= 
\int_{\Omega^*} \ud \vk_i \int_{\Omega^*} \ud \vk_j  F_\text{X}(\vk_i, \vk_j),
\label{eqn:exchange_tdl}
\\
E_\text{MP2}^{\text{TDL}} 
&= 
\int_{\Omega^*} \ud \vk_i \int_{\Omega^*} \ud \vk_j  \int_{\Omega^*} \ud \vk_a   F_\text{MP2}(\vk_i, \vk_j, \vk_a),
\label{eqn:mp2_tdl}
\end{align}
where, following the convention in quantum chemistry, $\vk_i, \vk_j$ ($\vk_a, \vk_b$) are crystal momentum vectors associated with the occupied (virtual) bands,  respectively.
Here $\Omega^*\subset \RR^d$ with $d=3$ for 3D periodic systems, and $d=1,2$ for quasi-1D and quasi-2D systems.
 

In  numerical calculations, the Brillouin zone $\Omega^*$ is first discretized by a uniform mesh $\mathcal{K}$ (called the Monkhorst-Pack mesh~\cite{MonkhorstPack1976}) with $N_\vk$ points in total. 
The exchange and the MP2 energies in \cref{eqn:exchange_tdl} and \cref{eqn:mp2_tdl} are then approximated by \textit{trapezoidal  rules} (see \Cref{sec:integral_form} for the precise definition of trapezoidal rules in the current context) that sample $\vk_i$, $\vk_j$, and $\vk_a$ on $\mathcal{K}$ as 
\begin{align}
E_\text{X}(N_\vk) 
& = \frac{\abs{\Omega^*}^2}{N_\vk^2}\sum_{\vk_i,\vk_j \in \mathcal{K}} F_\text{X}(\vk_i, \vk_j),
\label{eqn:exchange_n}\\
E_\text{MP2}(N_\vk) 
& = \frac{\abs{\Omega^*}^3}{N_\vk^3}\sum_{\vk_i,\vk_j, \vk_a\in \mathcal{K}} F_\text{MP2}(\vk_i, \vk_j, \vk_a).
\label{eqn:mp2_n}
\end{align}
Hence the general form of the energy in the TDL and the numerical scheme can be written  as
\begin{align}
E_{*}^{\text{TDL}} 
&= 
\int_{(\Omega^*)^s} F_{*}(\vx_1, \cdots, \vx_s) \ud \vx_1 \cdots \ud \vx_s,
\label{eqn:TDL_general}\\
E_{*}(N_\vk) 
& = \left(\frac{\abs{\Omega^*}}{N_\vk}\right)^s \sum_{\vx_1, \cdots, \vx_s \in (\mathcal{K})^s} F_{*}(\vx_1, \cdots, \vx_s),
\end{align}
where the subscript $*$ can be ``X'' or ``MP2'' and each $\vx_i\in \RR^d$.
We have $s=2$ for the exchange energy, and $s=3$ for the MP2  energy, respectively.
Eq.~\eqref{eqn:TDL_general} can be a high dimensional \REV{integral}.
For instance, MP2 calculations for 3D periodic systems require the evaluation of a $9$-dimensional integral.

The finite-size error, i.e., $E_*^\text{TDL} - E_*(N_\vk)$, thus can be interpreted as the error of the numerical quadrature. 
At first glance, it may seem that due to periodicity of the integrand $F_*$, the quadrature error should readily follow from standard numerical analysis of trapezoidal rules for periodic functions (e.g., \cite{JavedTrefethen2014,TrefethenWeideman2014}).
However, due to the subtle nature of the Coulomb singularity, the integrand is generally discontinuous at certain points.
As a result, standard error analysis based on the Euler-Maclaurin formula gives overly pessimistic results: not only the convergence rate is not sharp, direct application of the Euler-Maclaurin formula fails to demonstrate the convergence of the exchange and MP2 energy calculations towards the thermodynamic limit (see \Cref{sec:standard_em_quad}).

\REV{
        Besides the energy, many physical observables can be similarly represented as integrals over the first Brillouin zone. 
        While the Monkhorst-Pack mesh is perhaps the most widely used method for discretizing the Brillouin zone, other choices are also available such as the tetrahedron method \cite{blochl1994improved}. We refer readers to \cite{CancesEhrlacherGontierEtAl2020} for a more detailed discussion.
        Following a similar approach as developed in this paper, the finite-size errors in these more general settings may be analyzed as well. 
}

\vspace{1em}
\noindent\textbf{Contributions:}

In this paper, we establish the first \REV{comprehensive and} rigorous analysis of the quadrature errors in exchange and MP2 energy calculations \REV{for insulating systems with a direct gap and without topological obstructions~\cite{BrouderPanatiCalandraEtAl2007,MonacoPanatiPisanteEtAl2018}}. 
Our convergence rates are sharp for general systems and match numerical observations.
The key component of our analysis is a new Euler-Maclaurin type of formula in \cref{thm:em_n_fraction}, which generalizes the classical result by Lyness~\cite{Lyness1976}, and can predict the sharp convergence rate of the trapezoidal quadrature error for a class of non-smooth functions (including the integrand $F_{*}$ as special cases).
Using this formula, we can rigorously analyze the finite-size errors of the standard as well as a number of improved methods for exchange and MP2 energy calculations.
The main results are summarized in \cref{tab:main_result1} for 3D systems and \cref{tab:main_result2} for low-dimensional (quasi-1D and quasi-2D) systems. 

The staggered mesh method for exchange energy calculations (corresponding to the intersection of ``Staggered mesh method'' and ``Madelung-Exchange'' in \cref{tab:main_result1} and \cref{tab:main_result2}) is a new scheme. 
It is worth noting that the staggered mesh method only requires the computation of orbitals and orbital energies on an additional Monkhorst-Pack grid.
In electronic structure calculations, this only requires a set of non-self-consistent calculations, and the additional cost can be negligible.

Due to the appearance of a logarithmic dependence in \cref{thm:em_n_fraction}, the asymptotic scaling of the finite-size error $\delta E$ with respect to $N_\vk$ always involves a multiplicative $\ln N_\vk$ term.
For brevity of  notation, throughout the paper we slightly abuse the big-O notation when expressing the finite-size error of energies, i.e., $\delta E = \Or(N_\vk^{-\alpha})$ means that $\abs{\delta E }\leqslant C N_\vk^{-\alpha} \ln N_\vk$ for some constant $C$ when $N_\vk$ is sufficiently large.


\begin{table}[htbp]
        \centering
        \caption{Theoretical estimate of the quadrature errors in the calculations of the exchange energy, the Madelung-corrected exchange energy, and the MP2 energy for 3D periodic systems. 
        When combined with the staggered mesh method, the Madelung constant correction requires minor modifications as detailed in \Cref{subsec:staggered_ex}.
        The integrand $F_{*}$ of special systems has removable discontinuities, as detailed in \Cref{sec:removable}.
}
        \label{tab:main_result1}
        \begin{tabular}{rccc}
                \toprule 
                & Exchange & Madelung-Exchange & MP2   \\
                \midrule 
                Standard method  & $ \Or(N_\vk^{-\frac13})$         &   $\Or(N_\vk^{-1})$    &  $\Or(N_\vk^{-1})$  
                \\[0.5cm]
                \makecell[r]{Staggered mesh method\\ for general systems}  & $\Or(N_\vk^{-\frac13})$    &   $\Or(N_\vk^{-1})$  &  $\Or(N_\vk^{-1})$
                \\[0.5cm]
                \makecell[r]{Staggered mesh method\\ for special systems\tnote{d}} & $\Or(N_\vk^{-\frac13})$ &   $\Or(N_\vk^{-\frac53})$     &  $\Or(N_\vk^{-\frac53})$ \\
                \bottomrule 
        \end{tabular}
\end{table}

\begin{table}[htbp]
        \centering
        \caption{Theoretical estimate of  the quadrature errors  in the calculations of the Madelung-corrected exchange energy and the MP2 energy for quasi-1D and quasi-2D systems. 
        We consider a specific model for  low-dimensional systems where the Madelung constant correction is added to the exchange energy calculation by default.
        Notation `SA' means that the error decays super-algebraically, i.e., faster than $N_\vk^{-s}$ with any $s> 0$. 
        Modifications of the Madelung constant correction is needed for the staggered mesh method as detailed in \cref{appendix_low_dim}. 
        The integrand $F_{*}$ of special systems has removable discontinuities, as detailed in \Cref{sec:removable}.
        }
        \label{tab:main_result2}
        \begin{tabular}{r|cc|cc}
                \toprule 
                      &  \multicolumn{2}{c|}{Quasi-2D} & \multicolumn{2}{c}{Quasi-1D} \\
                                                                                                        &  Madelung-Exchange &  MP2   & Madelung-Exchange & MP2   \\
           \midrule 
				Standard method     &      $\Or(N_\vk^{-1})$   &  $\Or(N_\vk^{-1})$ &   $\Or(N_\vk^{-1})$  &  $\Or(N_\vk^{-1})$              
				\\[0.5cm]
				\makecell[r]{Staggered mesh method\\ for general systems}&      $ \Or(N_\vk^{-1})$   &  $\Or(N_\vk^{-1})$ &   SA  &  SA   
				\\[0.5cm]
				\makecell[r]{Staggered mesh method\\ for special systems}&      $\Or(N_\vk^{-2})$   &  $\Or(N_\vk^{-2})$ &   SA  &  SA        \\
             \bottomrule 
        \end{tabular}
%
\end{table}

\vspace{2em}
\noindent\textbf{Main idea:}

As the first step of our unified approach for finite-size error analysis, we
reformulate the exchange and MP2 energy calculations into the quadrature forms
in \cref{eqn:exchange_n} and \cref{eqn:mp2_n}, and obtain the explicit
representations of $F_\text{X}$ and $F_\text{MP2}$.  The finite-size error
analysis then becomes the classical numerical analysis problem of estimating the quadrature error of a trapezoidal rule for certain special integrands.

First we show that both $F_\text{X}(\vk_i,\vk_j)$ and $F_\text{MP2}(\vk_i, \vk_j,
\vk_a)$ are periodic with respect to each variable over $\Omega^*$, but
are discontinuous at points where $\vk_j - \vk_i = \bm{0}$ for $F_\text{X}$ and
$\vk_a - \vk_i = \bm{0}$ or $\vk_a - \vk_j  = \bm{0}$ for $F_\text{MP2}$ if
restricting $\vk_i,\vk_j,\vk_a$ to $\Omega^*$.  
We identify that the finite-size error is dominated by the quadrature error for a class of non-smooth functions.  Specifically, the quadrature error of $F_\text{X}$ is
dominated by  that of a non-smooth component of the form
\[
\dfrac{f(\vq)}{|\vq|^2}\ \text{with}\ f(\vq)  = \Or(|\vq|^{a}),
\]
where $f(\vq)$ is a generic smooth function compactly supported in $\Omega^*$, with $a = 0$ in the standard exchange energy calculation and $a = 2$ in the Madelung-corrected case.  Here $\vq$ denotes the minimum image of $\vk_j - \vk_i$ in $\Omega^*$ and the trapezoidal rule for this component is over $\vq \in \Omega^*$ with an $N_\vk$-sized MP mesh $\mathcal{K}_\vq$ induced by the definition of $\vq$ with $\vk_i,\vk_j \in \mathcal{K}$.  Similarly, the quadrature error of $F_\text{MP2}$ is dominated by that of the non-smooth  components of the forms
\begin{equation*}
\begin{split}
&
\dfrac{f(\vq_1)}{|\vq_1|^2}\ \text{with}\ f(\vq_1) = \Or(|\vq_1|^2), \qquad  \dfrac{f(\vq_1)}{|\vq_1|^4}\ \text{with}\ f(\vq_1) = \Or(|\vq_1|^4),
\\
& 
\dfrac{f_1(\vq_1, \vq_2)}{|\vq_1|^2}\dfrac{f_2(\vq_1, \vq_2)}{|\vq_2|^2}\ \text{with}\  f_1(\vq_1,\vq_2) = \Or(|\vq_1|^2), f_2(\vq_1,\vq_2) = \Or(|\vq_2|^2), 
\end{split}
\end{equation*}
where $\vq_1$ and $\vq_2$ are the minimum images of $\vk_i - \vk_a$ and $\vk_j - \vk_a$ in $\Omega^*$, respectively, and share the same MP mesh $\mathcal{K}_\vq$ in the corresponding trapezoidal rules. Here, $f, f_1, f_2$ denote generic smooth functions compactly supported in $\Omega^*$ or $\Omega^*\times \Omega^*$. 

The remaining problem is to analyze the quadrature errors for the non-smooth functions above. 
For a general function $g(\vx)$ compactly supported in a hypercube $V\subset \mathbb{R}^d$, the quadrature error of $\int_V g(\vx)\ud\vx$ can be analyzed using the standard Euler-Maclaurin formula: if $g(\vx)$ has continuous derivatives up to order $s$,  the quadrature error with a uniform mesh of size $m^d$ scales as $\Or(m^{-s})$. However, the non-smooth terms we identified above may be discontinuous and have unbounded first-order derivatives.
Thus,  direct application of the standard Euler-Maclaurin formula predicts that the quadrature errors in both the exchange and MP2 energy calculations do not decay at all with respect to $N_{\vk}$! 

This overly pessimistic estimate above can however be significantly improved. 
The key technical step of our analysis is to generalize a classical result by Lyness~\cite{Lyness1976} on the quadrature error for homogeneous functions, and obtain a special Euler-Maclaurin type of formula in \cref{thm:em_n_fraction} that works for non-smooth functions in the general form (which we refer to as the ``fractional form'')
\begin{equation}\label{eqn:fractional_form}
g(\vx_1, \vx_2, \ldots, \vx_n) = 
\dfrac{f_1(\vx_1, \vx_2, \ldots, \vx_n)}{ (\vx_1^T M \vx_1)^{p_1}} 
\dfrac{f_2(\vx_1, \vx_2, \ldots, \vx_n)}{ (\vx_2^T M \vx_2)^{p_2}} 
\cdots 
\dfrac{f_n(\vx_1, \vx_2, \ldots, \vx_n)}{  (\vx_n^T M \vx_n)^{p_n}},
\end{equation}
where $M$ is a symmetric positive definite matrix. 
For each $i = 1,\ldots, n$, let $f_i$ be smooth and scale as $\Or(|\vx_i|^{a_i})$ near $\vx_i = \bm{0}$ so that $f_i/(\vx_i^T M \vx_i)^{p_i} = \Or(|\vx_i|^{\gamma_i})$  with $\gamma_i = a_i - 2p_i$. 
Based on this special formula and further assuming  $g$ to be compactly supported in $V^{\times n}$ with a hypercube $V$, we prove in \cref{cor:em_n_fraction} that the quadrature error for $\int_{V^{\times n}} g\ud\vx_1\ldots\ud\vx_n$ using an $m^{nd}$-sized uniform mesh scales as $\mathcal{O}(m^{-(d+\min_i \gamma_i)}\ln m)$. 
Applying this result to our concerned non-smooth terms above, we obtain the $\Or(N_\vk^{-\frac13})$ and $\Or(N_\vk^{-1})$ quadrature error estimates in the exchange and Madelung-corrected exchange energies, and $\Or(N_\vk^{-1})$ quadrature error estimate in MP2 energy. 

For low-dimensional systems and special systems with removable discontinuities in $F_\text{X}$ and $F_\text{MP2}$, similar application of the special Euler-Maclaurin formula can be used to obtain the corresponding quadrature error estimates.

The Madelung constant correction is commonly used in practice to  reduce the $\Or(N_\vk^{-\frac13})$ finite-size error in the exchange energy calculation for 3D periodic systems, and is also used directly in the model Hamiltonian with a shifted Ewald kernel for periodic systems \cite{FraserFoulkesRajagopalEtAl1996}. 
This correction is originally introduced to remove the artificial interactions between particles and its periodic images in the supercell model (this model is equivalent to a special choice of the MP mesh $\mathcal{K}$). 
Numerical observations as well as heuristic arguments suggest that the finite-size error of the corrected scheme scales as $\Or(N_\vk^{-1})$.  
By analyzing the quadrature error, we rigorously prove this error scaling and justify the effectiveness of the correction (see \cref{tab:main_result1} and \cref{tab:main_result2}). 
The key, also an interesting finding, is the close connection between the Madelung constant correction and a quadrature technique called the singularity subtraction method. 
Specifically, the dominant $\Or(N_\vk^{-\frac13})$ error turns out to come from the leading non-smooth term of $F_\text{X}(\vk_i, \vk_j)$ in the form $\frac{C}{|\vq|^2}$ with a constant $C$ and $\vq = \vk_j - \vk_i$. 
We show that the correction is equivalent to first subtracting $\frac{C}{|\vq|^2}$ from $F_\text{X}(\vk_i, \vk_j)$, applying the trapezoidal rule over the remainder, and then adding back the contribution of the subtracted term.
As a result, the leading non-smooth term of $F_\text{X}$ is integrated exactly while the remainder, with improved smoothness condition, can be shown to have $\Or(N_\vk^{-1})$ quadrature error.
 

As shown by our analysis, the discontinuity of $F_\text{X}(\vk_i,\vk_j)$  and $F_\text{MP2}(\vk_i,\vk_j,\vk_a)$ is the main cause that leads to $\Or(N_\vk^{-1})$ quadrature error in the Madelung-corrected exchange energy, as well as MP2 energy calculations. 
The two functions are discontinuous at $\vk_j - \vk_i= \bm{0}$, and $\vk_a - \vk_i = \bm{0}$ or $\vk_a  - \vk_j = \bm{0}$, respectively. 
However, in many special systems, the points of discontinuity in $F_\text{X}$ and $F_\text{MP2}$ may become removable, i.e., by properly defining their function values at discontinuous points, $F_\text{X}$ and $F_\text{MP2}$ can become continuous. 
Unfortunately, the standard methods for calculating the exchange and MP2 energies (\cref{eqn:exchange_n} and \cref{eqn:mp2_n}) use the same mesh $\mathcal{K}$ for $\vk_i$, $\vk_j$, and $\vk_a$.
Therefore certain quadrature nodes are always placed at the points of discontinuity, and the resulting quadrature error remains $\Or(N_\vk^{-1})$. 
Inspired by this observation, we previously proposed the staggered mesh method for computing the MP2 energy~\cite{XingLiLin2021}, which uses one mesh for occupied orbitals $\vk_i,\vk_j$ and another staggered mesh for unoccupied orbitals $\vk_a$ to avoid sampling these discontinuous points (i.e., $\vk_a - \vk_i = \bm{0}$ or $\vk_a - \vk_j = \bm{0}$).
In this paper, we generalize the staggered mesh method for the exchange energy calculation. 
As listed in \cref{tab:main_result1} and \cref{tab:main_result2}, we demonstrate that the quadrature error of the staggered mesh method can be $o(N_\vk^{-1})$ for both the Madelung-corrected exchange and MP2 energy calculations, when the integrand discontinuities are removable. 
Especially  for quasi-1D systems, the integrand $F_\text{X}$ and $F_\text{MP2}$ can always be improved to become smooth functions, and the quadrature error decays super-algebraically.

\vspace{1em}
\noindent\textbf{Related works:}

There have been many works on the heuristic understanding of finite-size errors in electronic structure calculations (e.g., \cite{MakovPayne1995} for analyzing the finite-size error of the electrostatic interaction in periodic, aperiodic and charged systems), as well as numerical schemes to correct finite-size errors in various contexts.
It is worth noting that many finite-size correction schemes originate from the context of quantum Monte Carlo (QMC) calculations (e.g., \cite{FraserFoulkesRajagopalEtAl1996,ChiesaCeperleyMartinEtAl2006,FoulkesMitasNeedsEtAl2001,DrummondNeedsSorouriEtAl2008,HolzmannClayMoralesEtAl2016}).
Some correction methods rely on truncating the Coulomb operator in the
real space (e.g., \cite{spencer08,SundararamanArias2013}). 
However, these methods are designed for Fock exchange energy calculations, and rely on decay properties of the single-particle density matrix in the real space (for gapped systems).
In particular, such truncated Coulomb operator should not be used in MP2 calculations. In this paper, we focus on periodic HF and MP2 calculations in the reciprocal space using the standard Coulomb operator, as well as an arbitrary Monkhorst-Pack grid. To our knowledge, there is no rigorous analysis of the finite-size error in this context.
 
A generic way to correct the finite-size errors is to perform a power-law extrapolation~\cite{MarsmanGruneisPaierKresse2009,BoothGruneisKresseAlavi2013, McClainSunChanEtAl2017, MihmYangShepherd2020}. 
It fits the energies from several calculations with different values of $N_{\vk}$ using a power function of $N_\vk$ to estimate $E_*^\text{TDL}$. 
This approach is simple and often effective, but does not provide understanding of the finite-size errors from first principles.
Furthermore, the precise form of the power-law extrapolation is often debatable at least in the pre-asymptotic regime (see e.g., \cite{FreysoldtNeugebauerVan2009}).
For QMC, MP2, and coupled cluster (CC) calculations, another common tool is to analyze the structure factor, and the corresponding correction scheme is called structure factor interpolation method \cite{ChiesaCeperleyMartinEtAl2006,LiaoGrueneis2016,GruberLiaoTsatsoulisEtAl2018}.
By analyzing the structure factor in MP2/CC calculations, it has been proposed that the finite-size error should scale as  $\Or(N_\vk^{-1})$, and is due to the omission of terms related to the singularity of the Coulomb kernel~\cite{LiaoGrueneis2016,GruberLiaoTsatsoulisEtAl2018}.
The corresponding correction scheme interpolates the structure factor, and then approximates the missing term via extrapolation. 
\REV{According to our analysis, this missing term contributes to a portion of the quadrature error  related to volume elements containing the Coulomb singularity, which is $\Or(N_\vk^{-1})$.
Our analysis also indicates that the remaining volume elements not containing the Coulomb singularity also have  significant contribution to the quadrature error, which is also $\Or(N_\vk^{-1})$.
(See \cref{rem:error_splitting} and \cref{thm:em_n_fraction} for more detailed explanations.)}
Hence the structure factor interpolation scheme cannot generally improve the asymptotic scaling of the finite-size error. 
Another finite-size correction scheme is the twist averaging method, which has been used for QMC calculations \cite{LinZongCeperley2001,FoulkesMitasNeedsEtAl2001}, and also recently in MP2/CC calculations~\cite{GruberLiaoTsatsoulisEtAl2018, MihmMcIsaacShepherd2019}.
The twist averaging method calculates the average of the energies using a set of shifted $\vk$-point meshes of the same size, which can reduce the fluctuation as well as the magnitude of the finite-size error as $N_\vk\rightarrow\infty$.
In particular, after twist averaging, the finite-size error can decay more smoothly with respect to $N_{\vk}$, which improves the effectiveness of power-law extrapolation~\cite{LinZongCeperley2001, MihmMcIsaacShepherd2019}.

The slow convergence of the exchange energy that scales as $\Or(N_\vk^{-\frac13})$ for 3D periodic systems is due to the integrable singularity of the integrand $F_\text{X}$ in the Brillouin zone. The correction using the Madelung constant~\cite{FraserFoulkesRajagopalEtAl1996,DrummondNeedsSorouriEtAl2008,McClainSunChanEtAl2017} removes the leading contribution, and the finite-size error is observed to become $\Or(N_{\vk}^{-1})$. However, there has not been rigorous proof of this statement. 
The Madelung constant only depends on the geometry of the unit cell $\Omega$ and hence can be efficiently pre-computed~\cite{FraserFoulkesRajagopalEtAl1996,DaboKozinskySingh-MillerEtAl2008,MakovPayne1995}.
An alternative strategy is to choose a suitable auxiliary function to remove the leading singular term~\cite{GygiBaldereschi1986,CarrierRohraGorling2007}. 
We prove that the finite-size error of both correction techniques is $\Or(N_{\vk}^{-1})$, and hence they are equivalent up to the leading order of the error.

\vspace{1em}
\noindent\textbf{Paper Organization:}

\REV{
        \cref{sec:background} introduces the background information of the problem and  notations used in the paper. 
        \cref{sec:exchange} and \cref{sec:mp2} provide the finite-size error analysis for the Fock exchange and the MP2 energy calculations, respectively. 
        These three sections contain the main message of this paper for readers with a broad background. 
        \cref{sec:madelung} and \cref{sec:removable} then extend the finite-size error analysis for two correction schemes: the Madelung constant correction 
        and the staggered mesh method. 
        The main numerical analysis result that estimates the quadrature error of trapezoidal rules for a general class of non-smooth integrands in the fractional form \cref{eqn:fractional_form}
        is described in \cref{sec:em}.
        These sections may be skipped on a first reading.
}

\section{Background}\label{sec:background}
Unless otherwise stated, throughout the paper, the system is assumed to extend along all three dimensions. Let $\Omega$ be the unit cell, $\abs{\Omega}$ be its volume, and $\Omega^*$ be the associated BZ. 
Denote the Bravais lattice and its associated reciprocal lattice by $\mathbb{L}$  and $\mathbb{L}^*$, respectively. 
We use a uniform mesh $\mathcal{K}$ for $\vk$-point sampling in $\Omega^*$ (which may or may not include the $\Gamma$ point, i.e., the point of origin in $\Omega^*$; see \cref{fig:Kq} for an illustration), also referred to as an Monkhorst-Pack (MP) mesh, and denote $N_\vk$ as the number of $\vk$ points in the mesh.
For a mean-field calculation with $\mathcal{K}$, 
each molecular orbital (also called band orbital), characterized by the $\vk$-point and the band index $n$, is written as
\[
\psi_{n\vk}(\vr) = \dfrac{1}{\sqrt{N_\vk}} e^{\I \vk\cdot\vr} u_{n\vk}(\vr)=
\frac{1}{\abs{\Omega}\sqrt{N_{\vk}}} \sum_{\mathbf{G}\in\mathbb{L}^*} \hat{u}_{n\vk}(\mathbf{G}) e^{\I (\vk+\mathbf{G}) \cdot \mathbf{r}},
\]
and is associated with an orbital energy $\varepsilon_{n\vk}$.
The pair product is defined as 
\[
\varrho_{n'\vk',n\vk}(\vr)=\conj{u}_{n'\vk'}(\vr)  u_{n\vk}(\vr) =\frac{1}{\abs{\Omega}} \sum_{\mathbf{G}\in\mathbb{L}^*} \hat{\varrho}_{n'\vk',n\vk}(\mathbf{G}) e^{\I \mathbf{G} \cdot \mathbf{r}},
\]
and a two-electron repulsion integral (ERI) is then computed as 
\[
\braket{n_1\vk_1,n_2\vk_2|n_3\vk_3,n_4\vk_4}
=  \frac{1}{\abs{\Omega}N_\vk} \xsum_{\vG\in\mathbb{L}^*}
\frac{4\pi}{\abs{\vq+\vG}^2}  
\hat{\varrho}_{n_1\vk_1,n_3\vk_3}(\mathbf{G}) \hat{\varrho}_{n_2\vk_2,n_4\vk_4}(\vG_{\vk_1,\vk_2}^{\vk_3,\vk_4}-\mathbf{G}),
\]
where $\vq = \vk_3 - \vk_1$, $\vG_{\vk_1,\vk_2}^{\vk_3,\vk_4} = \vk_1+\vk_2-\vk_3-\vk_4$, and $\xsum_{\vG\in\mathbb{L}^*}$ excludes the possible term with $\vq+\vG=\bm{0}$. 
Such an ERI can be non-zero only when $\vG_{\vk_1,\vk_2}^{\vk_3,\vk_4} \in \mathbb{L}^*$, corresponding to crystal momentum conservation. 

Below, band indices $i,j$ ($a,b$) always refer to the occupied (virtual) bands, respectively.
All analysis is performed in the spin-restricted setting, and can be straightforwardly generalized when the spin degree of freedom is taken into account explicitly. 
As detailed in \cref{appendix}, the finite-size error in the kinetic energy, the Hartree energy, and the energy due to external potentials all decay super-algebraically if assuming all orbitals can be evaluated exactly at any $\vk$. 
In the following discussion, we will not consider the finite-size errors of these three types of energies. 
The exchange energy and the MP2  energy per unit cell are computed respectively as
\begin{align}
E_\text{X}(N_\vk) 
& = 
-\dfrac{1}{N_\vk} \sum_{ij}\sum_{\vk_i\vk_j\in\mathcal{K}} 
\braket{i\vk_i,j \vk_j |j \vk_j, i\vk_i},
\label{eqn:exchange_eri}
\\
E_\text{MP2}(N_\vk)
&=\frac{1}{N_\vk}
\sum_{ijab}
\sum_{\vk_i \vk_j \vk_a\in\mathcal{K}}
\dfrac{
\big(
2\braket{i\vk_i,j \vk_j |a\vk_a, b\vk_b} 
-
\braket{i\vk_i,j \vk_j |b\vk_b, a\vk_a} 
\big)
}{\varepsilon_{i\vk_i,j\vk_j}^{a\vk_a,b\vk_b}}
\braket{a\vk_a, b\vk_b|i\vk_i,j \vk_j},
\label{eqn:mp2_eri}
\end{align}
with $\varepsilon_{i\vk_i,j\vk_j}^{a\vk_a,b\vk_b} = \varepsilon_{i\vk_i} + \varepsilon_{j\vk_j} - \varepsilon_{a\vk_a} - \varepsilon_{b\vk_b}$. 
For each set of $(\vk_i, \vk_j,\vk_a)$ in $E_\text{MP2}(N_\vk)$, $\vk_b$ is the unique point in $\mathcal{K}$ satisfying $ \vG_{\vk_i,\vk_j}^{\vk_a,\vk_b}\in\mathbb{L}^*$.  
When $N_\vk$ goes to infinity and $\mathcal{K}$ converges to $\Omega^*$, these two energies converge to their exact values in the TDL, denoted by $E_\text{X}^\text{TDL}$ and $E_\text{MP2}^\text{TDL}$. 
In this paper, we adopt a uniform approach from numerical quadrature perspective to describe the asymptotic scaling of the finite-size errors in the two energy calculations, i.e., $E_*^\text{TDL} - E_*(N_\vk)$ v.s.\ $N_\vk$, where the subscript $*$ can be ``X'' or ``MP2''.

\subsection{Integral form of the energy in TDL}\label{sec:integral_form}
For each ERI in the energy calculations above, three momentum vectors are sampled over $\mathcal{K}$ (thus over $\Omega^*$ in the TDL) while the remaining one is determined by crystal momentum conservation. 
Note that such an ERI, say $\braket{n_1\vk_1,n_2\vk_2|n_3\vk_3,n_4\vk_4}$, is invariant  if we shift any $\vk$ to $\vk + \vG$ with any $\vG\in\mathbb{L}^*$. 
For each set of  $(\vk_1, \vk_2, \vk_3)$ with $\vq = \vk_3 - \vk_1$, we could shift $\vk_4$ by some $\vG$ vector so that $\vG_{\vk_1,\vk_2}^{\vk_3,\vk_4} = \bm{0}	$ or equivalently $\vk_4 = \vk_2 - \vq$. 
Then, a properly scaled ERI below can be treated as a function of $\vk_1, \vk_2,$ and $\vq$ as 
\begin{align}
N_\vk \braket{n_1\vk_1,n_2\vk_2|n_3\vk_3,n_4\vk_4}
&= 
\frac{4\pi}{\abs{\Omega}} \xsum_{\vG\in\mathbb{L}^*}
\frac{1}{\abs{\vq+\vG}^2}  
\hat{\varrho}_{n_1\vk_1,n_3(\vk_1 + \vq)}(\mathbf{G}) \hat{\varrho}_{n_2\vk_2,n_4(\vk_2 - \vq)}(-\mathbf{G})
\nonumber \\
& = R_{n_1n_2n_3n_4}(\vk_1, \vk_2, \vq),
\label{eqn:eri_scaled}
\end{align}
with band indices $n_1, n_2, n_3,$ and $n_4$ as parameters.
We also define the orbital energy fraction term in MP2 energy calculation  as a function of $\vk_1,\vk_2$ and $\vq$ as
\begin{equation}\label{eqn:fraction_energy}
\dfrac{1}{\varepsilon_{n_1\vk_1, n_2\vk_2}^{n_3\vk_3, n_4\vk_4}} = 
\dfrac{1}{\varepsilon_{n_1\vk_1, n_2\vk_2}^{n_3(\vk_1+\vq), n_4(\vk_2-\vq)}}
=E_{n_1n_2n_3n_4}(\vk_1, \vk_2, \vq).
\end{equation}
Using these two basic notations, the energy calculations in \cref{eqn:exchange_eri} and \cref{eqn:mp2_eri} could be reformulated as 
\begin{itemize}
        \item Exchange energy
        \begin{align}
        E_\text{X}(N_\vk) 
        & = 
        -\dfrac{1}{N_\vk^2}\sum_{\vk_i\vk_j\in\mathcal{K}} \left( \sum_{ij}F_\text{X}^{ij}(\vk_i, \vk_j)\right),
        \label{eqn:exchange_N}
        \\
        F_\text{X}^{ij}(\vk_i, \vk_j) 
        & = 
        R_{ijji}(\vk_i, \vk_j, \vk_j - \vk_i).
        \nonumber
        \end{align}
        
        \item MP2 energy 
        \begin{align}
        E_\text{MP2}(N_\vk)
        &=\frac{1}{N_\vk^3}
        \sum_{\vk_i \vk_j \vk_a\in\mathcal{K}} 
        \left(
        \sum_{ijab}F_\text{MP2,d}^{ijab}(\vk_i, \vk_j, \vk_a) + F_\text{MP2,x}^{ijab}(\vk_i, \vk_j, \vk_a)
        \right),
        \label{eqn:mp2_N}
        \\
        F_\text{MP2,d}^{ijab}(\vk_i, \vk_j, \vk_a)  
        & = 
        2R_{ijab}(\vk_i, \vk_j, \vk_a -\vk_i)R_{abij}(\vk_a, \vk_b, \vk_i -\vk_a)E_{ijab}(\vk_i,\vk_j,\vk_a-\vk_i),
        \nonumber\\
        F_\text{MP2,x}^{ijab}(\vk_i, \vk_j, \vk_a)  
        & = 
        - R_{ijba}(\vk_i, \vk_j, \vk_b - \vk_i)R_{abij}(\vk_a, \vk_b, \vk_i -\vk_a)E_{ijab}(\vk_i,\vk_j,\vk_a-\vk_i),
        \nonumber
        \end{align}
        with $\vk_b = \vk_i + \vk_j - \vk_a$ from $\braket{i\vk_i, j\vk_j|a\vk_a,b \vk_b}$. 
        The summations over $F_\text{MP2,d}^{ijab}$ and $F_\text{MP2,x}^{ijab}$ are referred to as the direct and exchange terms of the MP2 energy, respectively.
\end{itemize}

In the TDL, $\mathcal{K}$ converges to $\Omega^*$ and the summation $\frac{1}{N_\vk}\sum_{\vk \in \mathcal{K}}$ converges to the integral $\frac{1}{|\Omega^*|}\int_{\Omega^*}\ud\vk$. 
The two energies then converge to a double and a triple integrals over $\Omega^*$, respectively, as
\begin{align}
E_\text{X}^\text{TDL}
& = 
-\dfrac{1}{|\Omega^*|^2} \iint_{\Omega^*\times \Omega^*}\ud \vk_i \ud \vk_j 
\left(
\sum_{ij} F_\text{X}^{ij}(\vk_i, \vk_j)
\right),
\label{eqn:exchange_TDL}\\
E_\text{MP2}^\text{TDL}
& =
\dfrac{1}{|\Omega^*|^3}\iiint_{\Omega^*\times \Omega^*\times\Omega^*}\ud \vk_i \ud \vk_j \ud \vk_a
\left(
\sum_{ijab} F_\text{MP2,d}^{ijab}(\vk_i, \vk_j, \vk_a) + F_\text{MP2,x}^{ijab}(\vk_i, \vk_j, \vk_a) 
\right).
\label{eqn:mp2_TDL}
\end{align} 
By this formulation, due to the periodicity of the integrands with respect to each $\vk$ variable, numerical calculations of the exchange and the MP2 energies in \cref{eqn:exchange_N} and \cref{eqn:mp2_N} can be interpreted as applying a trapezoidal quadrature rule to approximate the corresponding integrals \cref{eqn:exchange_TDL} and \cref{eqn:mp2_TDL} using a uniform mesh $\mathcal{K}$ in $\Omega^*$. 
The finite-size errors can thus be decomposed into the error of the numerical quadrature and the error of  the integrand evaluation. 

In this paper, we focus on systems with a direct gap, i.e., $\varepsilon_{i\vk_i} + \varepsilon_{j\vk_j} - \varepsilon_{a\vk_a} - \varepsilon_{b\vk_b} \leqslant -2 \varepsilon_g < 0$ for all
$i,j,a,b,\vk_i,\vk_j,\vk_a,\vk_b$. 
\REV{
We assume that the mean-field orbital energies $\{\varepsilon_{n\vk}\}$ and orbitals $\{u_{n\vk}\}$ are exact for any $n$ and $\vk \in\Omega^*$, 
and that a finite number of virtual bands are used for the energy calculations in both the finite and the TDL cases. 
In addition, we assume that the $\varepsilon_{n\vk}$, $\psi_{n\vk}$, and $u_{n\vk}$ are smooth with respect to $\vk$ for any fixed band $n$ 
and thus $\hat{\varrho}_{n'\vk',n\vk}(\mathbf{G})$ is also smooth with respect to $\vk$, $\vk'$ for fixed any $n, n',$ and $\vG$. 
For systems free of topological obstructions~\cite{BrouderPanatiCalandraEtAl2007,MonacoPanatiPisanteEtAl2018}, these conditions can be replaced by weaker conditions using techniques based on Green's functions. We find that such a treatment introduces a considerable overhead to the presentation. Moreover, this issue is orthogonal to the study of the quadrature error below.  Therefore we adopt the assumptions stated above, and postpone a complete treatment of the problem without assuming the smoothness of $\varepsilon_{n\vk},\psi_{n\vk}$ to a future work.
}
With these assumptions, the numerical evaluation of all the integrands in \eqref{eqn:exchange_N} and \eqref{eqn:mp2_N} is exact so that we could focus on the quadrature error only.

We use the term ``trapezoidal rule'' to refer to a general class of quadrature rules over a hypercube that has equal quadrature weights and has quadrature nodes on a uniform mesh. 
Specifically, for a general function $g$ over a hypercube $V$, a trapezoidal rule with a uniform mesh $\mathcal{X}$ is denoted as 
\[
\mathcal{Q}_{V}(g, \mathcal{X}) =  \dfrac{|V|}{|\mathcal{X}|}\sum_{\vx_i \in \mathcal{X}}g(\vx_i),
\]
and its quadrature error is denoted as
\[
\mathcal{E}_{V}(g, \mathcal{X})
= 
\mathcal{I}_{V}(g)
- 
\mathcal{Q}_{V}(g, \mathcal{X})
= \int_V \ud\vx g(\vx) - \dfrac{|V|}{|\mathcal{X}|}\sum_{\vx_i \in \mathcal{X}}g(\vx_i),
\]
where $\mathcal{I}$ is the integral operator.

With a finite MP mesh $\mathcal{K}$, the finite-size error problem now reduces to describing the asymptotic scalings of the
quadrature errors below with respect to $N_\vk$, 
\begin{align*}
& E_\text{X}^\text{TDL} - E_\text{X}(N_\vk)
= -\dfrac{1}{|\Omega^*|^2}\mathcal{E}_{\Omega^*\times \Omega^*}\left(\sum_{ij} F_\text{X}^{ij}(\vk_i, \vk_j), \mathcal{K}\times\mathcal{K}\right),
\\
& E_\text{MP2}^\text{TDL} - E_\text{MP2}(N_\vk)
=
\dfrac{1}{|\Omega^*|^3}
\mathcal{E}_{(\Omega^*)^{\times 3}}\left(\sum_{ijab} F_\text{MP2,d}^{ijab}(\vk_i, \vk_j, \vk_a) + F_\text{MP2,x}^{ijab}(\vk_i, \vk_j, \vk_a), \mathcal{K}^{\times 3}\right).
\end{align*}

\subsection{Basic properties of the integrands}
The convergence rate of a trapezoidal rule generally depends on the smoothness of the integrand and its behavior at the boundary. 
All integrands in the energy calculations \cref{eqn:exchange_N} and \cref{eqn:mp2_N} are built upon basic functions $R_{n_1n_2n_3n_4}(\vk_1,\vk_2,\vq)$  and $E_{n_1n_2n_3n_4}(\vk_1,\vk_2,\vq)$ in \cref{eqn:eri_scaled} and \cref{eqn:fraction_energy}.

We first note that $E_{n_1n_2n_3n_4}(\vk_1,\vk_2,\vq)$ is periodic with respect to $\vk_1$, $\vk_2$ and $\vq$ over $\Omega^*$ due to the fact that orbital energy $\varepsilon_{n\vk}$ with a fixed band $n$ is periodic with $\vk$. 
The MP2 energy calculation only involves this function with $n_1, n_2$ being occupied orbitals and $n_3, n_4$ being virtual orbitals. 
In this case, $E_{n_1n_2n_3n_4}(\vk_1,\vk_2,\vq)$ is negative and smooth with respect to $\vk_1$, $\vk_2$, and $\vq$ based on our assumption that  $\varepsilon_{n\vk}$ is smooth with respect to $\vk$, and the system has a positive gap.

Since $\psi_{n\vk}(\vr) = \psi_{n(\vk+ \vG')}(\vr)$ with any $\vG' \in \mathbb{L}^*$, we have $u_{n(\vk+\vG')}(\vr) = e^{-\I\vG'\cdot\vr}u_{n\vk}(\vr)$. 
Then the multiplication of the two pair products in \cref{eqn:eri_scaled} of $R_{n_1n_2n_3n_4}(\vk_1,\vk_2,\vq)$ can be written as a function of $\vk_1,\vk_2, \vq+\vG$ as
\begin{align*}
        \frac{4\pi}{|\Omega|}\hat{\varrho}_{n_1\vk_1,n_3(\vk_1 + \vq)}(\mathbf{G}) \hat{\varrho}_{n_2\vk_2,n_4(\vk_2 - \vq)}(-\mathbf{G}) 
        &  = 
        \frac{4\pi}{|\Omega|}\hat{\varrho}_{n_1\vk_1,n_3(\vk_1 + \vq + \vG)}(\bm{0}) \hat{\varrho}_{n_2\vk_2,n_4(\vk_2 - \vq - \vG)}(\bm{0}) 
        \nonumber 
        \\
        &= r_{n_1n_2n_3n_4}(\vk_1, \vk_2, \vq + \vG),
        \label{eqn:r_function}
\end{align*}
where, by its definition, $r_{n_1n_2n_3n_4}(\vk_1, \vk_2, \vq)$ is smooth with respect to $\vk_1, \vk_2, \vq$ and periodic with respect to $\vk_1, \vk_2$ over $\Omega^*$.
Using this new notation, $R_{n_1n_2n_3n_4}(\vk_1, \vk_2, \vq)$ can be written in a more concise form as
\begin{equation}\label{eqn:R_function}
        R_{n_1n_2n_3n_4}(\vk_1, \vk_2, \vq)
        = \sum_{\vG\in\mathbb{L}^*}\dfrac{r_{n_1n_2n_3n_4}(\vk_1, \vk_2, \vq+\vG)}{|\vq + \vG|^2}.
\end{equation}
In this continuous formulation, $\xsum_\vG$ is replaced by regular summation $\sum_\vG$. When $\vq = -\vG$, the summation term associated with $\vq+\vG$ is indeterminate and is set to $0$ in the numerical evaluation of the function. 

By the orthonormality of the orbitals, i.e., $\hat{\varrho}_{n\vk,n'\vk}(\bm{0}) = \delta_{nn'}$, we can expand 
$r_{n_1n_2n_3n_4}$ near $\vq = \bm{0}$ as
\begin{equation}\label{eqn:expanion_rijab}
        r_{n_1n_2n_3n_4}(\vk_1, \vk_2, \vq) =
        \dfrac{4\pi}{|\Omega|}\delta_{n_1n_3}\delta_{n_2n_4} + (\delta_{n_1n_3} + \delta_{n_2n_4})\vv^T \vq + \Or(|\vq|^2).
\end{equation}
Therefore $R_{n_1n_2n_3n_4}$ is periodic with respect to $\vk_1, \vk_2,$ and $\vq$ over $\Omega^*$ and is smooth everywhere except at $\vq \in \mathbb{L}^*$. 
This non-smoothness comes from the summation term with $\vq + \vG = \bm{0}$, i.e., the singularity of the Coulomb kernel in the reciprocal space.

Combining the above discussions over $R_{n_1n_2n_3n_4}$ and $E_{n_1n_2n_3n_4}$ with the definitions of integrands in \cref{eqn:exchange_N} and \cref{eqn:mp2_N}, we obtain some basic properties of the three integrands as
\begin{itemize}
        \item $F_\text{X}^{ij}(\vk_i, \vk_j)$ is periodic with respect to $\vk_i,\vk_j$ in $\Omega^*$ and smooth everywhere except at $\vk_j - \vk_i \in \mathbb{L}^*$. 
        \item $F_\text{MP2,d}^{ijab}(\vk_i, \vk_j,\vk_a)$ is periodic  with respect to  $\vk_i,\vk_j, \vk_a$ in $\Omega^*$ and smooth everywhere except at $\vk_a - \vk_i \in \mathbb{L}^*$. 
        \item $F_\text{MP2,x}^{ijab}(\vk_i, \vk_j,\vk_a)$ is periodic  with respect to  $\vk_i,\vk_j, \vk_a$ in $\Omega^*$ and smooth everywhere except at $\vk_j - \vk_a \in \mathbb{L}^*$ or $\vk_i - \vk_a \in \mathbb{L}^*$. 
\end{itemize}

\subsection{Standard Euler-Maclaurin formula}\label{sec:standard_em_quad}
Consider a hypercube $V\subset \mathbb{R}^d$ of edge length $L$ and an $m^d$-sized uniform mesh $\mathcal{X}$ in $V$. 
Here, $m$ denotes the number of \REV{subintervals} along each dimension in $\mathcal{X}$. 
For a generic function $g$ that has continuous derivatives up to $l$-th order, its quadrature error can be explicitly described by the standard Euler-Maclaurin formula (see \Cref{thm:em_general} for the full description) as 
\begin{equation}\label{eqn:EM_general}
\mathcal{E}_V(g, \mathcal{X}) = \sum_{s=1}^{l-1} \dfrac{L^s}{m^s}\sum_{|\vbeta|=s} c_\vbeta \int_V g^{(\vbeta)}(\vx)\ud\vx + \Or(m^{-l}),
\end{equation}
where $\vbeta$ is a $d$-dimensional multi-index with $|\vbeta| =  \sum_i \beta_i$, $g^{(\vbeta)}(\vx)$ is the derivative of $g$ of order $\vbeta$, and $c_\vbeta$ is some constant.
When $g(\vx)$ and its derivatives up to $(l-2)$th order satisfy the periodic boundary condition on $\partial V$, all the integrals of $g^{(\vbeta)}(\vx)$ above vanish and the quadrature error scales as $\Or(m^{-l})$. 
Further, if $g(\vx)$ is also smooth (i.e., $l = \infty$) and all its derivatives satisfy the periodic boundary condition, the quadrature error decays super-algebraically, i.e., faster than $m^{-l}$ with any $l > 0$. 
These statements are summarized in \cref{cor:euler_maclaurin}. 
Note that smooth functions that are periodic with $V$ or compactly supported in $V$ (i.e., the function support is a subset of $V$ and separated from $\partial V$) satisfy the latter condition and have super-algebraically decaying 
quadrature error. 

As shown earlier, all integrands in the exchange and MP2 energy calculations are periodic but discontinuous at certain points.
The standard Euler-Maclaurin formula \cref{eqn:EM_general} cannot be directly applied, but is still a major tool used to analyze the quadrature errors of these non-smooth integrands in this paper. 
First, it turns out that these integrands can all be properly split into some smooth and non-smooth terms, where the smooth terms have super-algebraically decaying quadrature error by \cref{cor:euler_maclaurin} and the non-smooth terms all belong to a special class of functions in fractional form \cref{eqn:fractional_form}. 
The problem is thus simplified to analyzing the dominant quadrature error caused by such non-smooth fractional-form terms.
This idea of non-smoothness extraction is detailed in \Cref{subsec:extraction}. 
Second, a trapezoidal rule with $m^d$-sized uniform mesh is equivalent to uniformly partitioning the integration domain into $m^d$ subdomains and then applying a single-point quadrature rule to each subdomain. 
The standard Euler-Maclaurin formula can \REV{only} be applied separately to the quadrature in each subdomain where the integrand is smooth.
\REV{
Non-smooth terms with discontinuous points in certain subdomains need to be treated separately, and this gives a \textit{partial Euler-Maclaurin formula}.}
This \REV{is the key idea} for analyzing the quadrature error of these special fractional-form terms above and is detailed in \Cref{sec:em}.

\section{Quadrature error of Fock exchange energy}\label{sec:exchange}
Analyzing  the quadrature error
\[
\mathcal{E}_{\Omega^*\times \Omega^*}\left(\sum_{ij} F_\text{X}^{ij}(\vk_i, \vk_j), \mathcal{K}\times\mathcal{K}\right) = \sum_{ij}\mathcal{E}_{\Omega^*\times \Omega^*}( F_\text{X}^{ij}(\vk_i, \vk_j), \mathcal{K}\times\mathcal{K})
\]
in the exchange energy calculation is a classical numerical analysis problem. 
In the following analysis, we use the notation ``$f \lesssim g$'' between two $N_\vk$-dependent quantities $f$ and $g$ to mean that there exists a constant $C$ independent of $N_\vk$ such that $f \leqslant Cg$ for sufficiently large $N_\vk$. 
For each pair of band indices $i,j$, the quadrature error of $F_\text{X}^{ij}(\vk_i, \vk_j)$ can be estimated by the following three steps.

\subsection{Change of variables}
Note that $F_\text{X}^{ij}(\vk_i, \vk_j) = R_{ijji}(\vk_i, \vk_j, \vk_j - \vk_i)$ is periodic over $\Omega^*$ and is discontinuous at $\vk_j - \vk_i \in \mathbb{L}^*$. 
To isolate the discontinuity to one variable for later analysis, we define $\vq = \vk_j - \vk_i$, corresponding to the change of variable $\vk_j \rightarrow \vk_i + \vq$, and define 
\[
\wt{F}_\text{X}^{ij}(\vk_i, \vq) = F_\text{X}^{ij}(\vk_i, \vk_i + \vq) = R_{ijji}(\vk_i, \vk_i + \vq, \vq). 
\]
which is periodic with respect to $\vq$ over $\Omega^*$ as well.
Based on the periodicity of $F_\text{X}^{ij}(\vk_i, \vk_j) $, we have 
\[
\int_{\Omega^*}\ud\vk_i\int_{\Omega^*}\ud\vk_j F_\text{X}^{ij}(\vk_i, \vk_j) = 
\int_{\Omega^*}\ud\vk_i\int_{\Omega^* + \vk_i}\ud\vk_j F_\text{X}^{ij}(\vk_i, \vk_j) = 
\int_{\Omega^*}\ud\vk_i\int_{\Omega^*}\ud\vq F_\text{X}^{ij}(\vk_i, \vk_i + \vq),
\]
where the second equality applies $\vk_j \rightarrow\vk_i + \vq$. 
The same change of variable converts the trapezoidal rule for $F_\text{X}^{ij}(\vk_i, \vk_j)$ to
\[
\dfrac{|\Omega^*|^2}{N_\vk^2}\sum_{\vk_i\vk_j \in \mathcal{K}}F_\text{X}^{ij}(\vk_i, \vk_j) = 
\dfrac{|\Omega^*|^2}{N_\vk^2}\sum_{\vk_i \in \mathcal{K}}\sum_{\vq \in \mathcal{K} - \vk_i}F_\text{X}^{ij}(\vk_i, \vk_i + \vq) = 
\dfrac{|\Omega^*|^2}{N_\vk^2}\sum_{\vk_i \in \mathcal{K}}\sum_{\vq \in \mathcal{K}_\vq}F_\text{X}^{ij}(\vk_i,\vk_i + \vq) ,
\]
where the second equality changes $\vq$ to its minimum image in $\Omega^*$ by periodicity of the integrand, and $\mathcal{K}_\vq$ is an MP mesh that contains all the minimum images of $\vk_j - \vk_i$ in $\Omega^*$ with $\vk_i,\vk_j \in \mathcal{K}$. 
The new mesh $\mathcal{K}_\vq$ is of the same size as $\mathcal{K}$ and contains the $\Gamma$ point, i.e., $\vq = \bm{0}$. 
As illustrated in \cref{fig:Kq}, we note that $\mathcal{K}$ could be an arbitrary MP mesh in $\Omega^*$ in practical calculations while the induced $\mathcal{K}_\vq$ from the above change of variable is always $\Gamma$-centered. 
\begin{figure}[htbp]
        \centering
        \includegraphics[width=0.8\textwidth]{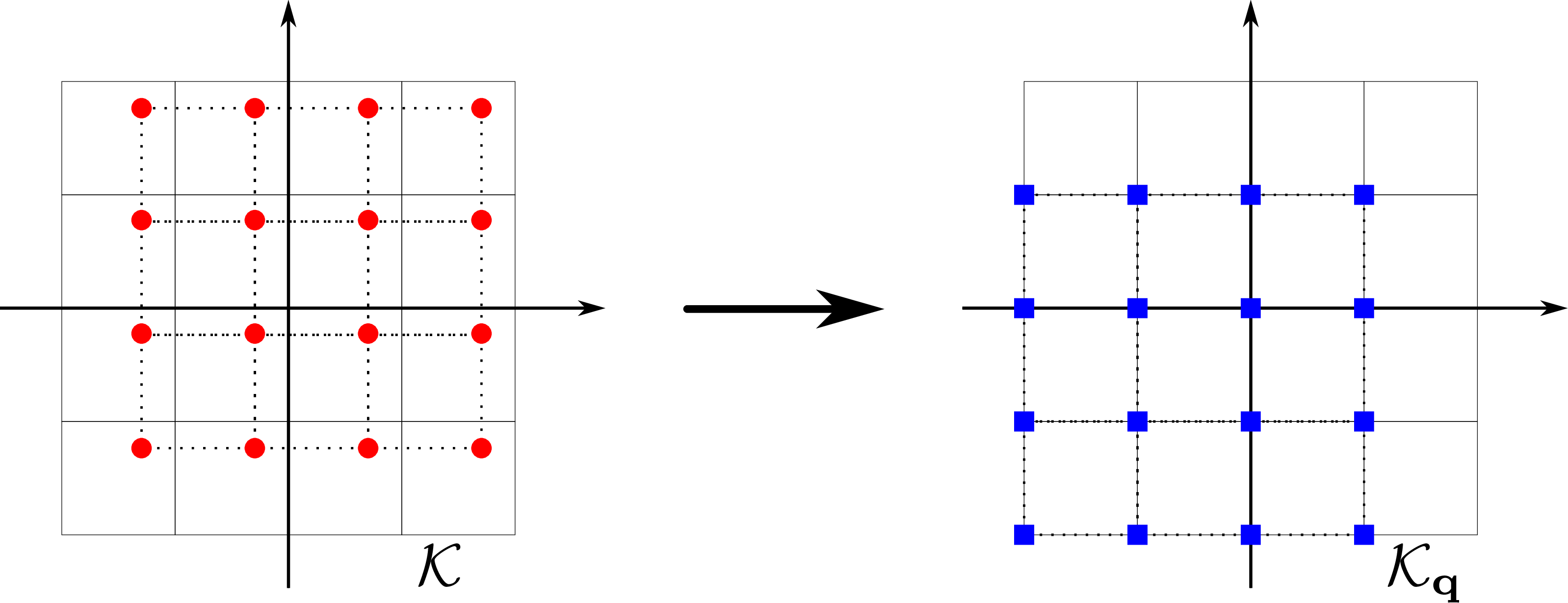}
        \caption{2D illustration 
                of MP mesh $\mathcal{K}$ and $\mathcal{K}_\vq$. The MP mesh $\mathcal{K}_\vq$ induced by any $\mathcal{K}$ always contains $\vq = \bm{0}$. 
        }
        \label{fig:Kq}
\end{figure}

From these two equations above, $\wt{F}_\text{X}^{ij}(\vk_i, \vq)$ satisfies 
\begin{equation}\label{eqn:exchange_changeVar}
\mathcal{E}_{\Omega^*\times \Omega^*}(F_\text{X}^{ij}(\vk_i, \vk_j), \mathcal{K}\times \mathcal{K})  
= \mathcal{E}_{\Omega^*\times \Omega^*}(\wt{F}_\text{X}^{ij}(\vk_i, \vq), \mathcal{K}\times \mathcal{K}_\vq). 
\end{equation}
It is thus equivalent to study the quadrature error for $\wt{F}_\text{X}^{ij}$ with uniform mesh $\mathcal{K}\times \mathcal{K}_\vq$ in $\Omega^*\times \Omega^*$. 
As can be noted, $\wt{F}_\text{X}^{ij}(\vk_i, \vq)$ is periodic with respect to $\vk_i, \vq$ over $\Omega^*$ and is smooth everywhere except at $\vq \in \mathbb{L}^*$.

\subsection{Extraction of non-smoothness}\label{subsec:extraction}
In the integration domain $\Omega^*\times\Omega^*$,  $\wt{F}_\text{X}^{ij}(\vk_i, \vq)$ is only non-smooth at $\vq = \bm{0}$, due to the second term in the following splitting, 
\[
\wt{F}_\text{X}^{ij}(\vk_i, \vq)
  = 
 \left(
\sum_{\bm{0}\neq \vG\in\mathbb{L}^*}\dfrac{r_{ijji}(\vk_i, \vk_i+\vq, \vq+\vG)}{|\vq + \vG|^2}
\right)
+ 
\dfrac{r_{ijji}(\vk_i, \vk_i+\vq, \vq)}{|\vq|^2}. 
\]
To extract this non-smooth term, consider a \textit{localizer} $H(\vq)$ that is smooth, radial, and compactly supported in $\Omega^*$ (more precisely, the support of $H(\vq)$ is in $\Omega^*$ and separated from $\partial\Omega^*$) and equals identity in an open domain containing $\vq = \bm{0}$. 
A simple example for $H(\vq)$ with $\Omega^* = [-\frac12, \frac12]^3$ is 
\begin{equation}\label{eqn:localizer}
        H(\vq) = 
        \left\{
        \begin{array}{ll}
                1 & |\vq| \leqslant  0.1 
                \\
                \dfrac{e^{-\frac{1}{0.4-x}}}{e^{-\frac{1}{x-0.1}} + e^{-\frac{1}{0.4-x}}}       & 0.1 < |\vq| < 0.4
                \\
                0 & |\vq| \geqslant 0.4
        \end{array}
        \right.
        ,
\end{equation}
where $0.1$ and $0.4$ are arbitrarily chosen, and $H(\vq)$ against $|\vq|$ is plotted in \cref{fig:Hq}.
\begin{figure}[htbp]
        \centering
        \includegraphics[width=0.4\textwidth]{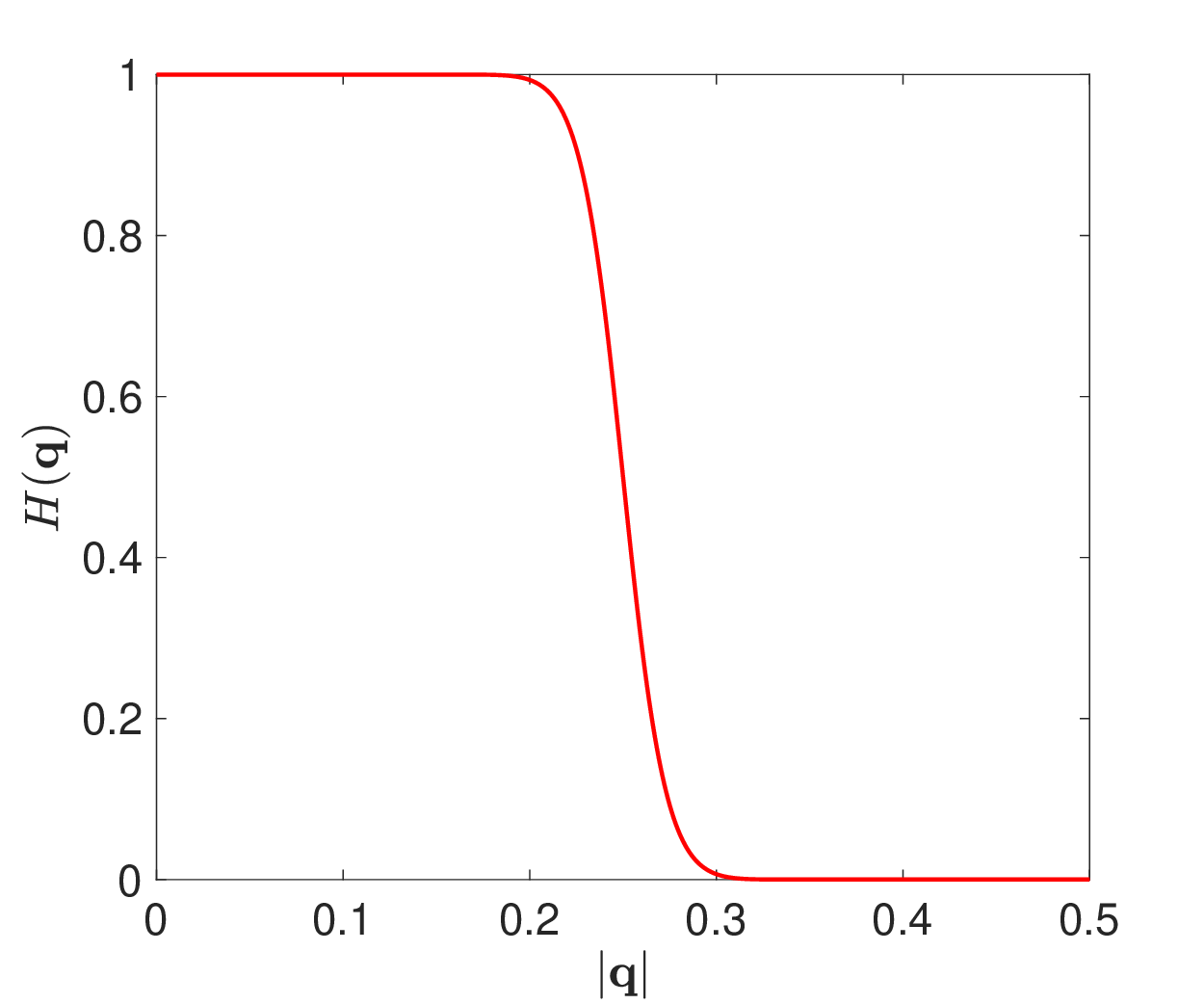}
        \caption{A radial function example of localizer $H(\vq)$ v.s.\ $|\vq|$.}
        \label{fig:Hq}
\end{figure}

Then define
\begin{equation}\label{eqn:local_F}
G^{ij}(\vk_i, \vq) = \dfrac{r_{ijji}(\vk_i, \vk_i+\vq, \vq)}{|\vq|^2} H(\vq),
\end{equation}
which is compactly supported with respect to $\vq$ in $\Omega^*$, periodic with respect to $\vk_i$, and smooth everywhere except at $\vq = \bm{0}$. 
When restricting $\vq$ to $\Omega^*$, $G^{ij}(\vk_i, \vq)$ equals to the non-smooth part of $\wt{F}_\text{X}^{ij}(\vk_i, \vq)$ in the neighborhood of $\vq = \bm{0}$, 
and thus $\wt{F}_\text{X}^{ij} - G^{ij}$ is smooth with respect to $\vq \in \Omega^*$. 
Meanwhile, due to the compactness of $G^{ij}$, $\wt{F}_\text{X}^{ij} - G^{ij}$ and all of its derivatives also satisfy the periodic boundary condition with respect to $\vq$ on $\partial \Omega^*$ (for brevity, we also call it  a periodic function over $\Omega^*$).
Since $\wt{F}_\text{X}^{ij} - G^{ij}$ is smooth and periodic with respect to both $\vk_i, \vq$ in $\Omega^*$, the quadrature error for $\wt{F}_\text{X}^{ij}$ can be split and estimated as 
\begin{align}
\mathcal{E}_{\Omega^*\times \Omega^*}\left(\wt{F}^{ij}_\text{X}, \mathcal{K}\times \mathcal{K}_\vq\right)
& =
\mathcal{E}_{\Omega^*\times \Omega^*}\left(\wt{F}^{ij}_\text{X} - G^{ij}, \mathcal{K}\times \mathcal{K}_\vq\right)
+
\mathcal{E}_{\Omega^*\times \Omega^*}\left(G^{ij}, \mathcal{K}\times \mathcal{K}_\vq\right)
\label{eqn:Fij_splitting}\\
& \lesssim 
\mathcal{E}_{\Omega^*\times \Omega^*}\left(G^{ij}, \mathcal{K}\times \mathcal{K}_\vq\right), 
\nonumber
\end{align}
where the second estimate uses the fact that the quadrature error for $\wt{F}^{ij}_\text{X} - G^{ij}$ decays super-algebraically according to the standard Euler-Maclaurin formula (see \cref{cor:euler_maclaurin}). 
The overall  error is thus dominated by the quadrature error for the extracted non-smooth term $G^{ij}(\vk_i, \vq)$. 

\subsection{Quadrature error for $G^{ij}$}
The trapezoidal quadrature rule and its error  for $G^{ij}(\vk_i, \vq)$ over the two variables $\vk_i, \vq\in\Omega^*$ can be further split into two parts as, 
\begin{align*}
\mathcal{E}_{\Omega^*\times \Omega^*}\left(G^{ij}, \mathcal{K}\times \mathcal{K}_\vq\right)
& = 
\left(
\mathcal{I}_{\Omega^*\times \Omega^*}(G^{ij})
-
\dfrac{|\Omega^*|}{N_\vk}\sum_{\vk_i \in \mathcal{K}}\int_{\Omega^*}\ud \vq G^{ij}(\vk_i, \vq) 
\right)
+ 
\\
& \hspace{8em}
\dfrac{|\Omega^*|}{N_\vk}\sum_{\vk_i \in \mathcal{K}}
\left(
\int_{\Omega^*}\ud \vq G^{ij}(\vk_i, \vq)
-
\dfrac{|\Omega^*|}{N_\vk}\sum_{\vq \in \mathcal{K}_\vq} G^{ij}(\vk_i, \vq) 
\right)
\\
& = 
\mathcal{E}_{\Omega^*}\left( \int_{\Omega^*}\ud \vq G^{ij}(\cdot , \vq), \mathcal{K}\right)
+ 
\dfrac{|\Omega^*|}{N_\vk}\sum_{\vk_i \in \mathcal{K}}
\mathcal{E}_{\Omega^*}\left( G^{ij}(\vk_i, \cdot), \mathcal{K}_\vq\right). 
\end{align*}
Since $r_{ijji}(\vk_i, \vk_i+\vq, \vq)$ in $G^{ij}(\vk_i, \vq)$ in \cref{eqn:local_F}  is smooth and periodic with respect to $\vk_i$, it can be proved that the partial integral $\int_{\Omega^*}\ud \vq G^{ij}(\vk_i , \vq)$ is a smooth, periodic function of $\vk_i$ using the dominated convergence theorem. 
The first part of the quadrature error thus also decays super-algebraically, and we have 
\begin{equation}\label{eqn:exchange_partial_q}
\mathcal{E}_{\Omega^*\times \Omega^*}\left(
G^{ij}, \mathcal{K}\times \mathcal{K}_\vq\right) 
\lesssim 
\dfrac{|\Omega^*|}{N_\vk}\sum_{\vk_i \in \mathcal{K}}
\mathcal{E}_{\Omega^*}\left( G^{ij}(\vk_i, \cdot), \mathcal{K}_\vq\right)
\lesssim
\max_{\vk_i\in \Omega^*}\mathcal{E}_{\Omega^*}\left( G^{ij}(\vk_i, \cdot), \mathcal{K}_\vq\right). 
\end{equation}

We firsts check the quadrature error $\mathcal{E}_{\Omega^*}\left( G^{ij}(\vk_i, \cdot), \mathcal{K}_\vq\right)$ with any given $\vk_i$. 
Fixing $\vk_i$ and restricting $\vq$ in $\Omega^*$, $G^{ij}(\vk_i, \vq)$ is compactly supported in $\Omega^*$ and in the fractional form \cref{eqn:fractional_form}.
Due to the denominator $|\vq|^2$, $G^{ij}(\vk_i, \vq)$ has an isolated point of discontinuity at $\vq = \bm{0}$, and the standard Euler-Maclaurin formula cannot be applied. 
Instead, \cref{thm:em_n_fraction} provides a special Euler-Maclaurin formula for functions in such a fractional form. 
\cref{cor:em_n_fraction} further describes the quadrature error when the integrand and its derivatives also satisfy the periodic boundary condition.

For brevity, we always assume $\Omega^* = [-\frac12, \frac12]^3$ in the following discussion. 
This assumption can be lifted in the general case by mapping $\Omega^*$ and all related variables to $[-\frac12, \frac12]^3$ using an affine transformation,  changing the denominator $|\vq|^2$ in $\wt{F}^{ij}_\text{X}$ and $G^{ij}$ to $\vq^T M \vq$ with a symmetric positive definite matrix $M$. 
Note that the cubic symmetry of $[-\frac12, \frac12]^3$ plays an additional role in removable discontinuity of $F_\text{X}$ in \Cref{sec:removable} but is not exploited in the following general error analysis of the exchange and MP2 energy calculations. 

By the expansion of $r_{ijji}(\vk_i, \vk_i+\vq,\vq) = \frac{4\pi}{|\Omega|}\hat{\varrho}_{i\vk_i,j(\vk_i+\vq)}(\bm{0})\hat{\varrho}_{j(\vk_i+\vq),i\vk_i}(\bm{0})$  in \cref{eqn:expanion_rijab}, we have 
\[
H(\vq) r_{ijji}(\vk_i, \vk_i+\vq, \vq) = \delta_{ij} + \delta_{ij}\vv^T_{i\vk_i}\vq + \Or(|\vq|^2).
\]
The integrand $G^{ij}(\vk_i, \vq)$ thus fits \cref{thm:em_n_fraction} with $n=1$ and $\gamma_\text{min} =  -2$ when $i = j$ and with $n=1$ and $\gamma_\text{min} = 0$ when $i\neq j$, and is also compactly supported in $\Omega^*$.
Thus, \cref{cor:em_n_fraction} shows that the quadrature error of $G^{ij}(\vk_i, \vq)$ over $\vq \in \Omega^*$ for any fixed $\vk_i$ scales as 
\begin{equation}\label{eqn:error_single_k}
\mathcal{E}_{\Omega^*}\left(G^{ij}(\vk_i, \cdot), \mathcal{K}_\vq\right)
= 
\left\{
\begin{array}{ll}
\Or(N_\vk^{-\frac13}) &  i=j\\
\Or(N_\vk^{-1}) & i\neq j
\end{array}
\right. ,
\end{equation}
where $N_\vk = m^3$ for 3D periodic systems. 
According to \cref{thm:em_n_fraction}, the prefactor of $\Or(N_\vk^{-\frac13})$ above can be controlled by  the upper bounds of  $|H(\vq)r_{ijji}(\vk_i,\vk_i+\vq,\vq)|$ and $|\frac{\partial^{|\valpha|}}{\partial\vq^{\valpha}} H(\vq)r_{ijji}(\vk_i,\vk_i+\vq,\vq)|$ with $|\valpha| = 1 $ for $\vq \in \Omega^*$, and similarly for the prefactor of $\Or(N_\vk^{-1})$. 
Since the numerator $H(\vq)r_{ijji}(\vk_i,\vk_i+\vq,\vq)$ is smooth with $\vk_i$ and $\vq$, its function values and derivatives with respect to $\vq$ have a uniform $\Or(1)$ upperbound that is independent of $\vk_i$. 
Thus, the prefactors of the asymptotic scalings in \cref{eqn:error_single_k} for any fixed $\vk_i$ can be independent of $\vk_i$, and we obtain
\begin{equation}\label{eqn:error_max_k}
\max_{\vk_i\in\Omega^*} \mathcal{E}_{\Omega^*}\left( G^{ij}(\vk_i, \cdot), \mathcal{K}_\vq\right) 
= 
\left\{
\begin{array}{ll}
\Or(N_\vk^{-\frac13}) &  i=j\\
\Or(N_\vk^{-1}) & i\neq j
\end{array}
\right. .
\end{equation}

Combining all the analysis above, \cref{thm:exchange_error} concludes that the quadrature error in the exchange energy calculation scales as $\Or(N_\vk^{-\frac13})$. 
This $\Or(N_\vk^{-\frac13})$ finite-size error  is well known in quantum chemistry but is mostly explained by physical intuitions.
To our best knowledge, \cref{thm:exchange_error} gives the first rigorous proof of this error scaling.
  
\begin{thm}[Fock exchange energy for 3D periodic systems]
\label{thm:exchange_error}
The finite-size error in the exchange energy calculation satisfies 
\[
E_\textup{x}^\textup{TDL} - E_\textup{x}(N_\vk)
= -\dfrac{1}{|\Omega^*|^2}\sum_{ij}\mathcal{E}_{\Omega^*\times \Omega^*}\left( F_\textup{x}^{ij}(\vk_i, \vk_j), \mathcal{K}\times\mathcal{K}\right)
= \Or(N_\vk^{-\frac13}).
\]
\end{thm}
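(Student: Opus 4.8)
The plan is to follow the three-step reduction already set up in \Cref{sec:exchange}: reduce to a single scalar integrand of fractional form and then invoke the generalized Euler--Maclaurin formula of \cref{thm:em_n_fraction}. Since only finitely many occupied bands enter, it suffices to bound $\mathcal{E}_{\Omega^*\times\Omega^*}(F_\text{x}^{ij}(\vk_i,\vk_j),\mathcal{K}\times\mathcal{K})$ for each fixed pair $(i,j)$ and sum. First I would perform the change of variables $\vk_j\mapsto\vk_i+\vq$; by periodicity of $F_\text{x}^{ij}$ over $\Omega^*$ this converts both the integral and the trapezoidal sum into the corresponding quantities for $\wt{F}_\text{x}^{ij}(\vk_i,\vq)=R_{ijji}(\vk_i,\vk_i+\vq,\vq)$ with the mesh $\mathcal{K}\times\mathcal{K}_\vq$, giving \cref{eqn:exchange_changeVar}, where $\mathcal{K}_\vq$ is $\Gamma$-centered and of size $N_\vk$. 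The payoff is that $\wt{F}_\text{x}^{ij}$ is now smooth in both variables except on the single slice $\vq\in\mathbb{L}^*$.

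Next I would localize the singularity: with the radial cutoff $H$ of \cref{eqn:localizer}, set $G^{ij}(\vk_i,\vq)=H(\vq)\,r_{ijji}(\vk_i,\vk_i+\vq,\vq)/|\vq|^2$ as in \cref{eqn:local_F}. Because $H\equiv 1$ near $\vq=\bm{0}$ and the lattice-shifted remainder $\sum_{\bm{0}\neq\vG\in\mathbb{L}^*} r_{ijji}(\vk_i,\vk_i+\vq,\vq+\vG)/|\vq+\vG|^2$ is smooth there, the difference $\wt{F}_\text{x}^{ij}-G^{ij}$ is smooth on $\Omega^*\times\Omega^*$; since $G^{ij}$ is compactly supported in $\vq$, this difference also satisfies the periodic boundary condition in both variables. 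Hence by \cref{cor:euler_maclaurin} its quadrature error is super-algebraically small, and the whole error is controlled by $\mathcal{E}_{\Omega^*\times\Omega^*}(G^{ij},\mathcal{K}\times\mathcal{K}_\vq)$, which is exactly \cref{eqn:Fij_splitting}.

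Then I would split this remaining error into an outer quadrature over $\vk_i$ of the partial integral $\int_{\Omega^*}G^{ij}(\cdot,\vq)\,\ud\vq$ and an average over $\vk_i\in\mathcal{K}$ of the inner errors $\mathcal{E}_{\Omega^*}(G^{ij}(\vk_i,\cdot),\mathcal{K}_\vq)$. By dominated convergence the partial integral is a smooth periodic function of $\vk_i$, so the outer piece is again super-algebraically small, leaving $\mathcal{E}_{\Omega^*\times\Omega^*}(G^{ij},\mathcal{K}\times\mathcal{K}_\vq)\lesssim\max_{\vk_i}\mathcal{E}_{\Omega^*}(G^{ij}(\vk_i,\cdot),\mathcal{K}_\vq)$. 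For fixed $\vk_i$, $G^{ij}(\vk_i,\cdot)$ is of the fractional form \cref{eqn:fractional_form} with $n=1$ and denominator $|\vq|^2$ (or $\vq^T M\vq$ after an affine map of $\Omega^*$ to $[-\tfrac12,\tfrac12]^3$), compactly supported; the orthonormality expansion \cref{eqn:expanion_rijab} gives $H(\vq)\,r_{ijji}(\vk_i,\vk_i+\vq,\vq)=\delta_{ij}+\delta_{ij}\vv^T\vq+\Or(|\vq|^2)$, so $\gamma_{\min}=-2$ when $i=j$ and $\gamma_{\min}=0$ when $i\neq j$. Applying \cref{cor:em_n_fraction} with $d=3$, $N_\vk=m^3$, yields $\mathcal{E}_{\Omega^*}(G^{ij}(\vk_i,\cdot),\mathcal{K}_\vq)=\Or(N_\vk^{-1/3})$ for $i=j$ and $\Or(N_\vk^{-1})$ otherwise; and because the numerator $H(\vq)\,r_{ijji}(\vk_i,\vk_i+\vq,\vq)$ and its $\vq$-derivatives are bounded uniformly over the compact set $\vk_i\in\Omega^*$, the prefactors are $\vk_i$-independent. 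Summing over the finitely many band pairs gives $E_\text{x}^\text{TDL}-E_\text{x}(N_\vk)=\Or(N_\vk^{-1/3})$, the $i=j$ terms being dominant.

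The main obstacle is entirely packaged in \cref{thm:em_n_fraction} and \cref{cor:em_n_fraction}: proving the sharp $\Or(m^{-(d+\gamma_{\min})}\ln m)$ rate for a function that is genuinely discontinuous (and has unbounded first derivatives) at the origin, where the standard Euler--Maclaurin expansion is useless. Granting that result, the rest is bookkeeping — the changes of variables, the compact-support localization, and the Fubini-type splitting are routine once one verifies the periodicity and smoothness conditions needed to discard the regular pieces. A secondary point to handle with care is the uniformity of constants in $\vk_i$ when passing from \cref{eqn:error_single_k} to \cref{eqn:error_max_k}, which rests on the smooth dependence of $r_{ijji}$ on $\vk_i$ yielding $\vk_i$-uniform bounds on the numerator and its $\vq$-derivatives over the compact domain $\Omega^*$.
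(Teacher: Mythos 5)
Your proposal is correct and follows essentially the same route as the paper: the change of variables to $\wt{F}_\text{x}^{ij}(\vk_i,\vq)$ with the induced $\Gamma$-centered mesh $\mathcal{K}_\vq$, the extraction of the non-smooth part $G^{ij}$ with the localizer $H$, the reduction to the inner quadrature over $\vq$ with $\vk_i$-uniform prefactors, and the application of \cref{cor:em_n_fraction} with $\gamma_{\min}=-2$ for $i=j$ and $\gamma_{\min}=0$ for $i\neq j$ are exactly the steps in \Cref{sec:exchange} that the paper chains together to prove \cref{thm:exchange_error}.
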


\begin{proof}
        According to the change of variable in \cref{eqn:exchange_changeVar}, the extraction of non-smoothness in \cref{eqn:Fij_splitting}, and the quadrature error estimate for the extracted non-smooth term in \cref{eqn:exchange_partial_q} and \cref{eqn:error_max_k}, we can estimate the overall quadrature error in the exchange energy calculation as 
        \begin{align*}
        \sum_{ij}\mathcal{E}_{\Omega^*\times \Omega^*}\left(F^{ij}_\text{X}(\vk_i, \vk_j), \mathcal{K}\times \mathcal{K}\right)
        & = 
        \sum_{ij}\mathcal{E}_{\Omega^*\times \Omega^*}\left(\wt{F}^{ij}_\text{X}(\vk_i, \vq), \mathcal{K}\times \mathcal{K}_\vq\right)
        \\
        & \lesssim 
        \sum_{ij}\mathcal{E}_{\Omega^*\times \Omega^*}\left(G^{ij}(\vk_i, \vq), \mathcal{K}\times \mathcal{K}_\vq\right)
        \\
        & \lesssim 
        \sum_{ij}
        \max_{\vk_i \in \Omega^*}
        \mathcal{E}_{\Omega^*}\left( G^{ij}(\vk_i, \cdot), \mathcal{K}_\vq\right)
        \\
        & = \Or(N_\vk^{-\frac13}).
        \end{align*}

\end{proof}

\section{Quadrature error of MP2 energy}\label{sec:mp2}
The  quadrature error in MP2 energy calculation is split into the direct and the exchange terms, associated with integrands 
$F_\text{MP2,d}^{ijab}(\vk_i, \vk_j, \vk_a)$ and $F_\text{MP2,x}^{ijab}(\vk_i, \vk_j, \vk_a)$, as
\[
E_\text{MP2}^\text{TDL} - E_\text{MP2}(N_\vk)
=
\dfrac{1}{|\Omega^*|^3}
\mathcal{E}_{(\Omega^*)^{\times 3}}\left(\sum_{ijab} F_\text{MP2,d}^{ijab}(\vk_i, \vk_j, \vk_a) + F_\text{MP2,x}^{ijab}(\vk_i, \vk_j, \vk_a), (\mathcal{K})^{\times 3}\right). 
\]
Using the same methods for exchange energy calculation above, we analyze the quadrature errors of the two integrands, separately, for each set of band indices $i,j,a,b$.
Recall that we assume using a fixed number of virtual orbitals for MP2 energy calculations in both the finite and the TDL cases. 

\subsection{Quadrature error of the MP2 direct term}
Consider the change of variable $\vk_a \rightarrow \vk_i + \vq$ and define 
\begin{align*}
\wt{F}_\text{MP2,d}^{ijab}(\vk_i, \vk_j, \vq) 
& = F_\text{MP2,d}^{ijab}(\vk_i, \vk_j, \vk_i + \vq)
\\
& = 
2R_{ijab}(\vk_i, \vk_j, \vq)R_{abij}(\vk_i+\vq, \vk_j-\vq, -\vq)E_{ijab}(\vk_i,\vk_j,\vq). 
\end{align*}
This function is periodic with respect to $\vk_i,\vk_j,\vq$ and is smooth everywhere except at $\vq \in \mathbb{L}^*$. 
Then similar to \cref{eqn:exchange_partial_q}, it can shown that 
\begin{align}
\mathcal{E}_{(\Omega^*)^{\times 3}}\left(F_\text{MP2,d}^{ijab}(\vk_i,\vk_j,\vk_a), \mathcal{K}^{\times 3}\right) 
&  = \mathcal{E}_{(\Omega^*)^{\times 3}}\left(\wt{F}_\text{MP2,d}^{ijab}(\vk_i,\vk_j,\vq), \mathcal{K}\times\mathcal{K}\times\mathcal{K}_\vq\right)
\nonumber \\
&\lesssim \max_{\vk_i,\vk_j \in \Omega^*} \mathcal{E}_{\Omega^*}\left(\wt{F}_\text{MP2,d}^{ijab}(\vk_i,\vk_j,\vq), \mathcal{K}_\vq\right), 
\label{eqn:mp2_direct_partial_q}
\end{align}
where the first equality uses the periodicity of $F_\text{MP2,d}^{ijab}$, $\mathcal{K}_\vq$ is an MP mesh containing all the minimum images of $\vk_a - \vk_i$ in $\Omega^*$ with $\vk_i, \vk_a \in \mathcal{K}$, and 
the second estimate uses the fact that $\wt{F}_\text{MP2,d}^{ijab}(\vk_i, \vk_j, \vq)$ is smooth and periodic with respect to $\vk_i, \vk_j$.

We first check the quadrature error $\mathcal{E}_{\Omega^*}\left(\wt{F}_\text{MP2,d}^{ijab}(\vk_i,\vk_j,\cdot), \mathcal{K}_\vq\right)$
with any fixed $\vk_i,\vk_j$. 
To simplify the notation, in this subsection, we omit the dependence on $\vk_i$, $\vk_j$, $i$, $j$, $a$, $b$, and rename the three components in  $\wt{F}_\text{MP2,d}^{ijab}$ as
\begin{align*}
2E_{ijab}(\vk_i,\vk_j,\vq) & =: E(\vq), 
\\
R_{ijab}(\vk_i, \vk_j, \vq) 
& =: R_1(\vq) =
\sum_{\vG\in \mathbb{L}^*}\dfrac{r_1(\vq + \vG)}{|\vq + \vG|^2},
\\
R_{abij}(\vk_i+\vq, \vk_j-\vq, -\vq) 
& =:R_2(\vq) =
\sum_{\vG\in \mathbb{L}^*}\dfrac{r_2(\vq + \vG)}{|\vq + \vG|^2}, 
\end{align*}
where $E(\vq)$ is smooth and periodic, and $r_1(\vq) = r_{ijab}(\vk_i, \vk_j, \vq)$ and $r_2(\vq) = r_{abij}(\vk_i+\vq, \vk_j-\vq, -\vq)$ are both smooth. 
Since $i\neq a$ and $j\neq b$, both $r_1(\vq)$ and $r_2(\vq)$ scale as  $\Or(|\vq|^2)$ near $\vq= \bm{0}$ according to \cref{eqn:expanion_rijab}.

To extract the non-smooth part of $\wt{F}_\text{MP2,d}^{ijab}$, we split the two ERIs above with $\vq$ restricted in $\Omega^*$ as 
\begin{align*}
R_1(\vq)
& 
= 
\left(
R_1(\vq) - \dfrac{r_1(\vq)}{|\vq|^2}H(\vq) 
\right)
+ 
\dfrac{r_1(\vq)}{|\vq|^2}H(\vq) 
= R_1^\text{smooth}(\vq) + R_1^\text{singular}(\vq), 
\\
R_2(\vq)
& = 
\left(
R_2(\vq) - \dfrac{r_2(\vq)}{|\vq|^2}H(\vq) 
\right)
+ 
\dfrac{r_2(\vq)}{|\vq|^2}H(\vq) 
= R_2^\text{smooth}(\vq) + R_2^\text{singular}(\vq), 
\end{align*}
where $R_*^\text{smooth}(\vq)$ is periodic (i.e., $R_*^\text{smooth}(\vq)$ and its derivatives satisfy the periodic boundary condition on $\partial\Omega^*$) and also  smooth in $\Omega^*$, and $R_*^\text{singular}(\vq)$ is in fractional form \cref{eqn:fractional_form}. 
Function $\wt{F}_\text{MP2,d}^{ijab}$ can then be decomposed as 
\begin{equation*}
\wt{F}_\text{MP2,d}^{ijab}
= 
R_1^\text{smooth}R_2^\text{smooth}E + R_1^\text{singular}R_2^\text{smooth}E + R_1^\text{smooth} R_2^\text{singular}E 
+ R_1^\text{singular}R_2^\text{singular}E. 
\end{equation*}
The first term above is periodic and smooth with respect to $\vq$ and thus has  super-algebraically decaying quadrature error.
The second term is in the fractional form as
\[
\dfrac{r_1(\vq)H(\vq)R_2^\text{smooth}(\vq)E(\vq)}{|\vq|^2},
\]
where the numerator is smooth,  compactly supported in $\Omega^*$ (due to the localizer $H(\vq)$), and scales as $\Or(|\vq|^2)$ near $\vq = \bm{0}$ (due to $r_1(\vq)$). 
This term fits \cref{cor:em_n_fraction} with $\gamma_\text{min} = 0$, and thus its quadrature error scales as $\Or(N_\vk^{-1})$. 
The third term is similar to the second one and also has $\Or(N_\vk^{-1})$ quadrature error. 

The last term has the form 
\[
\dfrac{r_1(\vq)r_2(\vq)E(\vq)H(\vq)^2}{|\vq|^4}, 
\]
where the numerator is smooth, compactly supported with respect to $\vq$ in $\Omega^*$ (due to $H(\vq)$), and scales as $\Or(|\vq|^4)$ near $\vq = \bm{0}$ (due to $r_1(\vq)$ and $r_2(\vq)$). 
Note that the exponent of the denominator is $4$, and this term also fits \cref{cor:em_n_fraction} with $\gamma_\text{min} = 0$ and has $\Or(N_\vk^{-1})$ quadrature error. 

Due to the smoothness and periodicity of $r_1(\vq), r_2(\vq), E(\vq)$ with respect to $\vk_i,\vk_j\in\Omega^*$, 
the overall asymptotic error scaling $\Or(N_\vk^{-1})$ above has its prefactor bounded by an $\Or(1)$ constant that is independent of $\vk_i$, $\vk_j$ by a similar discussion as for \cref{eqn:error_max_k}.
Thus, we have 
\[
\max_{\vk_i,\vk_j \in \Omega^*} \mathcal{E}_{\Omega^*}\left(\wt{F}_\text{MP2,d}^{ijab}(\vk_i,\vk_j,\vq), \mathcal{K}_\vq\right) 
=
\Or(N_\vk^{-1}). 
\]
Combining this estimation  with  \cref{eqn:mp2_direct_partial_q}, we have 
\[
\mathcal{E}_{\Omega^*\times \Omega^*\times \Omega^*}\left( \sum_{ijab} \wt{F}_\text{MP2,d}^{ijab}(\vk_i,\vk_j,\vq), \mathcal{K}\times\mathcal{K}\times\mathcal{K}_\vq\right)  = \Or(N_\vk^{-1}).
\]

\subsection{Quadrature error of the exchange term}
Recall that $F_\text{MP2,x}^{ijab}(\vk_i, \vk_j, \vk_a)$ is defined as 
\[
F_\text{MP2,x}^{ijab}(\vk_i, \vk_j, \vk_a)  
= 
- R_{ijba}(\vk_i, \vk_j, \vk_b - \vk_i)R_{abij}(\vk_a, \vk_b, \vk_i -\vk_a)E_{ijab}(\vk_i,\vk_j,\vk_a-\vk_i),
\]
with $\vk_b = \vk_i + \vk_j - \vk_a$. 
To isolate the integrand singularities to single variables, define $\vq_1 = \vk_b - \vk_i$ and $\vq_2 = \vk_i - \vk_a$ which lead to the  change of variables $\vk_a \rightarrow \vk_i - \vq_2$ and $\vk_j \rightarrow \vk_i + \vq_1 - \vq_2$.    
Define 
\begin{align*}
\wt{F}_\text{MP2,x}^{ijab}(\vk_i, \vq_1, \vq_2)  
& =
F_\text{MP2,x}^{ijab}(\vk_i, \vk_i+\vq_1-\vq_2, \vk_i-\vq_2)
\\
& =  
- R_{ijba}(\vk_i, \vk_i+\vq_1-\vq_2, \vq_1)R_{abij}(\vk_i-\vq_2, \vk_i + \vq_1, \vq_2)E_{ijab}(\vk_i,\vk_i+\vq_1-\vq_2,-\vq_2),
\end{align*}
which is periodic with respect to $\vk_i,\vq_1,\vq_2$ and smooth everywhere except at $\vq_1 \in \mathbb{L}^*$ or $\vq_2\in \mathbb{L}^*$. 

Similar to \cref{eqn:mp2_direct_partial_q}, we can show that 
\begin{align}
\mathcal{E}_{(\Omega^*)^{\times 3}}(F_\text{MP2,x}^{ijab}(\vk_i,\vk_j,\vk_a), \mathcal{K}^{\times 3}) 
& = \mathcal{E}_{(\Omega^*)^{\times 3}}(\wt{F}_\text{MP2,x}^{ijab}(\vk_i,\vq_1,\vq_2), \mathcal{K}\times\mathcal{K}_\vq\times\mathcal{K}_\vq) 
\nonumber\\
& \lesssim \max_{\vk_i \in \Omega^*} \mathcal{E}_{\Omega^*\times \Omega^*}(\wt{F}_\text{MP2,x}^{ijab}(\vk_i,\vq_1,\vq_2), \mathcal{K}_\vq\times\mathcal{K}_\vq), 
\label{eqn:mp2_exchange_partial_q}
\end{align}
where $\mathcal{K}_\vq$ is an MP mesh containing the minimum images of $\vk_b - \vk_i$ with $\vk_i, \vk_b \in \mathcal{K}$. 
Note that $\mathcal{K}_\vq$ is closed under inversion, i.e., $-\vq \in \mathcal{K}_\vq$ if $\vq \in \mathcal{K}_\vq$, and therefore also contains the minimum images of $\vk_i -\vk_a$ with $\vk_i, \vk_a \in\mathcal{K}$.

We next check  the quadrature error $\mathcal{E}_{\Omega^*\times \Omega^*}(\wt{F}_\text{MP2,x}^{ijab}(\vk_i,\vq_1,\vq_2), \mathcal{K}_\vq\times\mathcal{K}_\vq)$ with any fixed $\vk_i$. To simplify the notation, in this subsection, 
we omit the dependence on $\vk_i$, $i$, $j$, $a$, $b$, and rename the  three components in  $\wt{F}_\text{MP2,x}^{ijab}$ as
\begin{align*}
-E_{ijab}(\vk_i, \vk_i+\vq_1-\vq_2, -\vq_2) & =: E(\vq_1, \vq_2),
\\
R_{ijba}(\vk_i, \vk_i+\vq_1-\vq_2, \vq_1)
& =: R_1(\vq_1, \vq_2) =
\sum_{\vG\in \mathbb{L}^*}\dfrac{r_1(\vq_1 + \vG, \vq_2)}{|\vq_1 + \vG|^2},
\\
R_{abij}(\vk_i-\vq_2, \vk_i + \vq_1, \vq_2)
& =: R_2(\vq_1, \vq_2) =
\sum_{\vG\in \mathbb{L}^*}\dfrac{r_2(\vq_1, \vq_2 + \vG)}{|\vq_2 + \vG|^2}.
\end{align*}
Here $E(\vq_1,\vq_2)$ is smooth and periodic, and 
\begin{itemize}
        \item $r_1(\vq_1, \vq_2) = r_{ijba}(\vk_i, \vk_i+\vq_1-\vq_2, \vq_1)$ is smooth with respect to $\vq_1,\vq_2$, periodic with respect to $\vq_2$, and scales as $\Or(|\vq_1|^2)$ near $\vq_1 = \bm{0}$. 
        
        \item $r_2(\vq_1, \vq_2) = r_{abij}(\vk_i-\vq_2, \vk_i+\vq_1, \vq_2)$ is smooth with respect to $\vq_1,\vq_2$, periodic with respect to $\vq_1$, and scales as $\Or(|\vq_2|^2)$ near $\vq_2 = \bm{0}$. 
\end{itemize}
Further split the two ERIs above with $\vq_1, \vq_2$ restricted in $\Omega^*$ as 
\begin{align*}
R_1(\vq_1, \vq_2)
& 
= 
\left(
R_1(\vq_1, \vq_2) - \dfrac{r_1(\vq_1, \vq_2)}{|\vq_1|^2}H(\vq_1) 
\right)
+ 
\dfrac{r_1(\vq_1, \vq_2)}{|\vq_1|^2}H(\vq_1) 
= R_1^\text{smooth} + R_1^\text{singular}, 
\\
R_2(\vq_1, \vq_2)
& = 
\left(
R_2(\vq_1,\vq_2) - \dfrac{r_2(\vq_1,\vq_2)}{|\vq_2|^2}H(\vq_2) 
\right)
+ \dfrac{r_2(\vq_1,\vq_2)}{|\vq_2|^2}H(\vq_2) 
= R_2^\text{smooth}+ R_2^\text{singular}, 
\end{align*}
where $R_*^\text{smooth}$ is periodic and smooth with respect to $\vq_1,\vq_2$, and $R_*^\text{singular}(\vq_1,\vq_2)$ is in the fractional form with respect to $\vq_1$ or $\vq_2$. 
Function $\wt{F}_\text{MP2,x}^{ijab}$ can then be decomposed into  four terms, 
\[
\wt{F}
= 
R_1^\text{smooth}R_2^\text{smooth}E + R_1^\text{singular}R_2^\text{smooth}E + R_1^\text{smooth} R_2^\text{singular}E 
+ R_1^\text{singular}R_2^\text{singular}E. 
\]
The first term is periodic and smooth with respect to $\vq_1,\vq_2$ and has super-algebraically decaying quadrature error.
The second term is of the fractional form 
\[
\dfrac{r_1(\vq_1, \vq_2)H(\vq_1) (R_2^\text{smooth}E)(\vq_1,\vq_2)}{|\vq_1|^2}
\]
where the numerator is periodic and smooth with respect to $\vq_2$, smooth and compactly supported with respect to $\vq_1$ in $\Omega^*$, and scales as $\Or(|\vq_1|^2)$ near $\vq_1 = \bm{0}$. 
By the same analysis for exchange energy, the quadrature error for this term is dominated by the quadrature over $\vq_1$ with any fixed $\vq_2$ and overall scales as $\Or(N_\vk^{-1})$. 
The third term is similar to the second term and also has $\Or(N_\vk^{-1})$ quadrature error. 

The last term is still in the fractional form but now is a product of two fractions with two different denominators, i.e., 
\[
\dfrac{r_1(\vq_1,\vq_2)H(\vq_1)}{|\vq_1|^2}\dfrac{r_2(\vq_1,\vq_2)H(\vq_2)E(\vq_1,\vq_2)}{|\vq_2|^2}. 
\]
This term fits \cref{cor:em_n_fraction} with $n=2$ and $\gamma_\text{min} = 0$ and thus has $\Or(N_\vk^{-1})$ quadrature error.

The overall asymptotic error scaling $\Or(N_\vk^{-1})$ obtained above has its prefactor bounded by an $\Or(1)$ constant that is independent of $\vk_i$ due to the smoothness of all the components $r_1,r_2, E$ with respect to $\vk_i$. Thus, we have 
\[
\max_{\vk_i} \mathcal{E}_{\Omega^*\times \Omega^*}\left(\wt{F}_\text{MP2,x}^{ijab}(\vk_i,\vq_1,\vq_2), \mathcal{K}_\vq\times \mathcal{K}_\vq\right) 
=
\Or(N_\vk^{-1}). 
\]
Combining this estimation with \cref{eqn:mp2_exchange_partial_q}, we have
\[
\mathcal{E}_{\Omega^* \times \Omega^* \times \Omega^*}\left( \sum_{ijab} \wt{F}_\text{MP2,x}^{ijab}(\vk_i,\vq_1,\vq_2), \mathcal{K}\times\mathcal{K}_\vq\times\mathcal{K}_\vq\right)  = \Or(N_\vk^{-1}).
\]

Combining the two separate analysis for the direct and the exchange terms of the MP2 energy, \cref{thm:mp2_error} concludes that the quadrature error in MP2 energy calculation scales as $\Or(N_\vk^{-1})$. 
\begin{thm}[MP2 correlation energy  for 3D periodic systems]\label{thm:mp2_error}
       The finite-size error in the MP2 energy calculation satisfies 
        \[
        E_\textup{mp2}^\textup{TDL} - E_\textup{mp2}(N_\vk) 
        =
        \dfrac{1}{|\Omega^*|^3}
        \mathcal{E}_{(\Omega^*)^{\times 3}}\left(\sum_{ijab} F_\textup{mp2,d}^{ijab}+ F_\textup{mp2,x}^{ijab}, \mathcal{K}^{\times 3}\right) = \Or(N_\vk^{-1}). 
        \]
\end{thm}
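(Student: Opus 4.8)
The plan is to exploit linearity of the quadrature-error functional, so that $\mathcal{E}_{(\Omega^*)^{\times 3}}(\sum_{ijab}F_\text{mp2,d}^{ijab}+F_\text{mp2,x}^{ijab},\mathcal{K}^{\times 3})$ splits into $\sum_{ijab}\mathcal{E}_{(\Omega^*)^{\times 3}}(F_\text{mp2,d}^{ijab},\mathcal{K}^{\times 3})$ plus $\sum_{ijab}\mathcal{E}_{(\Omega^*)^{\times 3}}(F_\text{mp2,x}^{ijab},\mathcal{K}^{\times 3})$, and since the band indices $i,j,a,b$ run over a fixed, $N_\vk$-independent set, it suffices to bound each summand by $\Or(N_\vk^{-1})$ with a prefactor uniform in $(i,j,a,b)$. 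Within each term I would reuse verbatim the three-step recipe of \cref{sec:exchange}: a change of variables that isolates each Coulomb singularity onto one momentum-difference variable, extraction of the non-smooth part with the radial localizer $H$ of \cref{eqn:localizer}, and an application of the special Euler--Maclaurin formula \cref{thm:em_n_fraction} in its periodic, compactly-supported form \cref{cor:em_n_fraction}.

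For the direct term, set $\vq=\vk_a-\vk_i$; periodicity of $F_\text{mp2,d}^{ijab}$ in all three arguments rewrites the error as $\mathcal{E}_{(\Omega^*)^{\times 3}}(\wt{F}_\text{mp2,d}^{ijab}(\vk_i,\vk_j,\vq),\mathcal{K}\times\mathcal{K}\times\mathcal{K}_\vq)$, where $\mathcal{K}_\vq$ is the induced $\Gamma$-centered mesh. Since $\wt{F}_\text{mp2,d}^{ijab}$ is smooth and periodic in $\vk_i,\vk_j$, the partial-Euler--Maclaurin reduction already used in \cref{eqn:exchange_partial_q} (together with a dominated-convergence argument showing the partial integral over $\vq$ is smooth-periodic in $(\vk_i,\vk_j)$) bounds this, up to a super-algebraically small remainder, by $\max_{\vk_i,\vk_j\in\Omega^*}\mathcal{E}_{\Omega^*}(\wt{F}_\text{mp2,d}^{ijab}(\vk_i,\vk_j,\cdot),\mathcal{K}_\vq)$, as in \cref{eqn:mp2_direct_partial_q}. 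Fixing $\vk_i,\vk_j$ and writing $\wt{F}_\text{mp2,d}^{ijab}=2R_1(\vq)R_2(\vq)E(\vq)$ with $R_\ell(\vq)=\sum_\vG r_\ell(\vq+\vG)/|\vq+\vG|^2$ and $E$ smooth periodic, I split $R_\ell=R_\ell^{\text{smooth}}+r_\ell(\vq)H(\vq)/|\vq|^2$; because $i\neq a$ and $j\neq b$, the orthonormality expansion \cref{eqn:expanion_rijab} forces $r_1,r_2=\Or(|\vq|^2)$. The product then decomposes into a smooth-periodic piece (super-algebraically small error by \cref{cor:euler_maclaurin}), two pieces of the form $f(\vq)/|\vq|^2$ with $f=\Or(|\vq|^2)$, and one piece of the form $f(\vq)/|\vq|^4$ with $f=\Or(|\vq|^4)$; in every fractional piece the numerator $f$ is smooth and compactly supported in $\Omega^*$ (thanks to $H$) and $\gamma_{\min}=0$ (here $(a,p)=(2,1)$ and $(4,2)$, respectively), so \cref{cor:em_n_fraction} with $n=1$, $d=3$ gives an $\Or(N_\vk^{-1})$ quadrature error. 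Smoothness of $r_1,r_2,E,H$ in $\vk_i,\vk_j$ makes the $\Or(1)$ prefactor uniform over the spectators, so summing over $(i,j,a,b)$ stays $\Or(N_\vk^{-1})$.

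For the exchange term the only new feature is that the two Coulomb kernels sit in distinct momentum differences, so I would set $\vq_1=\vk_b-\vk_i$ and $\vq_2=\vk_i-\vk_a$ and, exactly as in \cref{eqn:mp2_exchange_partial_q}, reduce to $\max_{\vk_i\in\Omega^*}\mathcal{E}_{\Omega^*\times\Omega^*}(\wt{F}_\text{mp2,x}^{ijab}(\vk_i,\vq_1,\vq_2),\mathcal{K}_\vq\times\mathcal{K}_\vq)$. Splitting $R_1$ into its smooth part plus $r_1(\vq_1,\vq_2)H(\vq_1)/|\vq_1|^2$ (with $r_1=\Or(|\vq_1|^2)$, since occupied and virtual band indices are disjoint) and $R_2$ into its smooth part plus $r_2(\vq_1,\vq_2)H(\vq_2)/|\vq_2|^2$ (with $r_2=\Or(|\vq_2|^2)$) produces four pieces: a smooth-periodic one; two single fractions in $\vq_1$ or $\vq_2$ of the same shape handled in the direct case, each $\Or(N_\vk^{-1})$; and the genuinely two-variable piece $\dfrac{r_1(\vq_1,\vq_2)H(\vq_1)}{|\vq_1|^2}\cdot\dfrac{r_2(\vq_1,\vq_2)H(\vq_2)E(\vq_1,\vq_2)}{|\vq_2|^2}$, which is precisely the $n=2$ instance of \cref{thm:em_n_fraction}/\cref{cor:em_n_fraction} with $a_1=a_2=2$, $p_1=p_2=1$, hence $\gamma_{\min}=0$ and quadrature error $\Or(N_\vk^{-1})$. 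Uniformity of the prefactor in $\vk_i$ and the finite band sum finish the estimate; adding the direct and exchange bounds gives $E_\text{mp2}^\text{TDL}-E_\text{mp2}(N_\vk)=\Or(N_\vk^{-1})$.

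The main obstacle is the last piece in each decomposition — a product of two singular factors (a single quartic-denominator fraction in the direct case, a genuine two-variable product in the exchange case). The classical result of Lyness covers only one homogeneous-type singularity, so the crux is to have the generalization \cref{thm:em_n_fraction} in hand and to check its hypotheses carefully: smoothness and compact support of every numerator (supplied by the localizer $H$), the exact vanishing orders $a_\ell$ at the origin (supplied by orbital orthonormality via \cref{eqn:expanion_rijab}, using that occupied and virtual bands never coincide), and $\min_\ell\gamma_\ell=0$, so that the exponent $d+\min_\ell\gamma_\ell=3$ is not degraded. A secondary but essential point is that the induced mesh $\mathcal{K}_\vq$ is $\Gamma$-centered and closed under inversion, so $\vq=\bm{0}$ (resp.\ $\vq_1=\vq_2=\bm{0}$) is always a quadrature node; this is what makes the reductions \cref{eqn:mp2_direct_partial_q} and \cref{eqn:mp2_exchange_partial_q} legitimate and, at the same time, is the reason the rate is $\Or(N_\vk^{-1})$ and no faster unless the integrand's discontinuities happen to be removable.
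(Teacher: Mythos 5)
Your proposal is correct and follows essentially the same route as the paper's own argument: the same change of variables isolating the singularities onto $\vq$ (resp.\ $\vq_1,\vq_2$), the same reduction to a maximum over the smooth spectator variables, the same localizer-based splitting of each $R_\ell$ into smooth plus fractional parts with $r_\ell=\Or(|\vq|^2)$ from orbital orthonormality, and the same application of \cref{cor:em_n_fraction} with $\gamma_{\min}=0$ (including the $n=2$ case for the two-variable piece of the exchange term). No gaps to report.
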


\section{Madelung-constant correction, shifted Ewald kernel, and low dimensional systems}\label{sec:madelung}
\subsection{Madelung-constant  correction for 3D periodic systems}
From the analysis in \cref{thm:exchange_error}, the $\Or(N_\vk^{-\frac13})$ quadrature error in the exchange energy calculation is due to the non-smooth terms $\frac{r_{ijji}(\vk_i,\vk_i+\vq,\vq)}{|\vq|^2}$ with $i = j$, which are all of form $\frac{1}{|\vq|^2}$ asymptotically near $\vq = \bm{0}$. 
To reduce this error, it is a common practice to add a Madelung-constant shift \cite{FraserFoulkesRajagopalEtAl1996,ChiesaCeperleyMartinEtAl2006,DrummondNeedsSorouriEtAl2008} to the Ewald kernel in ERI computation as
\begin{equation}\label{eqn:shifted_ewald}
        \hat{v}_\text{shift}(\vG) = 
        \left\{
        \begin{array}{ll}
                \frac{4\pi}{|\vG|^2} & \vG \neq \bm{0} \\
                -|\Omega|N_\vk \xi & \vG = \bm{0}
        \end{array}
        \right., 
\end{equation}
with
\begin{equation}\label{eqn:madelung}
        \xi
        = 
        \dfrac{|\Omega^*|}{(2\pi)^3N_\vk}\sum_{\vq\in\mathcal{K}_\vq}
        \xsum_{\vG \in \mathbb{L}^*}\dfrac{4\pi e^{-\varepsilon|\vq+\vG|^2}}{|\vq + \vG|^2}
        - \dfrac{1}{(2\pi)^3}\int_{\mathbb{R}^3}\ud \vq\dfrac{4\pi e^{-\varepsilon|\vq|^2}}{|\vq|^2}
        -\frac{4\pi \varepsilon}{| \Omega |N_\vk}
        + \xsum_{\mathbf{R} \in \mathbb{L}_{\mathcal{K}_\vq}} \frac{\operatorname{erfc}\left(\varepsilon^{-1/2}|\mathbf{R}|/2\right)}{|\mathbf{R}|}.
\end{equation}
The constant $\varepsilon > 0$ can be arbitrary,
$\mathcal{K}_\vq$ is an $N_\vk$-sized $\Gamma$-centered MP mesh in $\Omega^*$, and $\mathbb{L}_{\mathcal{K}_\vq}$ is the real-space lattice associated with the reciprocal-space lattice $\vq+\vG$ with $\vq\in \mathcal{K}_\vq,\vG\in\mathbb{L}^*$. 
Specifically, when $\mathbb{L} = \{c_1\va_1 + c_2\va_2 + c_3\va_3: c_1,c_2,c_3\in \mathbb{Z}\}$ and $\mathcal{K}_\vq$ is of size $m\times m \times m$, this real-space lattice is defined as
\[
\mathbb{L}_{\mathcal{K}_\vq} = \{c_1m\va_1 + c_2m\va_2 + c_3m\va_3: c_1,c_2,c_3\in \mathbb{Z}\}. 
\]
Note that $\xi$ is independent of parameter $\varepsilon$ and scales as $\Or(N_\vk^{-\frac13})$ \cite{FraserFoulkesRajagopalEtAl1996}. 
With this shifted Ewald kernel, a correction is added to ERIs as 
\begin{equation}\label{eqn:eri_correction}
        \braket{n_1\vk_1,n_2\vk_2|n_3\vk_3,n_4\vk_4} 
        \rightarrow
        \braket{n_1\vk_1,n_2\vk_2|n_3\vk_3,n_4\vk_4} - \delta_{n_1n_3}\delta_{n_2n_4}\delta_{\vk_1\vk_3}\delta_{\vk_2\vk_4} \xi. 
\end{equation}
The Madelung-corrected exchange energy can then be written as
\begin{equation}\label{eqn:exchange_correction}
        E_\text{X}^\text{corrected}(N_\vk) = E_\text{X}(N_\vk) + N_\textup{occ}\xi,
\end{equation}
where $N_\text{occ}$ denotes the number of occupied bands. 

\cref{thm:exchange_madelung_error} rigorously proves that the Madelung constant correction reduces the quadrature error in the exchange energy calculation to $\Or(N_\vk^{-1})$.
Furthermore, this correction is closely connected to a singularity subtraction method, which is a classical numerical quadrature technique for singular integrals.
The basic idea of this technique is to construct an auxiliary function $h$ that has the same singularity as any concerned integrand $g$, subtract $h$ from $g$, and then compute the numerical quadrature of $g$ as 
\[
\mathcal{Q}_V(g - h, \mathcal{X}) + \mathcal{I}_V(h) \xrightarrow{\mathcal{X}\rightarrow V} \mathcal{I}_V(g),
\]
where $\mathcal{I}_V(h)$ may be computed either analytically,  or precomputed numerically with high precision, and the quadrature error  becomes $\mathcal{E}_V(g - h, \mathcal{X})$. Since $g - h$ has improved smoothness properties compared to $g$, the error $\mathcal{E}_V(g - h, \mathcal{X})$ could be asymptotically smaller than $\mathcal{E}_V(g , \mathcal{X})$. 
Note that 
\[
\mathcal{Q}_V(g - h, \mathcal{X}) + \mathcal{I}_V(h) = \mathcal{Q}_V(g, \mathcal{X}) + \mathcal{E}_V(h, \mathcal{X}).
\]
The method is thus also equivalent to adding a correction $\mathcal{E}_V(h, \mathcal{X})$ to the original quadrature $\mathcal{Q}_V(g, \mathcal{X})$.
                
The singularity subtraction method has also been used directly in the exchange energy calculation in the literature, referred to as the auxiliary function methods \cite{GygiBaldereschi1986,WenzienCappelliniBechstedt1995, CarrierRohraGorling2007,DucheminGygi2010}. 
A discussion similar to \cref{thm:exchange_madelung_error} can also be used to analyze the remaining quadrature error in existing auxiliary function methods.

\begin{thm}[Madelung corrected Fock exchange energy for 3D periodic systems]
\label{thm:exchange_madelung_error}
        The Madelung constant correction \cref{eqn:exchange_correction} reduces the finite-size error to $\Or(N_\vk^{-1})$ as
        \[
        E_\textup{x}^\textup{TDL} - E_\textup{x}^\textup{corrected}(N_\vk)
        = 
        -\dfrac{1}{|\Omega^*|^2}
        \left(
        \mathcal{E}_{\Omega^*\times \Omega^*}\left(\sum_{ij} \wt{F}_\textup{x}^{ij}(\vk_i, \vq), \mathcal{K}\times\mathcal{K}_\vq\right)
        + N_\textup{occ}|\Omega^*|^2\xi
        \right)
        = 
        \Or\left(N_\vk^{-1}\right). 
        \]
\end{thm}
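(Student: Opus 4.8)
The plan is to reorganize the corrected error into the stated form, peel off the off-diagonal band contributions (already controlled by \cref{thm:exchange_error}), recognize the Madelung shift $N_{\text{occ}}\xi$ as precisely the singularity-subtraction correction for the leading Coulomb singularity of the diagonal integrands, and then estimate the remaining quadrature error by the non-smoothness extraction of \cref{subsec:extraction}. First, using the identity in \cref{thm:exchange_error}, the change of variables \cref{eqn:exchange_changeVar}, and $E_\text{x}^\text{corrected}(N_\vk)=E_\text{x}(N_\vk)+N_\text{occ}\xi$ from \cref{eqn:exchange_correction}, I would write
\[
E_\text{x}^\text{TDL}-E_\text{x}^\text{corrected}(N_\vk)
= -\dfrac{1}{|\Omega^*|^2}\Big(\sum_{ij}\mathcal{E}_{\Omega^*\times\Omega^*}\big(\wt{F}_\text{x}^{ij}(\vk_i,\vq),\mathcal{K}\times\mathcal{K}_\vq\big)+N_\text{occ}|\Omega^*|^2\xi\Big),
\]
which is the asserted expression. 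Because the ERI correction \cref{eqn:eri_correction} is nonzero only when $i=j$ (and $\vk_i=\vk_j$), the off-diagonal band pairs are untouched, and the $i\neq j$ case of \cref{eqn:error_max_k} already gives $\sum_{i\neq j}\mathcal{E}_{\Omega^*\times\Omega^*}(\wt{F}_\text{x}^{ij},\mathcal{K}\times\mathcal{K}_\vq)=\Or(N_\vk^{-1})$. It therefore suffices to show, with prefactor uniform in the occupied band $i$, that $\mathcal{E}_{\Omega^*\times\Omega^*}(\wt{F}_\text{x}^{ii},\mathcal{K}\times\mathcal{K}_\vq)+|\Omega^*|^2\xi=\Or(N_\vk^{-1})$.

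Next I would introduce the auxiliary function $\phi(\vq)$ on $\Omega^*$: the Ewald-regularized $\mathbb{L}^*$-periodization of $\tfrac{4\pi}{|\Omega|}|\vq|^{-2}$, which is periodic over $\Omega^*$, smooth away from $\mathbb{L}^*$, and of the form $\phi(\vq)=\tfrac{4\pi}{|\Omega|}|\vq|^{-2}+\phi_\text{reg}(\vq)$ near $\vq=\bm0$ with $\phi_\text{reg}$ smooth and periodic (the value $\phi(\bm0)$ fixed by discarding the divergent $\vG=\bm0$ self-term, consistent with the conventions in \cref{eqn:R_function} and \cref{eqn:madelung}). The crucial identity is $\mathcal{E}_{\Omega^*}(\phi,\mathcal{K}_\vq)=-|\Omega^*|\xi$; since $\phi$ is independent of $\vk_i$ this yields $\mathcal{E}_{\Omega^*\times\Omega^*}(\phi,\mathcal{K}\times\mathcal{K}_\vq)=|\Omega^*|\,\mathcal{E}_{\Omega^*}(\phi,\mathcal{K}_\vq)=-|\Omega^*|^2\xi$, hence $\mathcal{E}_{\Omega^*\times\Omega^*}(\wt{F}_\text{x}^{ii},\mathcal{K}\times\mathcal{K}_\vq)+|\Omega^*|^2\xi=\mathcal{E}_{\Omega^*\times\Omega^*}(\wt{F}_\text{x}^{ii}-\phi,\mathcal{K}\times\mathcal{K}_\vq)$, i.e.\ the Madelung shift realizes exactly the singularity-subtraction correction $\mathcal{E}_V(h,\mathcal{X})$ with $h=\phi$. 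The identity itself is a direct but bookkeeping-heavy Ewald computation at two levels: one Ewald split (with parameter $\varepsilon$) represents $\phi$, and a second Ewald split evaluates the superlattice sum $\tfrac{1}{N_\vk}\sum_{\vq\in\mathcal{K}_\vq}\phi(\vq)$ over $\mathcal{K}_\vq+\mathbb{L}^*=\tfrac1m\mathbb{L}^*$, whose dual lattice is $\mathbb{L}_{\mathcal{K}_\vq}$; matching terms reproduces precisely the four pieces of \cref{eqn:madelung} (Gaussian-damped reciprocal sum, Gaussian integral over $\RR^3$, the $\vG=\bm0$ background term $-\tfrac{4\pi\varepsilon}{|\Omega|N_\vk}$, and the real-space $\erfc$ sum), and incidentally shows $\xi$ is independent of $\varepsilon$. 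I expect this identity, together with confirming the exact regularity of $\phi_\text{reg}$, to be the main obstacle.

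Finally I would estimate $\mathcal{E}_{\Omega^*\times\Omega^*}(\wt{F}_\text{x}^{ii}-\phi,\mathcal{K}\times\mathcal{K}_\vq)$ by the non-smoothness extraction of \cref{subsec:extraction}. Writing $\wt{F}_\text{x}^{ii}(\vk_i,\vq)=\tfrac{r_{iiii}(\vk_i,\vk_i+\vq,\vq)}{|\vq|^2}+\sum_{\bm0\neq\vG\in\mathbb{L}^*}\tfrac{r_{iiii}(\vk_i,\vk_i+\vq,\vq+\vG)}{|\vq+\vG|^2}$ and using $r_{iiii}(\vk_i,\vk_i+\vq,\vq)=\tfrac{4\pi}{|\Omega|}+2\vv_{i\vk_i}^T\vq+\Or(|\vq|^2)$ from \cref{eqn:expanion_rijab}, I would split $\wt{F}_\text{x}^{ii}-\phi$ into (i) the smooth part obtained by subtracting the localized singular term $\tfrac{(r_{iiii}(\vk_i,\vk_i+\vq,\vq)-4\pi/|\Omega|)H(\vq)}{|\vq|^2}$ (with $H$ the radial localizer of \cref{eqn:localizer}), whose $\vq=\bm0$ singularity cancels against that of $\phi$ by construction; (ii) the term $\tfrac{2\vv_{i\vk_i}^T\vq\,H(\vq)}{|\vq|^2}$; and (iii) the term $\tfrac{g_i(\vk_i,\vq)H(\vq)}{|\vq|^2}$ with $g_i=r_{iiii}(\vk_i,\vk_i+\vq,\vq)-\tfrac{4\pi}{|\Omega|}-2\vv_{i\vk_i}^T\vq$ smooth, compactly supported in $\vq\in\Omega^*$, and $\Or(|\vq|^2)$ near $\vq=\bm0$. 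Term (i) is smooth and periodic over $\Omega^*$, so its quadrature error is super-algebraically small by \cref{cor:euler_maclaurin}; term (ii) is odd in $\vq$, and since $\Omega^*$ and $\mathcal{K}_\vq$ are invariant under $\vq\mapsto-\vq$ with the $\vq=\bm0$ node contributing $0$, both its integral and its trapezoidal sum vanish for every fixed $\vk_i$, so its quadrature error is identically zero; term (iii), after again discarding its odd-in-$\vq$ part (which contributes zero by the same symmetry), is of the fractional form \cref{eqn:fractional_form} with $n=1$ and $\gamma_\text{min}=0$, so \cref{cor:em_n_fraction} gives an $\Or(N_\vk^{-1})$ quadrature error whose prefactor is controlled uniformly in $\vk_i$ by the smoothness of $r_{iiii}$ and $\phi_\text{reg}$, exactly as in \cref{eqn:error_max_k}. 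Summing over the $N_\text{occ}$ occupied bands and combining with the off-diagonal contribution gives the claimed $\Or(N_\vk^{-1})$ bound.
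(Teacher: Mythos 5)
Your proposal is correct and rests on the same two pillars as the paper's proof: recognizing the Madelung shift as a singularity-subtraction correction for the leading $\frac{4\pi N_\text{occ}}{|\Omega||\vq|^2}$ term, and then bounding the subtracted integrand (whose numerator is $\Or(|\vq|^2)$ near $\vq=\bm{0}$) via \cref{cor:em_n_fraction} with $\gamma_\text{min}=0$, uniformly in $\vk_i$. The one genuine difference is the choice of auxiliary function. The paper subtracts the Gaussian-damped kernel $h_\varepsilon$ of \cref{eqn:h_eps} at a fixed $\varepsilon$; then $-\mathcal{E}_{\Omega^*\times\Omega^*}(N_\text{occ}h_\varepsilon,\mathcal{K}\times\mathcal{K}_\vq)$ reproduces \emph{only the first two terms} of $\xi$ in \cref{eqn:madelung}, and the proof closes by noting that the remaining two terms, $-\frac{4\pi\varepsilon}{|\Omega|N_\vk}$ and the $\erfc$ sum over $\mathbb{L}_{\mathcal{K}_\vq}$ (whose shortest vector grows like $m=N_\vk^{1/3}$, so the sum is exponentially small), are themselves $\Or(N_\vk^{-1})$. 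You instead subtract the $\varepsilon$-independent periodized kernel $\phi$ and claim the exact identity $\mathcal{E}_{\Omega^*}(\phi,\mathcal{K}_\vq)=-|\Omega^*|\xi$. That identity is indeed true, but only with the precise convention $\phi(\bm{0}):=\lim_{\vq\to\bm{0}}\bigl(\phi(\vq)-\tfrac{4\pi}{|\Omega|}|\vq|^{-2}\bigr)$: writing $\phi=h_\varepsilon+\sum_{\bm{0}\neq\vR\in\mathbb{L}}\frac{\erfc(\varepsilon^{-1/2}|\vR|/2)}{|\vR|}e^{\I\vR\cdot\vq}-\frac{1}{\sqrt{\pi\varepsilon}}$ and aliasing the $\erfc$ Fourier series onto $\mathbb{L}_{\mathcal{K}_\vq}$ reproduces the first, second and fourth terms of $\xi$, while the third term $-\frac{4\pi\varepsilon}{|\Omega|N_\vk}$ arises exactly from the discrepancy $\frac{4\pi\varepsilon}{|\Omega|}$ between the "drop the damped $\vG=\bm{0}$ term" and "drop the bare Coulomb singularity" values at the single node $\vq=\bm{0}$, weighted by $N_\vk^{-1}$. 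So the Ewald bookkeeping you flag as the main obstacle does work out, but it is exactly the step the paper deliberately avoids by accepting an $\Or(N_\vk^{-1})$ mismatch between the subtraction correction and $\xi$ — a cheaper route, and any imprecision in your $\vq=\bm{0}$ convention would in any case only perturb the error by $\Or(N_\vk^{-1})$, so your conclusion is unaffected. Your explicit separate treatment of the odd term $\frac{2\vv_{i\vk_i}^T\vq\,H(\vq)}{|\vq|^2}$ (vanishing integral and vanishing trapezoidal sum by inversion symmetry of $\mathcal{K}_\vq$) is the same symmetrization the paper performs implicitly when it discards the first-order contribution in \cref{eqn:leading_nonsmooth}, and your band-by-band split (off-diagonal pairs already $\Or(N_\vk^{-1})$ from \cref{eqn:error_max_k}) is consistent with the paper's treatment of the summed integrand.
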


\begin{proof}
        The summation of all the non-smooth terms in $\sum_{ij}\wt{F}_\text{X}^{ij}(\vk_i, \vq)$ with $\vq$ restricted to $\Omega^*$ can be expanded near $\vq = \bm{0}$ as
        \begin{equation}\label{eqn:leading_nonsmooth}
                \sum_{ij}\dfrac{r_{ijji}(\vk_i, \vk_i+\vq, \vq)}{|\vq|^2} = \dfrac{4\pi N_\text{occ}}{|\Omega|}
                        \dfrac{1}{|\vq|^2}
                        + 
                        \dfrac{\Or(|\vq|^2)}{|\vq|^2},
        \end{equation}
                where the first term turns out to be  the only source that leads to the dominant $\Or(N_\vk^{-\frac13})$ quadrature error in the exchange energy calculation (which can be proved using the localizer $H(\vq)$ and \cref{cor:em_n_fraction}). 
                                In this expansion, it is important that $\mathcal{K}_\vq$ is closed under inversion, i.e., $-\vq \in \mathcal{K}_\vq$ if $\vq \in \mathcal{K}_\vq$. This allows us to remove possible first order contribution $\frac{\vv^T\vq}{|\vq|^2}$. (More specifically, we can implicitly replace $\wt{F}_\text{X}^{ij}(\vk_i, \vq)$ by $\frac{1}{2}(\wt{F}_\text{X}^{ij}(\vk_i, \vq) + \wt{F}_\text{X}^{ij}(\vk_i, -\vq))$ in the following quadrature error analysis, see a detailed, similar discussion in \cref{rem:cubic}).

                In the corrected exchange energy calculation, the correction $N_\text{occ}|\Omega^*|^2\xi$  is exactly connected to a singularity subtraction method that removes the leading non-smooth term 
                $\frac{4\pi N_\text{occ}}{|\Omega|}
                \frac{1}{|\vq|^2}$ in \cref{eqn:leading_nonsmooth}. 
        Specifically, define a periodic function $h_\varepsilon(\vk_i, \vq)$ as
        \begin{equation} \label{eqn:h_eps}
        h_\varepsilon(\vk_i, \vq) = \frac{4\pi}{ |\Omega|}\sum_{\vG \in \mathbb{L}^*}\dfrac{ e^{-\varepsilon|\vq+\vG|^2}}{|\vq + \vG|^2},              
        \end{equation}
        where $\varepsilon > 0$ is an arbitrary constant (i.e., independent of $N_\vk$). Note that $h_\varepsilon(\vk_i, \vq)$ does no vary  with respect to $\vk_i$, and we introduce this dependence in the definition to facilitate later discussions when evaluating the numerical quadrature of $h_{\varepsilon}$ on $\mathcal{K}\times \mathcal{K}_\vq$. 
        The difference $\sum_{ij}\wt{F}_\text{X}^{ij}(\vk_i, \vq) - N_\text{occ} h_\varepsilon(\vk_i, \vq)$ is still periodic and smooth with respect to $\vk_i, \vq$ except at $\vq\in \mathbb{L}^*$ and its non-smooth part with $\vq \in \Omega^*$ can be extracted as 
        \begin{equation*}\label{eqn:function_subtraction}
        G(\vk_i, \vq) = \dfrac{\sum_{ij} r_{ijji}(\vk_i,\vk_i+\vq,\vq) - \frac{4\pi N_\text{occ}}{|\Omega|} e^{-\varepsilon|\vq|^2}}{|\vq|^2}H(\vq). 
        \end{equation*}
        This numerator is smooth and periodic with respect to $\vk_i$, and smooth and compactly supported with respect to $\vq \in \Omega^*$.
        More importantly, the numerator now scales as $\Or(|\vq|^2)$ near $\vq = \bm{0}$. 
        Thus, applying \cref{cor:em_n_fraction} with $\gamma_\text{min} = 0$ to $G(\vk_i, \vq)$ and using the same analysis approach for $\sum_{ij}G^{ij}$ in \cref{thm:exchange_error}, we have
        \begin{equation*}
                \mathcal{E}_{\Omega^*\times \Omega^*}\left(\sum_{ij}\wt{F}_\text{X}^{ij} - N_\text{occ} h_\varepsilon, \mathcal{K}\times \mathcal{K}_\vq \right)
                \lesssim 
                \mathcal{E}_{\Omega^*\times \Omega^*}\left(G(\vk_i, \vq), \mathcal{K}\times \mathcal{K}_\vq \right) = \Or(N_\vk^{-1}). 
        \end{equation*}
        
        Next, we rewrite the quadrature error for $\textstyle\sum_{ij}\wt{F}_\text{X}^{ij}$ as  
        \begin{equation}\label{eqn:singularity_subtraction}     
                \mathcal{E}_{\Omega^*\times \Omega^*}(\textstyle\sum_{ij}\wt{F}_\text{X}^{ij}, \mathcal{K}\times\mathcal{K}_\vq) 
                =
                \mathcal{E}_{\Omega^*\times \Omega^*}(N_\text{occ}h_\varepsilon, \mathcal{K}\times\mathcal{K}_\vq) + 
                \mathcal{E}_{\Omega^*\times \Omega^*}(\textstyle\sum_{ij}\wt{F}_\text{X}^{ij}-N_\text{occ}h_\varepsilon, \mathcal{K}\times\mathcal{K}_\vq).
        \end{equation}
        The singularity subtraction method defines $-\mathcal{E}_{\Omega^*\times \Omega^*}(N_\text{occ}h_\varepsilon, \mathcal{K}\times\mathcal{K}_\vq)$ as the finite-size correction, and the remaining quadrature error, i.e., the last term above,  scales as $\Or(N_\vk^{-1})$ as explained above.
        This correction can be further computed as 
        \begin{align*}
                -\mathcal{E}_{\Omega^*\times \Omega^*}(N_\text{occ}h_\varepsilon, \mathcal{K}\times\mathcal{K}_\vq)
                & = 
                \frac{4\pi N_\text{occ}}{|\Omega|}|\Omega^*|
                \left(
                \dfrac{|\Omega^*|}{N_\vk}\sum_{\vq\in\mathcal{K}_\vq}
                \xsum_{\vG \in \mathbb{L}^*}\dfrac{ e^{-\varepsilon|\vq+\vG|^2}}{|\vq + \vG|^2}
                -
                \int_{\Omega^*}\ud \vq\sum_{\vG \in \mathbb{L}^*}\dfrac{e^{-\varepsilon|\vq+\vG|^2}}{|\vq + \vG|^2}
                \right)
                \\
                & 
                =
                N_\text{occ}|\Omega^*|^2
                \left(           
                \dfrac{|\Omega^*|}{(2\pi)^3N_\vk}\sum_{\vq\in\mathcal{K}_\vq}
                \xsum_{\vG \in \mathbb{L}^*}\dfrac{4\pi e^{-\varepsilon|\vq+\vG|^2}}{|\vq + \vG|^2}
                - 
                \dfrac{1}{(2\pi)^3}
                \int_{\mathbb{R}^3}\ud \vq\dfrac{4\pi e^{-\varepsilon|\vq|^2}}{|\vq|^2}
                \right),
        \end{align*}
        which only has $\Or(N_\vk^{-1})$ difference from the Madelung constant correction $N_\text{occ}|\Omega^*|^2\xi$.                
        Thus, the Madelung constant correction $N_\text{occ}|\Omega^*|^2\xi$ also reduces the quadrature error  for $\textstyle\sum_{ij}\wt{F}_\text{X}^{ij}$ to $\Or(N_\vk^{-1})$ and is connected  to the above singular subtraction method using $-N_\text{occ}h_\varepsilon(\vk_i, \vq)$. 
\end{proof}

\begin{rem}[A new correction based on singularity subtraction]
The proof above actually proposes a slightly different finite-size correction as
\begin{align}
        E_\textup{x}^\textup{corrected,2}(N_\vk)
        & = E_\textup{x}(N_\vk)  
        -\dfrac{1}{|\Omega^*|^2}\mathcal{E}_{\Omega^*\times \Omega^*}(N_\textup{occ}h_\varepsilon, \mathcal{K}\times \mathcal{K}_\vq)
        \nonumber\\
        & = E_\textup{x}(N_\vk) + 
        N_\textup{occ}\left(
        \dfrac{|\Omega^*|}{(2\pi)^3N_\vk}\sum_{\vq\in\mathcal{K}_\vq}
        \xsum_{\vG \in \mathbb{L}^*}\dfrac{4\pi e^{-\varepsilon|\vq+\vG|^2}}{|\vq + \vG|^2}
        - 
        \dfrac{1}{(2\pi)^3}
        \int_{\mathbb{R}^3}\ud \vq\dfrac{4\pi e^{-\varepsilon|\vq|^2}}{|\vq|^2}
        \right). 
                \label{eqn:exchange_correction2}
\end{align}
Unlike the Madelung constant correction \cref{eqn:exchange_correction}, this correction depends on parameter $\varepsilon$ and also works for non-$\Gamma$-centered MP meshes $\mathcal{K}_\vq$ that is closed under inversion (recall that \cref{eqn:leading_nonsmooth} requires the inverse symmetry of $\mathcal{K}_\vq$ to remove the first-order term). 
For a $\Gamma$-centered mesh $\mathcal{K}_\vq$, this correction converges to the Madelung correction $\frac{N_\textup{occ}}{2}\xi$ when $\varepsilon \rightarrow 0$ by the facts that $\xi$ in \cref{eqn:madelung} is independent of $\varepsilon$
and its last two terms decay to zero when $\varepsilon\rightarrow 0$. 
Fixing $\varepsilon$, both corrections reduce the quadrature error to $\Or(N_\vk^{-1})$. 
\end{rem}

\begin{rem}[Madelung corrected  orbital energy]
In the above finite-size error analysis of the exchange and MP2 energies, the orbital energies at any $\vk$ point are assumed to be exact. 
However, there is also finite-size error in the orbital energy calculation even if assuming the orbital functions to be exact. 
Specifically, in the Hartree-Fock calculation with a finite MP mesh $\mathcal{K}$, the computation of an orbital energy $\varepsilon_{n\vk}$ contains a summation term
\[
 - \sum_{\vk_j\in\mathcal{K}}\sum_{j} \braket{j\vk_j,n\vk | n\vk, j\vk_j} \xrightarrow{N_\vk\rightarrow \infty} 
- \dfrac{1}{|\Omega^*|}\int_{\Omega^*}\ud \vk_j \sum_{j} R_{jnnj}(\vk_j, \vk, \vk - \vk_j). 
\]
Similar to the exchange energy, it could be shown that the quadrature error of this term scales as $\Or(N_\vk^{-1})$ if $n$ is a virtual band, and $\Or(N_\vk^{-\frac13})$ if $n$ is an occupied band. 
Following a similar discussion in \cref{thm:exchange_madelung_error}, it can be further proved that a Madelung constant correction, i.e.,
\[
\varepsilon_{i\vk_i}^\textup{corrected} = \varepsilon_{i\vk_i} + \xi,
\]
can reduce the quadrature error in each occupied orbital energy to $\Or(N_\vk^{-1})$. 
No correction is needed for the virtual orbitals. As a result, to achieve $\Or(N_\vk^{-1})$ finite-size error in practical MP2 energy or higher-order perturbation energy calculations, it is necessary to apply this Madelung constant correction to all occupied orbital energies. 
\label{rem:orbital_madelung}
\end{rem}

\subsection{Low-dimensional periodic systems}\label{subsec:lowdim}
The above error analysis for the exchange and MP2 energies is also applicable to quasi-1D and quasi-2D periodic systems, for which we consider a common model that uses the shifted Ewald kernel \cref{eqn:shifted_ewald} and samples $\vk$ points, i.e., $\mathcal{K}$, on the corresponding 1D axis and 2D plane in $\Omega^*$, respectively. 
Such an axis/plane in $\Omega^*$, denoted as $\Omega^*_\text{low}$ and illustrated in \cref{fig:omega_low}, always contains the $\Gamma$ point.
When using a $\Gamma$-centered MP mesh $\mathcal{K}$ in $\Omega^*_\text{low}$ for $\vk$ points, this model is equivalent to a supercell model where the supercell is extended in one or two periodic directions only, and the molecular orbitals in the numerical calculation satisfy the periodic boundary condition over the supercell.

\begin{figure}[htbp]
        \centering
        \includegraphics[width=0.4\textwidth]{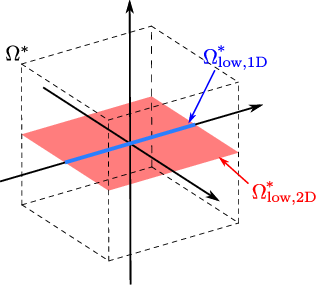}
        \caption{Illustration of $\Omega_\textup{low}^*$ for quasi-1D and quasi-2D systems.}
        \label{fig:omega_low}
\end{figure}

The energies of this low-dimensional model in the TDL can still be represented in integral forms  similar to \cref{eqn:exchange_TDL} and \cref{eqn:mp2_TDL}, sharing the same integrands but changing the integration domain for $\vk_i,\vk_j,\vk_a$ from $\Omega^*$ to the corresponding axis/plane $\Omega^*_\text{low}$, i.e., the integral $\frac{1}{|\Omega^*|}\int_{\Omega^*}\ud\vk$ is replaced by $\frac{1}{|\Omega^*_\text{low}|}\int_{\Omega^*_\text{low}}\ud\vk$ in the TDL.
Intermediate variables $\vq, \vq_1, \vq_2$ introduced in the analysis also lie in $\Omega_\text{low}^*$. 
The non-smooth terms in all the integrands are also in the same form.
An additional term is added to the exchange energy in the TDL due to the Madelung constant correction to the Ewald kernel (see \cref{appendix_low_dim}). 
Minor modifications are also needed in the numerical calculation of the correction based on singularity subtraction in \cref{eqn:exchange_correction2}. 

As detailed in \cref{appendix_low_dim}, the quadrature errors in the exchange and MP2 energy calculations by this  low-dimensional periodic model can be formulated as 
\begin{align*}
        E_\text{x,low}^\text{TDL} - E_\text{x,low}(N_\vk)
        & = 
        -\dfrac{1}{|\Omega^*|^2}
        \mathcal{E}_{\Omega_\text{low}^*\times \Omega_\text{low}^*}\left(\textstyle\sum_{ij}\wt{F}_\text{X}^{ij} - N_\text{occ}h_\varepsilon, \mathcal{K}\times\mathcal{K}_\vq\right) + \Or(N_\vk^{-1}),  
        \\
        E_\text{MP2, low}^\text{TDL} - E_\text{MP2, low}(N_\vk)
        & =
        \dfrac{1}{|\Omega^*_\text{low}|^3}
        \mathcal{E}_{(\Omega_\text{low}^*)^{\times 3}}\left(\sum_{ijab} F_\text{MP2,d}^{ijab}(\vk_i, \vk_j, \vk_a) + F_\text{MP2,x}^{ijab}(\vk_i, \vk_j, \vk_a), (\mathcal{K})^{\times 3}\right),      
\end{align*}
which both still scale as $\Or(m^{-d}) = \Or(N_\vk^{-1})$ for quasi-1D and quasi-2D systems by a similar discussion as for 3D systems. 
Here, $h_\varepsilon(\vk_i,\vq)$ is the auxiliary function in \cref{eqn:h_eps} that connects the Madelung constant correction with the singularity subtraction method.

\section{Removable discontinuity and staggered mesh method}\label{sec:removable}

Our analysis of the finite-size errors of exchange and MP2 energies is sharp for general systems.
However, the convergence rate can be improved for certain special systems with removable discontinuities. 
We first explain this concept (\Cref{sec:error_removable}), and then apply the analysis to corrected exchange energy calculations (\Cref{sec:removable_exchange}), and MP2 energy calculations (\Cref{sec:removable_mp2}). 
In particular, when the discontinuities are removable, and if $\mc{K}_\vq$ is closed under inversion and does not contain the point $\vq=\bm{0}$, the convergence rate can be improved to $o(N_{\vk}^{-1})$. 
Unfortunately, in standard exchange and MP2 calculations, $\mc{K}_{\vq}$ always includes $\vq=\bm{0}$. 
We demonstrate that a staggered mesh method is able to construct a $\mc{K}_{\vq}$ mesh that does not involve the point $\vq=\bm{0}$ for MP2 energy calculations (\Cref{sec:staggermp2}). 
We then propose a different staggered mesh method for exchange energy calculations (\Cref{subsec:staggered_ex}).
The staggered mesh method only requires some additional computation of orbitals and orbital energies.
In electronic structure calculations, these quantities can be evaluated non-self-consistently, and the additional cost can be negligible.

\subsection{Quadrature error for functions with removable discontinuity}\label{sec:error_removable}
The non-smooth terms in the corrected exchange and MP2 energy calculations that lead to dominant quadrature errors are of the fractional forms
\begin{equation}\label{eqn:removable_nonsmooth}
\begin{split}
&
\dfrac{f(\vq)H(\vq)}{|\vq|^2}\ \text{with}\  f(\vq) = \Or(|\vq|^2), 
\qquad 
\dfrac{f(\vq)H(\vq)}{|\vq|^4}\ \text{with}\  f(\vq) = \Or(|\vq|^4), 
\\
& 
\dfrac{f_1(\vq_1,\vq_2)H(\vq_1)}{|\vq_1|^2}\dfrac{f_2(\vq_1,\vq_2)H(\vq_2)}{|\vq_2|^2}\ \text{with}\  f_1 = \Or(|\vq_1|^2), \ f_2 = \Or(|\vq_2|^2),
\end{split}
\end{equation}
where the localizer $H(\vq)$ extracts the non-smooth parts out of the original integrands and all the numerators are smooth. 
Using \cref{cor:em_n_fraction},  these non-smooth terms  are shown to have $\Or(N_\vk^{-1})$ quadrature error.
This error estimate is generally sharp as supported by the numerical examples in \cref{fig:cor_em}. 


However, for certain type of integrands, the quadrature error of a trapezoidal rule can be improved. 
First consider a simple example: $\frac{f(\vq)}{|\vq|^{2}}$ with $f(\vq) = |\vq|^2$. This function  equals $1$ everywhere except at $\vq = \bm{0}$, where we set the indeterminate function to some arbitrary value (e.g., zero). 
Then the quadrature error of a trapezoidal rule equals zero if the uniform mesh does not contain $\vq = \bm{0}$ and $\Or(N_\vk^{-1})$ otherwise due to the artificially assigned value $0$ at $\vq = \bm{0}$.

Now consider the more general non-smooth term $g(\vq) = \frac{f(\vq)H(\vq)}{|\vq|^2}$ with $f(\vq) = \Or(|\vq|^2)$  in $V=[-\frac12,\frac12]^d$. 
If $f(\vq)$ can be expanded at $\vq = \bm{0}$ as
\begin{equation}\label{eqn:fq_quad_remove}
f(\vq) = C |\vq|^2 + r(\vq), \quad r(\vq) = \Or(|\vq|^3), 
\end{equation}
where $C$ denotes a generic constant, then the discontinuity of $g(\vq)$ at $\vq = \bm{0}$ becomes removable. 
Specifically, in this case, $\lim_{\vq \rightarrow \bm{0}} g(\vq) = C$ exists but $g(\bm{0})$ is indeterminate. 
We can redefine $g(\vq)$ as 
\[
\wt{g}(\vq)
        =\begin{cases}
        g(\vq), & \vq \neq \bm{0} \\
        C,            & \vq = \bm{0}
    \end{cases},
\]
which becomes continuous at $\vq = \bm{0}$.
Note that when \cref{eqn:fq_quad_remove} holds, we have $f(\vq)=\Or(|\vq|^2)$, but the converse may not be true. 

Since $\mathcal{I}_V(g) = \mathcal{I}_V(\wt{g})$, the quadrature error for $g(\vq)$ with a $\Gamma$-centered mesh $\mathcal{K}_\vq$ 
can be split as 
\begin{align}
        \mathcal{E}_{V}\left(g, \mathcal{K}_\vq\right)
        &= 
        \mathcal{I}_{V}\left(g \right) - \dfrac{|V|}{N_\vk}\sum_{\bm{0}\neq\vq \in \mathcal{K}_\vq} g(\vq) 
        = 
        \mathcal{I}_{V}\left(g\right) - \dfrac{|V|}{N_\vk}\sum_{\vq \in \mathcal{K}_\vq} \wt{g}(\vq) +
         \dfrac{|V|}{N_\vk} \wt{g}(\bm{0})
       \nonumber\\
         & = 
         \mathcal{E}_{V}\left(\wt{g}, \mathcal{K}_\vq\right)
         +
         \dfrac{|V|}{N_\vk} \wt{g}(\bm{0}),
         \label{eqn:error_removable}
\end{align}
where the first equality skips $\vq = \bm{0}$ as $g(\bm{0})$ is set to $0$ in the numerical quadrature. 
Further, we have
\[
\mathcal{E}_{V}\left(\wt{g}, \mathcal{K}_\vq\right)
=
\mathcal{E}_{V}\left(CH(\vq) , \mathcal{K}_\vq\right)
+ 
\mathcal{E}_{V}\left(\frac{r(\vq)H(\vq)}{|\vq|^2}, \mathcal{K}_\vq\right)
= 
\Or(N_\vk^{-1-\frac1d}),
\]
where the quadrature error of $CH(\vq)$ decays super-algebraically by \cref{cor:euler_maclaurin} and that of $\frac{r(\vq)H(\vq)}{|\vq|^2}$ scales as $\Or(m^{-d-1}) = \Or(N_\vk^{-1-\frac1d})$ by \cref{cor:em_n_fraction}. 
Combining the two equations above, the dominant quadrature error for $g(\vq)$ scales as $\Or(N_\vk^{-1})$ and solely comes from the term $\frac{|V|}{N_\vk} \wt{g}(\bm{0})$ in \cref{eqn:error_removable}. 
This dominant error could be avoided if the MP mesh $\mathcal{K}_\vq$ does not contain $\vq = \bm{0}$, in which case the quadrature error satisfies 
$
        \mathcal{E}_{V}\left(g, \mathcal{K}_\vq\right)
        = 
        \mathcal{E}_{V}\left(\wt{g}, \mathcal{K}_\vq\right)
        = 
        \Or(N_\vk^{-1-\frac1d}). 
$

We could further generalize the above discussion and show that if $f(\vq)$ can be expanded at $\vq = \bm{0}$ as 
\begin{equation}
f(\vq) = |\vq|^2r_1(\vq) + r_2(\vq), \quad r_2(\vq) = \Or(|\vq|^{2+s}), 
\label{eqn:fq_quadratic}
\end{equation}
with smooth functions $r_1(\vq), r_2(\vq)$, the quadrature error for $g(\vq)$ scales as 
\begin{equation}\label{eqn:error_removable_Kq}
\mathcal{E}_{V}\left(g(\vq), \mathcal{K}_\vq\right)
        = \left\{
\begin{array}{ll}
        \Or(N_\vk^{-1})                                                          & \bm{0}\in \mathcal{K}_\vq \\
        \Or(N_\vk^{-1-\frac{s}{d}})          & \bm{0}\not\in \mathcal{K}_\vq
\end{array}
\right.. 
\end{equation}


To demonstrate the validity of the analysis above, \cref{fig:removable} illustrates the performance of the trapezoidal rules over two simple examples.  
The shifted $\Gamma$-centered MP mesh is obtained from a half-mesh-size shift
of a $\Gamma$-centered mesh in all directions (see \cref{fig:staggered_Kq} for
an example of such a mesh).
The singularity of the integrand in \cref{fig:removable_a} is removable, and the shifted $\Gamma$-centered mesh method significantly outperforms the standard method, both in terms of the asymptotic scaling and the preconstant of the error. In \cref{fig:removable_b}, the singularity of the integrand is not removable. The asymptotic scaling of the two methods is the same, but the preconstant of the shifted $\Gamma$-centered mesh method is still smaller. The discussions for the other two non-smooth forms in \cref{eqn:removable_nonsmooth} are similar.

\begin{figure}
        \centering
        \captionsetup{justification=centering}
        \subfloat[$\frac{H(\vx) \sin^2(|\vx|)}{|\vx|^2}$\label{fig:removable_a}]
        {
                \includegraphics[width=0.38\textwidth]{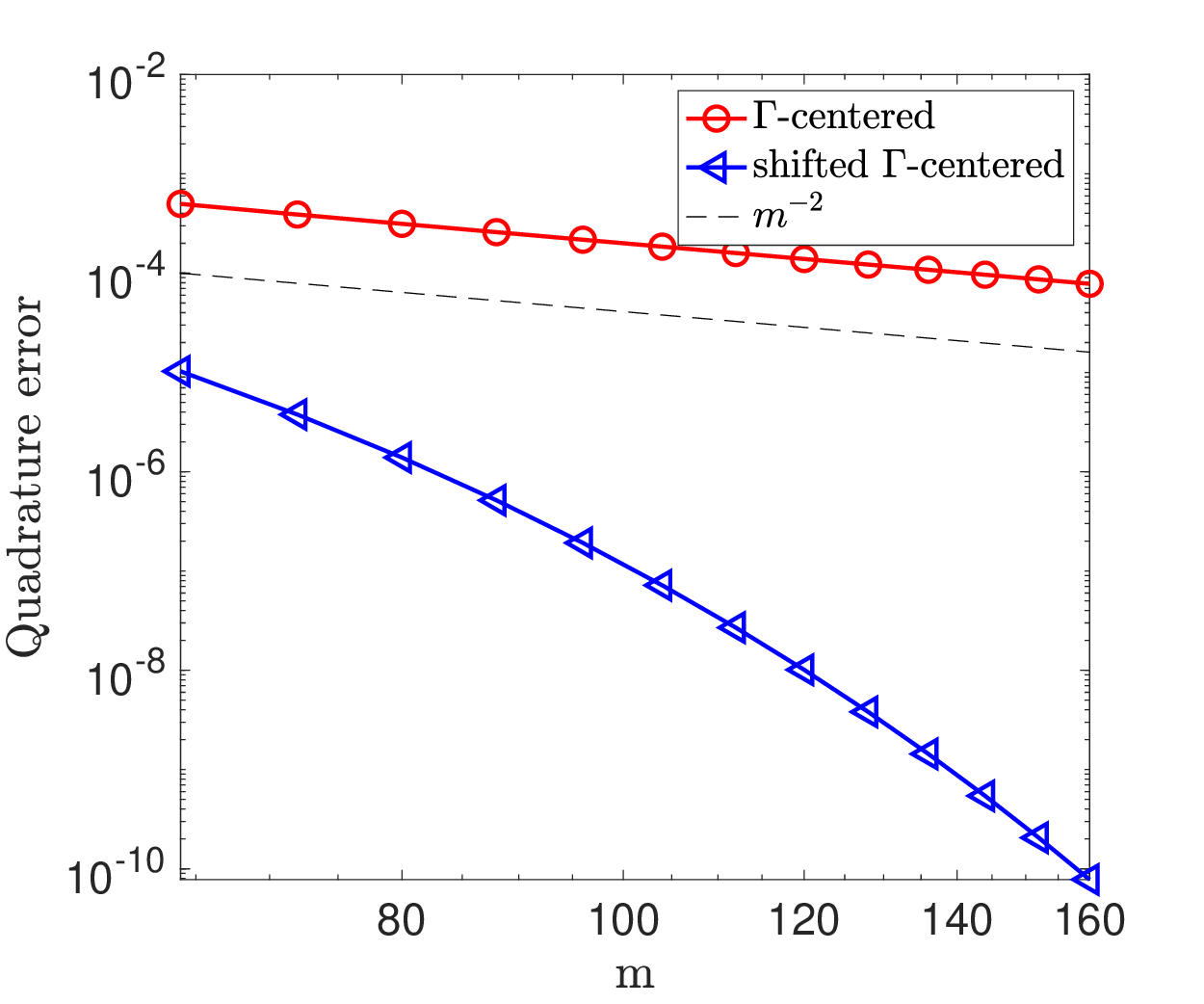}
        }
        \hspace{1em}
        \subfloat[$\frac{H(\vx) \sin^2(\sqrt{\vx^T M \vx})}{|\vx|^2}$, $M = \text{diag}(1, 100)$\label{fig:removable_b}]
        {
                \includegraphics[width=0.38\textwidth]{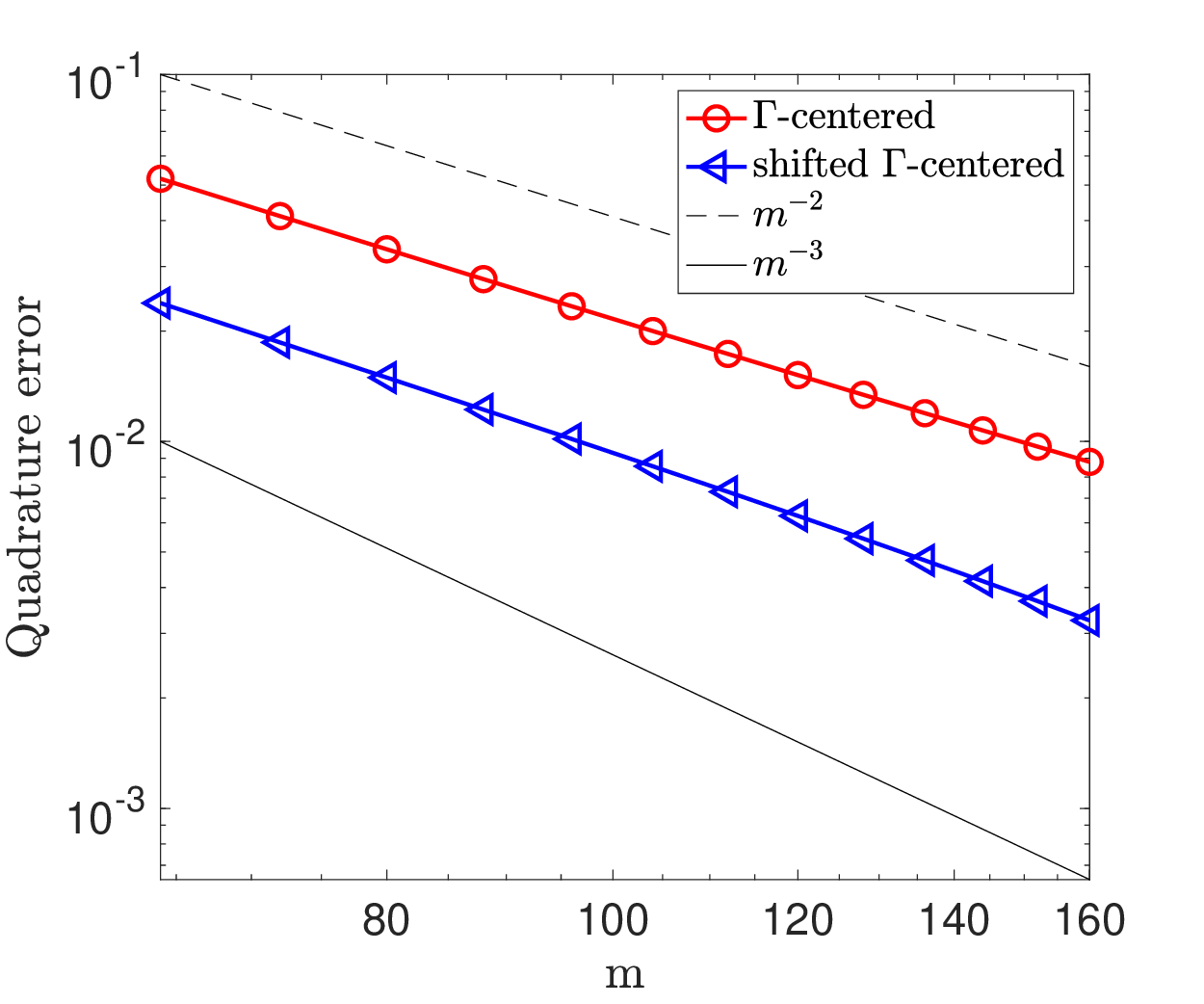}
        }
        \caption{
                Quadrature error of trapezoidal rules over functions with or without removable discontinuities.
                Consider a two-dimensional integration domain $[-\frac12, \frac12]\times[-1,1]$, and $H(\vx)$ is the localizer in \cref{eqn:localizer}.
                Two $m\times m$ MP meshes are used: the $\Gamma$-centered mesh and the half-mesh-size shift of the $\Gamma$-centered mesh. 
                The functions in (a) and (b) have removable and non-removable discontinuities at $\vx =\bm{0}$, respectively. 
                The  shifted MP mesh has super-algebraically decaying error in (a) according to \cref{eqn:error_removable_Kq}. 
                \label{fig:removable}
        }
\end{figure}

\subsection{Finite-size error of Fock exchange energy with removable discontinuity}\label{sec:removable_exchange}

Consider the corrected exchange energy $E_\text{X}^\text{corrected, 2}(N_\vk)$ in \cref{eqn:exchange_correction2} for 3D periodic systems. 
According to \cref{eqn:singularity_subtraction}, the quadrature error of the calculation writes as
\[
E_\text{X}^\text{TDL} - E_\text{X}^\text{corrected, 2}(N_\vk) = 
 -\dfrac{1}{|\Omega^*|^2}
 \mathcal{E}_{\Omega^*\times \Omega^*}(\textstyle\sum_{ij}\wt{F}_\text{X}^{ij}- N _\text{occ}h_\varepsilon, \mathcal{K}\times\mathcal{K}_\vq).
\]
Recall that $\sum_{ij}\wt{F}_\text{X}^{ij}(\vk_i, \vq)- N _\text{occ}h_\varepsilon(\vk_i, \vq) =:\wt{G}(\vk_i, \vq)$ is smooth and periodic with respect to $\vk_i$, and its quadrature error is dominated by the quadrature over $\vq$. 
We could require $\wt{G}(\vk_i, \vq)$ to have removable discontinuity at $\vq = \bm{0}$ for each $\vk_i$. 
Such a condition is sufficient to guarantee $o(N_\vk^{-1})$ quadrature error when $\mathcal{K}_\vq$ does not contain $\vq = \bm{0}$, but this is too strong. 
Specifically, similar to the discussion in \cref{eqn:exchange_partial_q}, we could integrate $\wt{G}$ over $\vk_i$ first and show that 
\begin{align*}
\mathcal{E}_{\Omega^*\times \Omega^*}(\wt{G} , \mathcal{K}\times\mathcal{K}_\vq)
& = 
\mathcal{E}_{\Omega^*}\left(\int_{\Omega^*}\ud\vk_i\wt{G}(\vk_i, \cdot) , \mathcal{K}_\vq\right)
+ 
\dfrac{|\Omega^*|}{N_\vk}\sum_{\vq \in \mathcal{K}_\vq}\mathcal{E}_{\Omega^*}\left(\wt{G}(\cdot, \vq) , \mathcal{K}_\vq\right)
\\
& \lesssim 
\mathcal{E}_{\Omega^*}\left(\int_{\Omega^*}\ud\vk_i\wt{G}(\vk_i, \cdot) , \mathcal{K}_\vq\right)
\end{align*}
where the omitted term decays super-algebraically.
Thus, it is sufficient to require 
\[
\int_{\Omega^*}\ud\vk_i \wt{G}(\vk_i, \vq)
= 
\sum_{\vG}
 \dfrac{\int_{\Omega^*}\ud\vk_i \left(\sum_{ij}r_{ijji}(\vk_i, \vk_i+\vq, \vq+\vG) - \frac{4\pi N_\text{occ}}{|\Omega|}e^{-\varepsilon|\vq+\vG|^2}\right)}{|\vq + \vG|^2}
\]
as a function of $\vq\in \Omega^*$ to have removable discontinuity at $\vq = \bm{0}$. 
The non-smooth term of this function is associated with $\vG = \bm{0}$ and also of the fractional form. 
From \cref{eqn:fq_quadratic} and \cref{eqn:error_removable_Kq}, the condition of removable discontinuity can be simplified as
\begin{equation}\label{eqn:removecond_fock}
\int_{\Omega^*}\ud \vk_i \left(\sum_{ij} r_{ijji}(\vk_i,\vk_i+\vq,\vq)  - \frac{4\pi N_\text{occ}}{|\Omega|}\right) = C|\vq|^2 + \Or(|\vq|^4),
\end{equation}
where the first and third order terms are removed implicitly by the assumption that $\mathcal{K}_\vq$ is closed under inversion. 
Under this condition, \cref{thm:exchange_removable} gives the convergence rate of the corrected exchange energy
which depends on whether $\mathcal{K}_\vq$ contains $\vq = \bm{0}$ or not. 

\begin{thm}[Corrected exchange energy for 3D periodic systems with removable discontinuity]\label{thm:exchange_removable}
        If the condition \cref{eqn:removecond_fock} holds and for an MP mesh $\mathcal{K}_\vq$ that is closed under inversion, the finite size error of the corrected exchange energy scales as
        \[
        E_\textup{x}^\textup{TDL} - E_\textup{x}^\textup{corrected,2}(N_\vk) = 
        \left\{
        \begin{array}{ll}
                \Or(N_\vk^{-1}) & \bm{0} \in \mathcal{K}_\vq \\
                \Or(N_\vk^{-\frac{5}{3}}) & \bm{0} \not\in \mathcal{K}_\vq \\
        \end{array}
        \right. .
        \]
\end{thm}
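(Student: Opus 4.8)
The plan is to reuse the reduction already carried out immediately before the statement. Starting from
\[
E_\textup{x}^\textup{TDL} - E_\textup{x}^\textup{corrected,2}(N_\vk) = -\dfrac{1}{|\Omega^*|^2}\,\mathcal{E}_{\Omega^*\times\Omega^*}\!\big(\wt{G},\,\mathcal{K}\times\mathcal{K}_\vq\big),\qquad \wt{G}(\vk_i,\vq):=\textstyle\sum_{ij}\wt{F}_\textup{x}^{ij}(\vk_i,\vq)-N_\textup{occ}h_\varepsilon(\vk_i,\vq),
\]
and using that $\wt{G}$ is smooth and periodic in $\vk_i$, the same argument as in \cref{eqn:exchange_partial_q} shows that, up to a super-algebraically small term, the error equals $\mathcal{E}_{\Omega^*}\!\big(\int_{\Omega^*}\ud\vk_i\,\wt{G}(\vk_i,\cdot),\,\mathcal{K}_\vq\big)$. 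First I would split this partial integral, exactly as in \cref{subsec:extraction}, into the $\vG=\bm{0}$ contribution
\[
g(\vq):=\dfrac{\phi(\vq)H(\vq)}{|\vq|^2},\qquad \phi(\vq):=\int_{\Omega^*}\ud\vk_i\Big(\textstyle\sum_{ij}r_{ijji}(\vk_i,\vk_i+\vq,\vq)-\tfrac{4\pi N_\textup{occ}}{|\Omega|}e^{-\varepsilon|\vq|^2}\Big),
\]
plus a remainder that gathers the $\vG\neq\bm{0}$ tail of \cref{eqn:R_function} together with the full $N_\textup{occ}h_\varepsilon$; by construction that remainder is smooth and periodic on $\Omega^*$, so its quadrature error is super-algebraically small by \cref{cor:euler_maclaurin}, and only $\mathcal{E}_{\Omega^*}(g,\mathcal{K}_\vq)$ contributes.

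Next I would check the hypotheses of \cref{eqn:error_removable_Kq} for $g$. Smoothness of $r_{ijji}$ in $(\vk_i,\vq)$, periodicity in $\vk_i$, and compactness of $\Omega^*$ permit differentiation under the integral sign, so $\phi$ is smooth and $\phi H$ is smooth and compactly supported in $\Omega^*$. Expanding $e^{-\varepsilon|\vq|^2}=1-\varepsilon|\vq|^2+\Or(|\vq|^4)$ and inserting hypothesis \cref{eqn:removecond_fock} gives $\phi(\vq)=C|\vq|^2+\Or(|\vq|^4)$, where the first- and third-order terms may be dropped because $\mathcal{K}_\vq$ is closed under inversion: as in \cref{rem:cubic}, one replaces $g(\vq)$ by $\tfrac12(g(\vq)+g(-\vq))$ in the trapezoidal sum, which annihilates every odd-order contribution to $\phi$. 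Hence $\phi$ has the form \cref{eqn:fq_quadratic} with $d=3$ and $s=2$.

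Finally, \cref{eqn:error_removable_Kq} applied with $d=3$ and $s=2$ yields $\mathcal{E}_{\Omega^*}(g,\mathcal{K}_\vq)=\Or(N_\vk^{-1})$ when $\bm{0}\in\mathcal{K}_\vq$ and $\mathcal{E}_{\Omega^*}(g,\mathcal{K}_\vq)=\Or(N_\vk^{-1-2/3})=\Or(N_\vk^{-5/3})$ when $\bm{0}\notin\mathcal{K}_\vq$, which is exactly the asserted scaling. The main obstacle is not this final bookkeeping but the two structural points beneath it: (i) verifying that, after the $\vk_i$-integration, the $\vG\neq\bm{0}$ part of $\sum_{ij}\wt{F}_\textup{x}^{ij}$ together with $N_\textup{occ}h_\varepsilon$ really do assemble into a genuinely smooth, periodic function on $\Omega^*\times\Omega^*$, so that $g$ is the sole surviving non-smooth term; and (ii) tracking the symmetrization carefully, since the whole improvement from $\Or(N_\vk^{-1})$ to $\Or(N_\vk^{-5/3})$ rests on hypothesis \cref{eqn:removecond_fock} being precisely the condition that leaves no cubic term in $\phi$ once the odd orders are killed by the inversion symmetry of $\mathcal{K}_\vq$.
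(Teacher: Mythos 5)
Your proposal follows the paper's own route for \cref{thm:exchange_removable}: reduce to the $\vq$-quadrature of the partially integrated $\wt{G}=\sum_{ij}\wt{F}_\textup{x}^{ij}-N_\textup{occ}h_\varepsilon$, extract the $\vG=\bm{0}$ fractional term with the localizer, and apply the removable-discontinuity estimate \cref{eqn:error_removable_Kq} with $d=3$, $s=2$ after using the inversion symmetry of $\mathcal{K}_\vq$ to discard odd orders — exactly the argument given in \Cref{sec:removable_exchange}, with your Gaussian expansion making explicit a step the paper leaves implicit. One wording slip: the smooth remainder cannot contain the ``full'' $N_\textup{occ}h_\varepsilon$ (its $\vG=\bm{0}$ part, localized by $H$, already sits inside your $\phi$), but your explicit formula for $g$ fixes the intended, correct splitting.
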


\begin{rem}[Systems with cubic unit cells]\label{rem:cubic}
Denote the left hand side of the condition \cref{eqn:removecond_fock} as a function $J(\vq)$. When the unit cell $\Omega$ is a cube (thus $\Omega^*$ is a cube centered at the origin) and $\mathcal{K}_\vq$ is cubically symmetric around $\vq = \bm{0}$, the quadrature error of $J(\vq)$ can be equivalently represented as 
\begin{align*}
        \mathcal{E}_{\Omega^*}(J(\vq), \mathcal{K}_\vq) 
        & = 
        \mathcal{E}_{\Omega^*}\left(\dfrac{1}{48}\sum_{\vq' \sim \vq} J(\vq'), \mathcal{K}_\vq\right),
\end{align*}
where `$\sim$' denotes the equivalence among vectors $(\pm q_1, \pm q_2, \pm q_3)$ and all their permutations. 
It can be verified that $\wt{J}(\vq):= \frac{1}{48}\sum_{\vq' \sim \vq} J(\vq') = C|\vq|^2 + \Or(|\vq|^4)$, satisfying the condition \cref{eqn:removecond_fock}. 
Thus, for a 3D periodic system with a cubic unit cell, the corrected exchange energy calculation with a cubically symmetric mesh $\mathcal{K}_\vq$ always has its integrand effectively satisfying the removable discontinuity condition.
This proves the observation  in \cite{DrummondNeedsSorouriEtAl2008} for  the special role of cubic symmetry for exchange energy calculations.
\end{rem}

\begin{rem}[Low-dimensional systems]
        For low-dimensional systems, the corrected exchange energy $E_\textup{x,low}^\textup{corrected,2}$ is defined in \cref{eqn:exchange_correction2_low}, and has $\Or(N_\vk^{-1})$ finite-size error when $\bm{0}\in\mathcal{K}_\vq$.
        The removable discontinuity conditions are similar to \cref{eqn:removecond_fock} simply with $\Omega^*$ replaced by $\Omega_\textup{low}^*$.
        For quasi-2D systems under the condition, the finite-size error scales as $\Or(N_\vk^{-2})$ when $\bm{0}\not\in\mathcal{K}_\vq$.       
        For quasi-1D systems, the integrand discontinuity is always removable, and the finite-size error decays super-algebraically when $\bm{0}\not\in \mathcal{K}_\vq$. 
\end{rem}

\subsection{Finite-size error of MP2 energy with removable discontinuity}\label{sec:removable_mp2}

For the MP2 energy, we can first integrate over $\vk_i, \vk_j$ for the direct term and $\vk_i$ for the exchange term, and then similarly show that the overall quadrature errors in the two terms are dominated by those of 
\[
\int_{\Omega^*}\ud\vk_i \int_{\Omega^*}\ud\vk_j\sum_{ijab}\wt{F}_\text{MP2,d}^{ijab}(\vk_i, \vk_j, \vq)
\quad \text{and} \quad 
\int_{\Omega^*}\ud\vk_i \sum_{ijab}\wt{F}_\text{MP2,x}^{ijab}(\vk_i, \vq_1,\vq_2)
\]
over $\vq$ and  $\vq_1, \vq_2$, respectively.
The partial integration for the direct term can be detailed as
\[
         \sum_{\vG,\vG'} 
        \dfrac{ \iint\ud\vk_i\ud\vk_j\sum_{ijab}E_{ijab}(\vk_i,\vk_j,\vq) r_{ijab}(\vk_i,\vk_j, \vq+\vG)
                r_{abij}(\vk_i+\vq,\vk_j-\vq, -\vq-\vG')}{|\vq+\vG|^2|\vq + \vG'|^2},
\]
where the non-smooth terms are associated with 1) $\vG=\bm{0},\vG'\neq \bm{0}$, or  $\vG\neq\bm{0},\vG'=\bm{0}$, and 2) $\vG=\vG'= \bm{0}$. 
The first case corresponds to a denominator $|\vq|^2$ and the second one corresponds to $|\vq|^4$.
Gathering these two types of terms separately, the condition of removable discontinuity can be written as 
\begin{align}   
        \sum_{\vG'\neq \bm{0}}\left.\dfrac{\iint\ud\vk_i\ud\vk_j\sum_{ijab}E_{ijab} r_{ijab}r_{ijab}}{|\vq + \vG'|^2}\right|_{\vG = \bm{0}} + \sum_{\vG\neq \bm{0}}\left.\dfrac{\iint\ud\vk_i\ud\vk_j\sum_{ijab}E_{ijab} r_{ijab}r_{ijab}}{|\vq + \vG|^2}\right|_{\vG' = \bm{0}}  &= C_1 |\vq|^2 + \Or(|\vq|^4), 
        \nonumber\\      
        \left.\iint\ud\vk_i\ud\vk_j\sum_{ijab}E_{ijab} r_{ijab}r_{ijab}\right|_{\vG=\vG' = \bm{0}} &= C_2 |\vq|^4 + \Or(|\vq|^6).
        \label{eqn:removecond_mp2_d}
\end{align}
Note that all the odd order terms are removed by the assumption that $\mathcal{K}_\vq$ for $\vq$ is closed under inversion. 

Similarly, the partial integration for the exchange term can be detailed as 
\[
\sum_{\vG,\vG'} 
\dfrac{ \int\ud\vk_i\sum_{ijab}E_{ijab}(\vk_i,\vk_i+\vq_1-\vq_2,-\vq_2)r_{ijba}(\vk_i,\vk_i+\vq_1-\vq_2, \vq_1+\vG)
        r_{abij}(\vk_i-\vq_2,\vk_i+\vq_1, \vq_2+\vG')}{|\vq_1+\vG|^2|\vq_2 + \vG'|^2}. 
\]
where the non-smooth terms are associated with 1) $\vG=\bm{0},\vG'\neq \bm{0}$, 2) $\vG\neq\bm{0},\vG'=\bm{0}$, and 3) $\vG=\vG'= \bm{0}$, corresponding to 
denominators $|\vq_1|^2, |\vq_2|^2$, and $|\vq_1|^2|\vq_2|^2$, respectively.
The condition of removable discontinuity then can be written as 
\begin{equation}\label{eqn:removecond_mp2_x}
        \begin{split}
                \sum_{\vG'\neq \bm{0}}\left.\dfrac{\int\ud\vk_i\sum_{ijab}E_{ijab} r_{ijba}r_{ijab}}{|\vq_2 + \vG'|^2}\right|_{\vG = \bm{0}} 
                & = C_3(\vq_2)|\vq_1|^2 + \Or(|\vq_1|^4),
                \\
                \sum_{\vG\neq \bm{0}}\left.\dfrac{\int\ud\vk_i\sum_{ijab}E_{ijab} r_{ijba}r_{ijab}}{|\vq_1 + \vG|^2}\right|_{\vG' = \bm{0}} 
                & = C_4(\vq_1)|\vq_2|^2 + \Or(|\vq_2|^4),
                \\
                \left.\int\ud\vk_i\sum_{ijab}E_{ijab} r_{ijba}r_{ijab}\right|_{\vG=\vG' = \bm{0}} 
                & = C_5(|\vq_1|^2 + \Or(|\vq_1|^4))(|\vq_2|^2 + \Or(|\vq_2|^4)),
        \end{split}
\end{equation}
where $C_3(\vq_2)$ and $C_4(\vq_1)$ denote two generic smooth functions.
Under these conditions, \cref{thm:mp2_removable} gives the convergence rate of the MP2 energy which depends on whether $\mathcal{K}_\vq$ contains $\vq  = \bm{0}$. 

\begin{thm}[MP2 energy for 3D periodic systems with removable discontinuity]\label{thm:mp2_removable}
        If the conditions \cref{eqn:removecond_mp2_d} and \cref{eqn:removecond_mp2_x} hold and for an MP mesh $\mathcal{K}_\vq$ that is closed under inversion, the finite size error of the MP2 energy scales as
        \[
        E_\textup{mp2}^\textup{TDL} - E_\textup{mp2}(N_\vk) = 
        \left\{
        \begin{array}{ll}
                \Or(N_\vk^{-1}) & \bm{0} \in \mathcal{K}_\vq \\
                \Or(N_\vk^{-\frac{5}{3}}) & \bm{0} \not\in \mathcal{K}_\vq \\
        \end{array}
        \right. .
        \]
\end{thm}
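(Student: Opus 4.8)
The plan is to follow the template of \cref{thm:exchange_removable}: reduce the quadrature error of each of the MP2 direct and exchange terms to that of a partial integral over the singular $\vq$-variables (so that hypotheses \cref{eqn:removecond_mp2_d,eqn:removecond_mp2_x}, which are conditions on exactly these partial integrals, become directly usable), extract the fractional-form non-smooth pieces with localizers, and estimate each piece using the one-variable removable-discontinuity estimate \cref{eqn:error_removable_Kq} together with \cref{cor:em_n_fraction} and \cref{cor:euler_maclaurin}.

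First I would reduce. As in \cref{thm:mp2_error}, after the changes of variables $\wt F_{\textup{mp2,d}}^{ijab}(\vk_i,\vk_j,\vq)$ is smooth and periodic in $\vk_i,\vk_j$, and $\wt F_{\textup{mp2,x}}^{ijab}(\vk_i,\vq_1,\vq_2)$ is smooth and periodic in $\vk_i$. Splitting the trapezoidal quadrature over the product mesh into the quadrature over $\vq$ (resp.\ $(\vq_1,\vq_2)$) of the $\vk_i,\vk_j$- (resp.\ $\vk_i$-)integral, plus a $\vq$-averaged quadrature error in the smooth variables, the latter decays super-algebraically by \cref{cor:euler_maclaurin} with prefactors uniform in $\vq$. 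Hence it suffices to bound the quadrature error of $D(\vq):=\iint \ud\vk_i\ud\vk_j\sum_{ijab}\wt F_{\textup{mp2,d}}^{ijab}$ over $\mathcal{K}_\vq$ and of $X(\vq_1,\vq_2):=\int\ud\vk_i\sum_{ijab}\wt F_{\textup{mp2,x}}^{ijab}$ over $\mathcal{K}_\vq\times\mathcal{K}_\vq$. Since $\mathcal{K}_\vq$ is closed under inversion, I may replace $D$ by $\tfrac12(D(\vq)+D(-\vq))$ and $X$ by $\tfrac12(X(\vq_1,\vq_2)+X(-\vq_1,-\vq_2))$ without changing the quadrature sums; this eliminates all odd-order terms in the expansions below (cf.\ \cref{rem:cubic}), which is precisely why hypotheses \cref{eqn:removecond_mp2_d,eqn:removecond_mp2_x} are stated only with even-order remainders.

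Next I extract non-smoothness with localizers. Using \cref{eqn:R_function} and \cref{eqn:expanion_rijab}, $D(\vq)$ decomposes, modulo a smooth periodic remainder with super-algebraic error, into a $|\vq|^{-2}$-singular part and a $|\vq|^{-4}$-singular part, whose numerators are by \cref{eqn:removecond_mp2_d} of the form $C_1|\vq|^2+\Or(|\vq|^4)$ and $C_2|\vq|^4+\Or(|\vq|^6)$. The first is exactly the situation of \cref{eqn:fq_quadratic,eqn:error_removable_Kq} with $s=2$, $d=3$; the second has a removable discontinuity at $\vq=\bm{0}$ with limit $C_2$, and after redefining equals $C_2H(\vq)$ (super-algebraic error) plus $r(\vq)H(\vq)/|\vq|^4$ with $r=\Or(|\vq|^6)$, which by \cref{cor:em_n_fraction} has $\Or(m^{-(3+2)}\ln m)=\Or(N_\vk^{-\frac53})$ error, plus a boundary term $\tfrac{|\Omega^*|}{N_\vk}\wt g(\bm{0})$ of size $\Or(N_\vk^{-1})$ present only when $\bm{0}\in\mathcal{K}_\vq$. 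Similarly $X(\vq_1,\vq_2)$ decomposes into $|\vq_1|^{-2}$-, $|\vq_2|^{-2}$-, and $|\vq_1|^{-2}|\vq_2|^{-2}$-singular parts. For the first two, \cref{eqn:removecond_mp2_x} gives numerators $C_3(\vq_2)|\vq_1|^2+\Or(|\vq_1|^4)$ and $C_4(\vq_1)|\vq_2|^2+\Or(|\vq_2|^4)$ with $C_3,C_4$ smooth; integrating out the non-singular $\vq$-variable first (super-algebraic contribution by \cref{cor:euler_maclaurin}, using periodicity in that variable) reduces each to the one-variable setting of \cref{eqn:error_removable_Kq}. For the last part, \cref{eqn:removecond_mp2_x} says the numerator has the bi-graded expansion $C_5|\vq_1|^2|\vq_2|^2+|\vq_1|^2\Or(|\vq_2|^4)+\Or(|\vq_1|^4)|\vq_2|^2+\Or(|\vq_1|^4)\Or(|\vq_2|^4)$; dividing by $|\vq_1|^2|\vq_2|^2$ yields a smooth compactly supported term (super-algebraic), two single-variable fractional terms with $\gamma=2$ (each reduced as before to \cref{cor:em_n_fraction} in one variable, giving $\Or(N_\vk^{-\frac53})$), and a genuine product of two fractions with $\gamma_1=\gamma_2=2$, handled directly by \cref{cor:em_n_fraction} with $n=2$, giving $\Or(m^{-(3+2)}\ln m)=\Or(N_\vk^{-\frac53})$. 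All of these last-mentioned pieces have numerators vanishing to order $\geqslant 3$ at the singular points, so their limits there are $0$ and they contribute no $1/N_\vk$ boundary term. Collecting: the summations over $i,j,a,b$ are harmless since there are finitely many bands and (as in \cref{thm:exchange_error}) the prefactors are uniform in the band indices and in the $\vk$-variable quadrature; if $\bm{0}\in\mathcal{K}_\vq$ the three $|\vq|^{-2}$-type pieces contribute $\Or(N_\vk^{-1})$ while everything else is $\Or(N_\vk^{-\frac53})$, and if $\bm{0}\notin\mathcal{K}_\vq$ every piece is $\Or(N_\vk^{-\frac53})$; absorbing logarithms into the paper's big-O convention gives the stated dichotomy.

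The main obstacle is the bookkeeping in the extraction step: one must exhibit the decompositions of $D$ and $X$ as honest fractional forms \cref{eqn:fractional_form} so that \cref{cor:em_n_fraction} applies, which requires Hadamard-type factorizations matching the vanishing orders encoded in \cref{eqn:removecond_mp2_d,eqn:removecond_mp2_x} — in particular, writing a smooth numerator that is $\Or(|\vq_1|^4)$ uniformly in $\vq_2$ and $\Or(|\vq_2|^4)$ uniformly in $\vq_1$ as a finite sum $\sum_{|\valpha|=|\vbeta|=4}\vq_1^{\valpha}\vq_2^{\vbeta}c_{\valpha\vbeta}(\vq_1,\vq_2)$ with $c_{\valpha\vbeta}$ smooth — and verifying that integrating out a non-singular variable over the compact support of the localizer preserves both the vanishing order and the smoothness of the numerator (differentiation under the integral sign). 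Once these structural lemmas are in place, the error estimates follow immediately from \cref{cor:em_n_fraction}, \cref{cor:euler_maclaurin}, and \cref{eqn:error_removable_Kq}, exactly paralleling the proof of \cref{thm:exchange_removable}.
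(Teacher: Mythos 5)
Your proposal is correct and follows essentially the same route as the paper: reduce to the partial integrals of $\wt{F}_\text{mp2,d}$ and $\wt{F}_\text{mp2,x}$ over the singular variables, use the inversion symmetry of $\mathcal{K}_\vq$ to discard odd-order terms, and combine the removability conditions \cref{eqn:removecond_mp2_d} and \cref{eqn:removecond_mp2_x} with \cref{cor:em_n_fraction} and the mechanism of \cref{eqn:error_removable_Kq} (here $s=2$, $d=3$) to obtain $\Or(N_\vk^{-1})$ when $\bm{0}\in\mathcal{K}_\vq$ and $\Or(N_\vk^{-\frac53})$ otherwise. The only minor imprecision is attributing the $\Or(N_\vk^{-1})$ node contribution in the $\bm{0}\in\mathcal{K}_\vq$ case solely to the $|\vq|^{-2}$-type pieces, whereas the $|\vq|^{-4}$ piece and the $|\vq_1|^{-2}|\vq_2|^{-2}$ piece (via the $\Or(N_\vk)$ nodes on the hyperplanes $\vq_1=\bm{0}$ or $\vq_2=\bm{0}$) also contribute at that order; this does not affect either case of the stated bound.
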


\begin{rem}[Low-dimensional systems]\label{rem:mp2_lowdim}
        The removable discontinuity conditions can be similarly derived  with $\Omega^*$ replaced by $\Omega_\textup{low}^*$ for low-dimensional systems.
        For quasi-2D systems under these conditions, the finite-size error scales as $\Or(N_\vk^{-2})$ when $\bm{0}\not\in \mathcal{K}_\vq$ .      
        For quasi-1D systems, the integrand discontinuity is always removable, and the finite-size error decays super-algebraically when $\bm{0}\not\in \mathcal{K}_\vq$. 
\end{rem}

\subsection{Staggered mesh method for MP2}\label{sec:staggermp2}
As shown in the earlier analysis,  when the integrand in the corresponding calculations have removable discontinuities, the dominant $\Or(N_\vk^{-1})$ quadrature error solely comes from including the point of discontinuity $\bm{0}$ in $\mathcal{K}_\vq$. 
This $\Or(N_\vk^{-1})$ error can be avoided  by constructing an MP mesh $\mathcal{K}_\vq$ that does not contain $\bm{0}$ for variables $\vq, \vq_1, \vq_2$, and thus the overall quadrature error could be smaller than $\Or(N_\vk^{-1})$. 
This observation is the main idea in the recently proposed staggered mesh method \cite{XingLiLin2021} for MP2 energy calculations. 
This method computes the MP2 energy using a non-$\Gamma$-centered MP mesh $\mathcal{K}_\vq$. 

Specifically, an MP mesh $\mathcal{K}_\text{occ}$ is used for occupied momentum vectors $\vk_i,\vk_j$, and a different, same-sized MP mesh $\mathcal{K}_\text{vir}$ is used for virtual momentum vectors $\vk_a,\vk_b$, where $\mathcal{K}_\text{vir}$ is obtained by shifting $\mathcal{K}_\text{occ}$ with half mesh size in all extended directions. 
By this choice, $\vq$ in $\wt{F}^{ijab}_\text{MP2,d}(\vk_i,\vk_j,\vq)$ and $\vq_1,\vq_2$ in $\wt{F}^{ijab}_\text{MP2,x}(\vk_i,\vq_1, \vq_2)$ still share the same mesh $\mathcal{K}_\vq$ in the numerical quadrature, but now $\mathcal{K}_\vq$ is obtained from the half-mesh-size shift of a $\Gamma$-centered $N_\vk$-sized MP mesh in $\Omega^*$.
See \cref{fig:staggered_Kq} for a 2D illustration. 
Note that $\mathcal{K}_\vq$ is closed under inversion. 

\begin{figure}[htbp]
        \centering
        \includegraphics[width=0.8\textwidth]{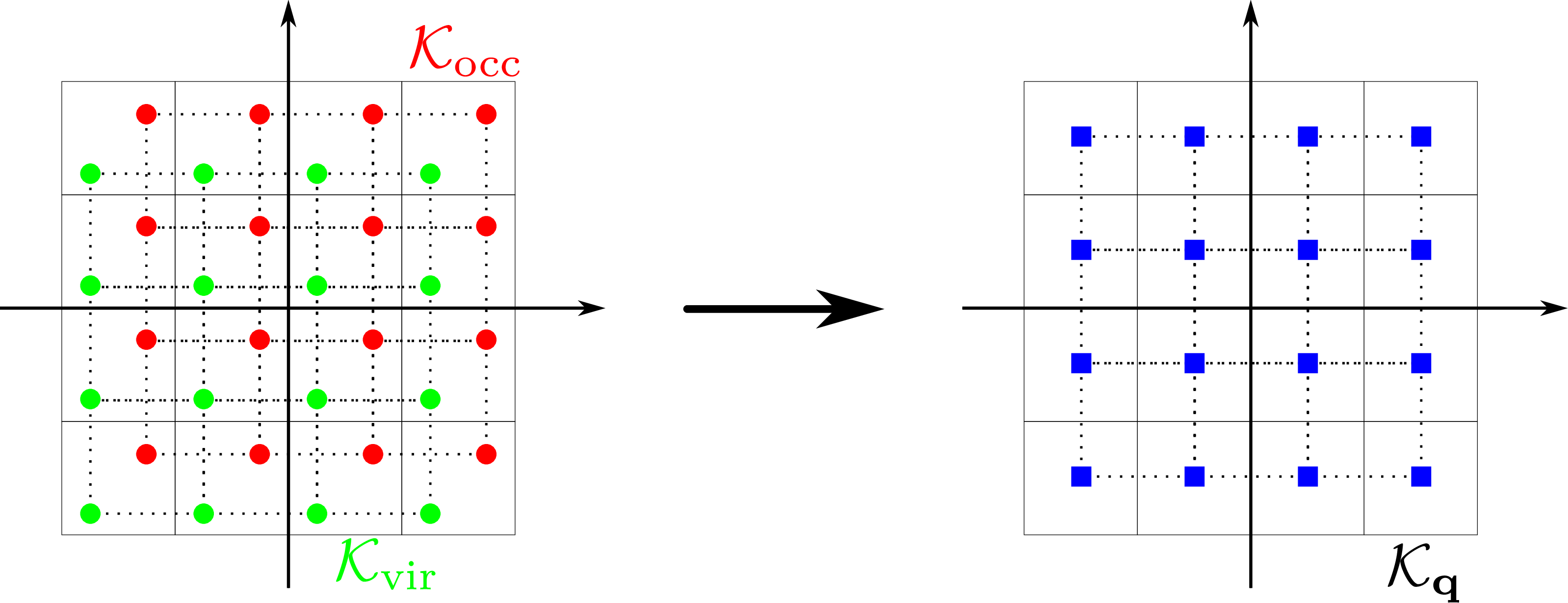}
        \caption{2D illustration of the staggered mesh method. The MP mesh $\mathcal{K}_\vq$ induced by two staggered MP meshes $\mathcal{K}_\textup{occ}$ and $\mathcal{K}_\textup{vir}$ is the half-mesh-size shift of a $\Gamma$-centered MP mesh in $\Omega^*$.}
        \label{fig:staggered_Kq}
\end{figure}

Ref. \cite{XingLiLin2021} does not contain a rigorous proof of the effectiveness of the staggered mesh method. 
Based on \cref{thm:mp2_removable} and \cref{rem:mp2_lowdim}, together with the fact that the new $\mathcal{K}_\vq$ is closed under inversion, 
we can prove in \cref{cor:mp2_staggered} that the quadrature error of the staggered mesh method is $o(N_\vk^{-1})$ quadrature error in the MP2 energy calculation when the integrand have removable discontinuities. 
Especially in the quasi-1D case, the integrand discontinuity is always removable and, more importantly, the integrand becomes smooth after the removal, leading to the super-algebraically decaying quadrature error. 

\begin{cor}[Staggered mesh method for MP2 correlation energy]
\label{cor:mp2_staggered}
        For quasi-1D systems, the staggered mesh method for MP2 energy calculation has super-algebraically decaying quadrature error. 
        For general quasi-2D and 3D systems, the quadrature errors both scale as $\Or(N_\vk^{-1})$.
        For quasi-2D and 3D systems under the removable discontinuity condition  \cref{eqn:removecond_mp2_d} and \cref{eqn:removecond_mp2_x}, 
        the quadrature errors scale as $\Or(N_\vk^{-2})$ and $\Or(N_\vk^{-\frac53})$, respectively. 
\end{cor}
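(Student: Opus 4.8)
The plan is to derive \cref{cor:mp2_staggered} from \cref{thm:mp2_error}, \cref{thm:mp2_removable}, \cref{rem:mp2_lowdim}, and \cref{cor:euler_maclaurin}, once the geometry of the mesh that the staggered construction induces on the auxiliary variables is pinned down. First I would check which mesh the variables $\vq$ (direct term) and $\vq_1,\vq_2$ (exchange term) actually sample after the change of variables of \cref{sec:mp2}. Writing $\mathcal{K}_\textup{vir}=\mathcal{K}_\textup{occ}+\vs$ with $\vs$ the half-mesh-size shift in all extended directions, one verifies that crystal momentum conservation $\vk_b=\vk_i+\vk_j-\vk_a$ keeps $\vk_b\in\mathcal{K}_\textup{vir}$ (this uses that $2\vs$ is a node of the lattice generating $\mathcal{K}_\textup{occ}$), so every relevant difference $\vk_a-\vk_i$, $\vk_b-\vk_i$, $\vk_i-\vk_a$ is a difference of an occupied node and a virtual node, or its negative. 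Hence its minimum image lies on a single $N_\vk$-sized MP mesh $\mathcal{K}_\vq$, namely the half-mesh-size shift of a $\Gamma$-centered mesh in $\Omega^*$ (or in $\Omega_\textup{low}^*$ for the low-dimensional models); in particular $\mathcal{K}_\vq$ is closed under inversion and does \emph{not} contain $\vq=\bm{0}$.

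With this in hand, for general quasi-2D and 3D systems I would run the argument of \cref{thm:mp2_error} verbatim: its only inputs are periodicity of the integrands and that $\mathcal{K}_\vq$ is an MP mesh closed under inversion (the latter used to discard first-order contributions in the exchange channel), both of which hold here. The extracted fractional-form terms all have $\gamma_\textup{min}=0$, so \cref{cor:em_n_fraction} gives each an $\Or(N_\vk^{-1})$ quadrature error, hence $\Or(N_\vk^{-1})$ overall for $d=2,3$. Under the removable-discontinuity conditions \cref{eqn:removecond_mp2_d}--\cref{eqn:removecond_mp2_x} I would instead invoke \cref{thm:mp2_removable} and \cref{rem:mp2_lowdim}: since $\bm{0}\notin\mathcal{K}_\vq$ while $\mathcal{K}_\vq$ is closed under inversion, we land in their ``$\bm{0}\notin\mathcal{K}_\vq$'' branch, which yields $\Or(N_\vk^{-\frac53})$ for 3D systems and $\Or(N_\vk^{-2})$ for quasi-2D systems. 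Summing over the finitely many band indices $i,j,a,b$, whose prefactors are uniformly bounded, preserves all these rates.

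The quasi-1D case needs one extra observation. After the partial integrations of \cref{sec:removable_mp2}, the dominant error comes from the one-variable terms $\tfrac{f(q)H(q)}{|q|^{2}}$ and $\tfrac{f(q)H(q)}{|q|^{4}}$ and the two-variable product $\tfrac{f_1(q_1,q_2)H(q_1)}{|q_1|^{2}}\tfrac{f_2(q_1,q_2)H(q_2)}{|q_2|^{2}}$, where the numerators are smooth and vanish to the matching order at the origin. In one dimension there is no angular variable, so a smooth function vanishing to order $2p$ at $q=0$, divided by $|q|^{2p}$, extends to a smooth function (Hadamard's lemma), and after redefining the value at the origin each such term is smooth; being also periodic and compactly supported near the singularity, it has super-algebraically decaying quadrature error by \cref{cor:euler_maclaurin}. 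Because $\bm{0}\notin\mathcal{K}_\vq$, no spurious boundary term of the form $\tfrac{|V|}{N_\vk}\wt{g}(\bm{0})$ (cf.\ \cref{eqn:error_removable}) appears, so the total quadrature error is super-algebraically small.

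The main obstacle is the first step: verifying carefully that the staggered construction together with crystal momentum conservation produces exactly the inversion-symmetric, origin-avoiding mesh $\mathcal{K}_\vq$ in both the direct and the exchange channels (and in the low-dimensional models), so that the hypotheses ``$\mathcal{K}_\vq$ closed under inversion'' and ``$\bm{0}\notin\mathcal{K}_\vq$'' are genuinely available. Everything after that is a bookkeeping application of \cref{thm:mp2_error}, \cref{thm:mp2_removable}, \cref{rem:mp2_lowdim}, and \cref{cor:euler_maclaurin}.
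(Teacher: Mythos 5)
Your proposal is correct and follows essentially the same route as the paper: identify that the staggered construction induces a $\mathcal{K}_\vq$ that is the half-mesh-size shift of a $\Gamma$-centered mesh (inversion-symmetric and avoiding $\vq=\bm{0}$), then invoke \cref{thm:mp2_error} for the general rate and \cref{thm:mp2_removable} together with \cref{rem:mp2_lowdim} for the removable-discontinuity rates, with the quasi-1D super-algebraic decay coming from the fact that in one dimension the extracted fractional terms extend to smooth periodic functions. Your added verifications (crystal momentum conservation keeping $\vk_b\in\mathcal{K}_\textup{vir}$ via $2\vs$ belonging to the generating lattice, and the Hadamard-type smooth extension in 1D) are consistent with, and merely make explicit, what the paper leaves implicit.
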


In practice, it can be difficult to numerically check the conditions \cref{eqn:removecond_mp2_d} and \cref{eqn:removecond_mp2_x} for quasi-2D and 3D systems. 
Numerical tests suggest that systems with higher symmetries are more likely to satisfy the conditions and have faster decaying finite-size errors using the staggered mesh method. 
As a supporting numerical evidence for \cref{cor:mp2_staggered}, \cref{fig:stagger_mp2_num} illustrates the comparison between the standard and the staggered mesh methods for computing the MP2 energy for a quasi-2D and a quasi-1D model systems with a fixed effective potential field. 
Specifically, let the unit cell be $[0,1]^3$ and use $20\times 20\times 20$ planewave basis functions to discretize functions in the unit cell. 
The effective potential takes the local, isotropic form, 
\begin{equation}\label{eqn:potential}
V(\vx) = 
        \left\{
        \begin{array}{ll}
                - V_0 & r \leqslant  0.1 
                \\
                - V_0 \dfrac{e^{-\frac{1}{0.4-r}}}{e^{-\frac{1}{r-0.1}} + e^{-\frac{1}{0.4-r}}}   & 0.1 < r < 0.4
                \\
                0 & r \geqslant 0.4
        \end{array}
        \right.
        ,
        \quad 
        \text{with}\ \ r = |\vx - \vr_0|, 
\end{equation}
centered at $\vr_0 = (0.5, 0.5, 0.5)$ with height $V_0 = 60$. 
For each momentum vector $\vk$, we solve the corresponding effective Kohn-Sham equation to obtain $n_\text{occ} = 1$ occupied orbitals and $n_\text{vir} = 3$ virtual orbitals.
There is a direct gap between the occupied and virtual bands for this model system. 
\cref{fig:stagger_mp2_num} shows that the convergence rate of the staggered mesh method is much faster than that of the standard method, and the rate matches the analysis in \cref{cor:mp2_staggered}.
We refer readers to \cite{XingLiLin2021} for additional numerical examples illustrating the effectiveness of the staggered mesh method for MP2 calculations in model systems and real materials. 

\begin{figure}[htbp]
        \centering
        \subfloat[Quasi-2D system]
        {
                \includegraphics[width=0.45\textwidth]{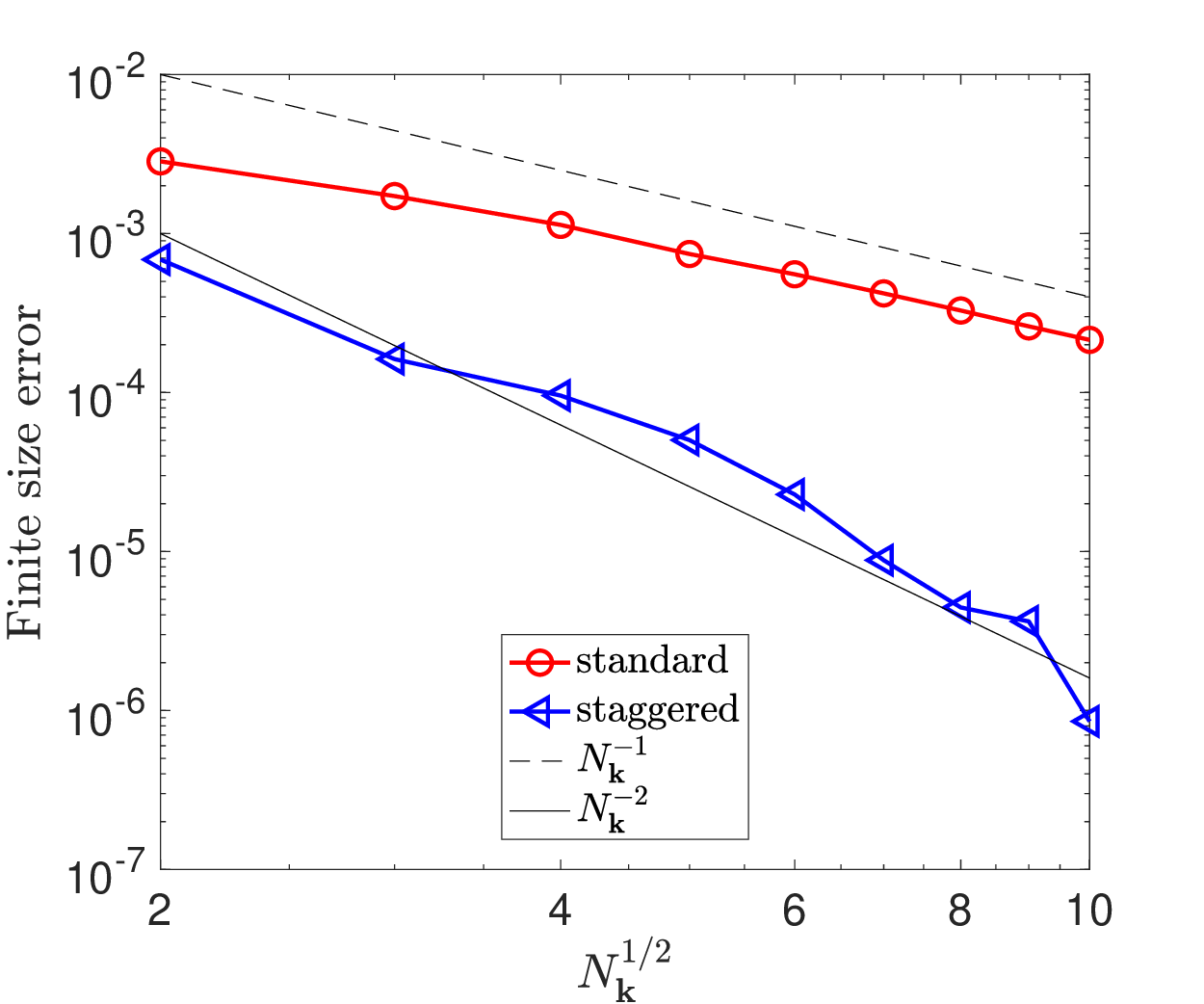}
        }
        \subfloat[Quasi-1D system]
            {
                \includegraphics[width=0.45\textwidth]{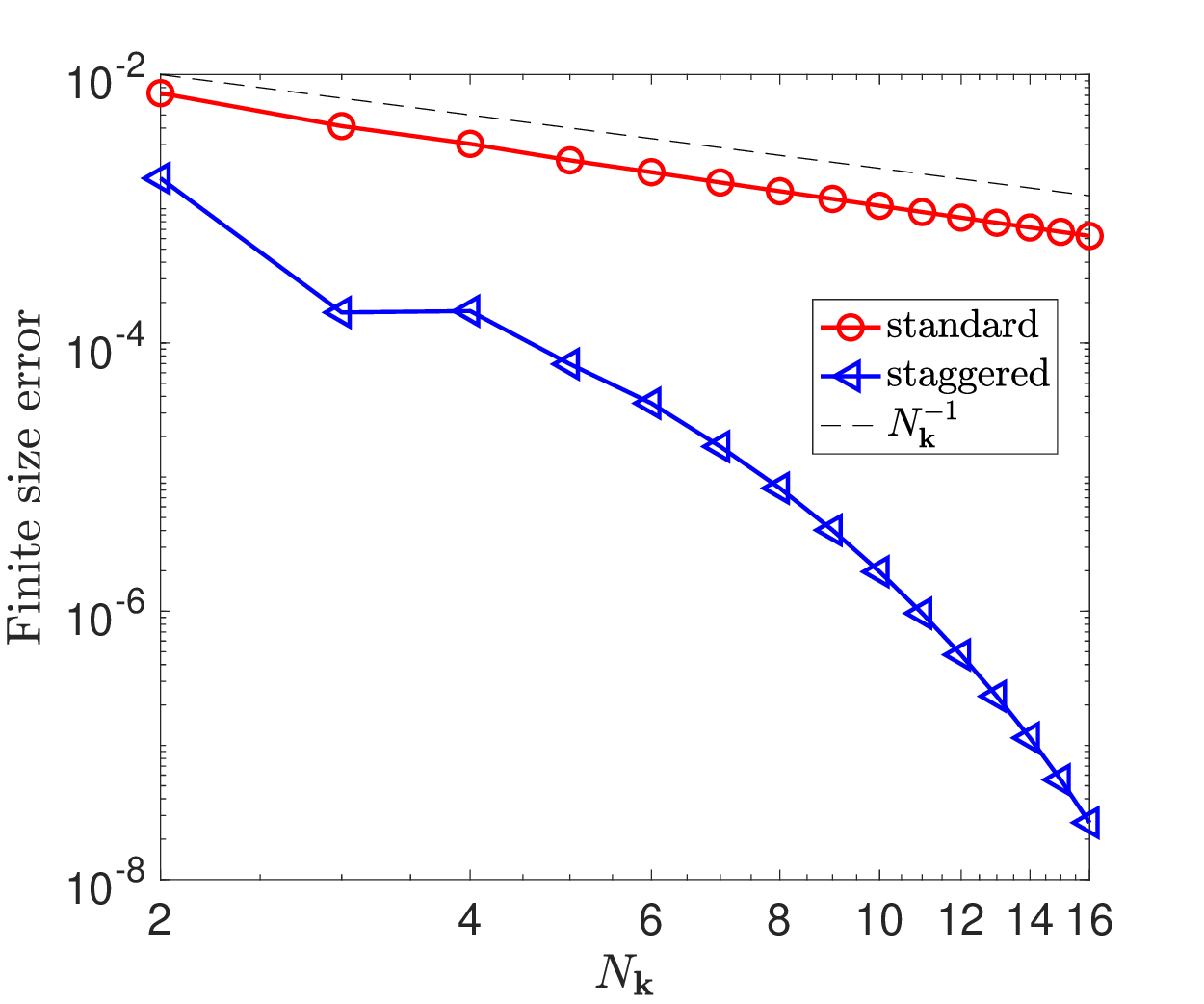}
            }
        \caption{Finite-size error of the MP2 energy calculation by the standard and the staggered mesh method for a quasi-2D and a quasi-1D model systems with MP meshes of sizes $1\times N_\vk^{ \frac12} \times N_\vk^{\frac12}$ and $1 \times 1\times N_\vk$, respectively. 
        The reference MP2 energies for the two systems is computed by the staggered mesh method with mesh size $1\times 14 \times 14$ and $1\times 1 \times 20$.}
        \label{fig:stagger_mp2_num}
\end{figure}

\subsection{Staggered mesh method for Fock exchange energy}\label{subsec:staggered_ex}
Our analysis indicates that the staggered mesh method can be an effective strategy for reducing the finite-size error when the discontinuities are removable.
We present a new staggered mesh method for exchange energy calculations using the newly introduced corrected exchange energy $E_\text{X}^\text{corrected,2}$ in \cref{eqn:exchange_correction2}. (Recall that the Madelung constant correction is only defined for $\Gamma$-centered $\mathcal{K}_\vq$ and thus cannot be combined with the staggered mesh method.)
Here $\vk_i$ and $\vk_j$ belong to two staggered MP meshes denoted by $\mathcal{K}_{\vk_i}$ and $\mathcal{K}_{\vk_j}$,  which differ by a half-mesh-size (see \cref{fig:staggered_Kqij} for a 2D illustration). 
In this way, the MP mesh $\mathcal{K}_\vq$ defined as the possible minimum images of $\vq = \vk_j - \vk_i$ with $\vk_i\in \mathcal{K}_{\vk_i}, \vk_j \in \mathcal{K}_{\vk_j}$ is also the half-mesh-size shift of a $\Gamma$-centered MP mesh in $\Omega^*$. 
We note that this $\mathcal{K}_\vq$ is used to compute the finite-size correction \cref{eqn:exchange_correction2} for the exchange energy calculation. 
In a similar manner, we could prove in \cref{cor:exchange_staggered} that the staggered mesh method has smaller than $\Or(N_\vk^{-1})$ quadrature error when the corresponding non-smooth terms have removable discontinuities. 
Note that for low-dimensional systems, minor modifications are added to the correction to the exchange energy calculation as detailed in  in \cref{appendix_low_dim} (\cref{rem:alternative_exchange_correction}). 
\begin{cor}[Staggered mesh method for Fock exchange energy]
\label{cor:exchange_staggered}
        For quasi-1D systems, the staggered mesh method for the exchange energy calculation has super-algebraically decaying quadrature error. 
For general quasi-2D and 3D systems, the quadrature errors both scale as $\Or(N_\vk^{-1})$.
For quasi-2D and 3D systems under the condition of removable discontinuity \cref{eqn:removecond_fock},
the quadrature errors scale as $\Or(N_\vk^{-2})$ and $\Or(N_\vk^{-\frac53})$, respectively. 
\end{cor}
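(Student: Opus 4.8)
The plan is to reduce \cref{cor:exchange_staggered} to the estimates already established for the corrected exchange energy, exactly as \cref{cor:mp2_staggered} was reduced to \cref{thm:mp2_removable}. The key structural fact is that the staggered choice $\vk_i\in\mathcal{K}_{\vk_i}$, $\vk_j\in\mathcal{K}_{\vk_j}$ with $\mathcal{K}_{\vk_j}$ a half-mesh-size shift of $\mathcal{K}_{\vk_i}$ forces the induced mesh $\mathcal{K}_\vq$ for $\vq=\vk_j-\vk_i$ to be the half-mesh-size shift of a $\Gamma$-centered $N_\vk$-sized MP mesh in $\Omega^*$ (or $\Omega^*_\text{low}$); in particular $\mathcal{K}_\vq$ is closed under inversion and, crucially, $\bm{0}\notin\mathcal{K}_\vq$. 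Since the correction $E_\text{x}^\text{corrected,2}$ of \cref{eqn:exchange_correction2} (and its low-dimensional counterpart $E_\text{x,low}^\text{corrected,2}$ of \cref{eqn:exchange_correction2_low}) is defined for any such $\mathcal{K}_\vq$, the identity \cref{eqn:singularity_subtraction} still gives
\[
E_\text{x}^\text{TDL}-E_\text{x}^\text{corrected,2}(N_\vk)=-\tfrac{1}{|\Omega^*|^2}\,\mathcal{E}_{\Omega^*\times\Omega^*}\!\Big(\textstyle\sum_{ij}\wt{F}_\text{x}^{ij}-N_\text{occ}h_\varepsilon,\ \mathcal{K}_{\vk_i}\times\mathcal{K}_\vq\Big),
\]
and following \Cref{sec:removable_exchange} I would integrate the smooth, periodic variable $\vk_i$ out first (the $\vk_i$-diagonal part of the error decaying super-algebraically), after which the error is controlled by the $\vq$-trapezoidal error of the extracted non-smooth term $g(\vq)=f(\vq)H(\vq)/|\vq|^2$, where $f$ is smooth with $f(\vq)=\Or(|\vq|^2)$ (the quadratic vanishing coming from $\hat\varrho_{n\vk,n'\vk}(\bm{0})=\delta_{nn'}$ together with the Madelung subtraction, cf.\ \cref{eqn:leading_nonsmooth}), and the odd-order terms of its $\vq$-expansion are killed by the inversion symmetry of $\mathcal{K}_\vq$.

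For general quasi-2D and 3D systems I would invoke \cref{cor:em_n_fraction} with $\gamma_\text{min}=0$ (equivalently the first line of \cref{eqn:error_removable_Kq} with $s=0$), which gives $\mathcal{E}_{\Omega^*}(g,\mathcal{K}_\vq)=\Or(N_\vk^{-1})$ irrespective of the position of the mesh, hence the stated $\Or(N_\vk^{-1})$ bound. Under the removable-discontinuity hypothesis \cref{eqn:removecond_fock}, expanding $e^{-\varepsilon|\vq|^2}=1-\varepsilon|\vq|^2+\Or(|\vq|^4)$ shows that the numerator of the partial integral satisfies $f(\vq)=C|\vq|^2+\Or(|\vq|^4)$, i.e.\ \cref{eqn:fq_quadratic} with $s=2$; since $\bm{0}\notin\mathcal{K}_\vq$, the second line of \cref{eqn:error_removable_Kq} then yields $\mathcal{E}_{\Omega^*}(g,\mathcal{K}_\vq)=\Or(N_\vk^{-1-\frac{2}{d}})$, which is $\Or(N_\vk^{-\frac53})$ for $d=3$ and $\Or(N_\vk^{-2})$ for $d=2$. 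For quasi-1D systems ($d=1$) the quadratic part of any smooth $f=\Or(q^2)$ is automatically a multiple of $q^2=|q|^2$, so \cref{eqn:removecond_fock} holds trivially; moreover $f(q)/q^2$ is then itself smooth (Taylor with remainder at $q=0$), so $g=(f(q)/q^2)H(q)$ is smooth and compactly supported and \cref{cor:euler_maclaurin} gives super-algebraic decay. This covers all three regimes.

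The only genuinely delicate steps, as in \Cref{sec:removable_exchange} and the proof of \cref{thm:exchange_madelung_error}, are the bookkeeping ones: verifying that \cref{eqn:singularity_subtraction} and the $\vk_i$-first integration carry over verbatim to a non-$\Gamma$-centered $\mathcal{K}_\vq$; checking that the partial integral $\int_{\Omega^*}\ud\vk_i(\sum_{ij}r_{ijji}-\tfrac{4\pi N_\text{occ}}{|\Omega|}e^{-\varepsilon|\vq|^2})$ inherits the claimed $\vq$-expansion uniformly over the finitely many band indices (so the contributions can be summed); tracking the $e^{-\varepsilon|\vq|^2}$ factor when translating \cref{eqn:removecond_fock} into \cref{eqn:fq_quadratic}; and confirming, via the inversion symmetry of $\mathcal{K}_\vq$ (the symmetrization argument of \cref{rem:cubic}), that the first- and third-order terms truly drop out of the relevant quadrature errors. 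For low-dimensional systems one additionally needs the modified correction of \cref{appendix_low_dim} (\cref{rem:alternative_exchange_correction}), but since the non-smooth structure of the integrand is unchanged, the same reduction applies with the regime-by-regime estimates above carried out over $\Omega^*_\text{low}$ in place of $\Omega^*$.
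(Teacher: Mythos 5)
Your proposal is correct and follows essentially the same route as the paper: the staggered meshes $\mathcal{K}_{\vk_i},\mathcal{K}_{\vk_j}$ induce an inversion-symmetric $\mathcal{K}_\vq$ with $\bm{0}\notin\mathcal{K}_\vq$, the correction $E_\text{x}^\text{corrected,2}$ (and its low-dimensional variant) applies to such meshes, and the rates then follow from \cref{thm:exchange_removable}, \cref{eqn:error_removable_Kq}, and the quasi-1D smoothness observation, exactly as the paper intends.
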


\begin{figure}[htbp]
        \centering
        \includegraphics[width=0.8\textwidth]{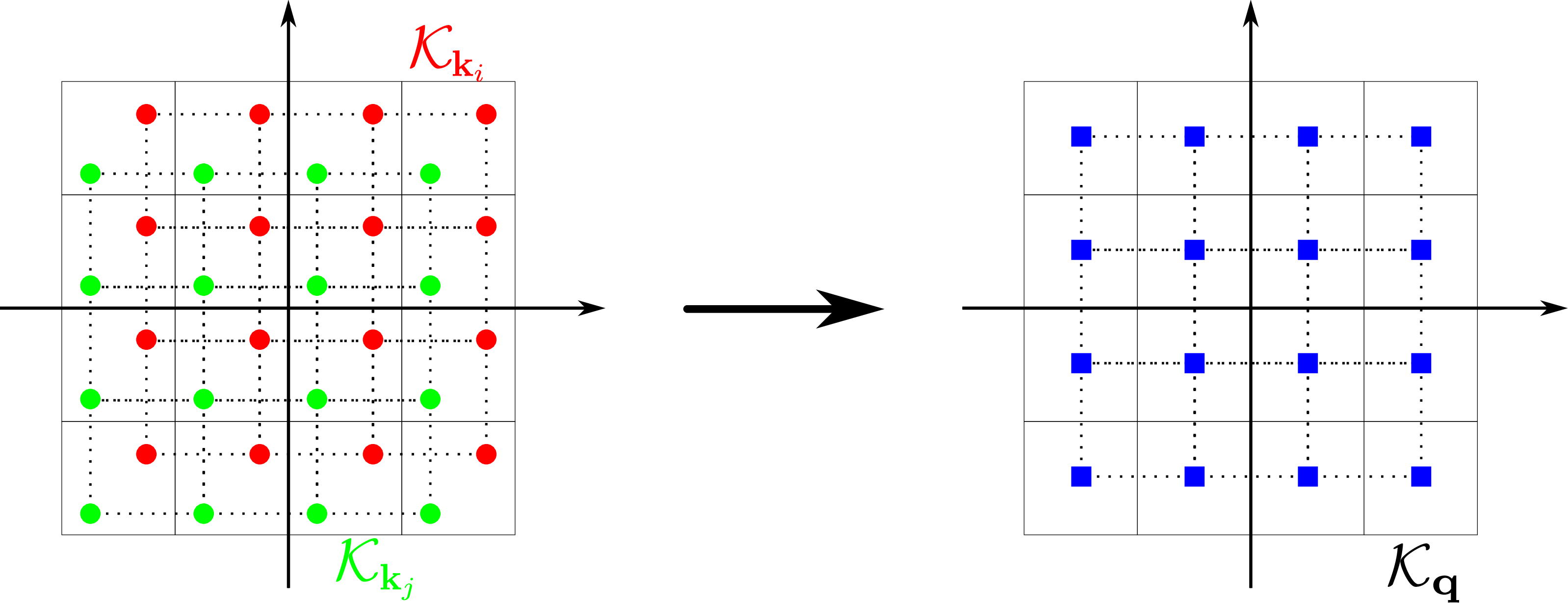}
        \caption{2D illustration of the staggered mesh method for the corrected exchange energy calculation.}
        \label{fig:staggered_Kqij}
\end{figure}

\cref{fig:stagger_ex_num} illustrates the standard and the staggered mesh methods for the corrected exchange energy calculation for a quasi-1D and a 3D model systems with the same effective potential \cref{eqn:potential}. 
The potential height $V_0$ is now set differently to $30$ for better illustration of the error scaling. 
The parameter $\varepsilon$ of the exchange energy correction in \cref{eqn:exchange_correction2} is set to $0.1$. 
These results confirm the superior performance of the staggered mesh method, and that the convergence rate in \cref{cor:mp2_staggered} is sharp.

\begin{figure}[htbp]
        \centering
        \subfloat[3D system]
        {
            \includegraphics[width=0.45\textwidth]{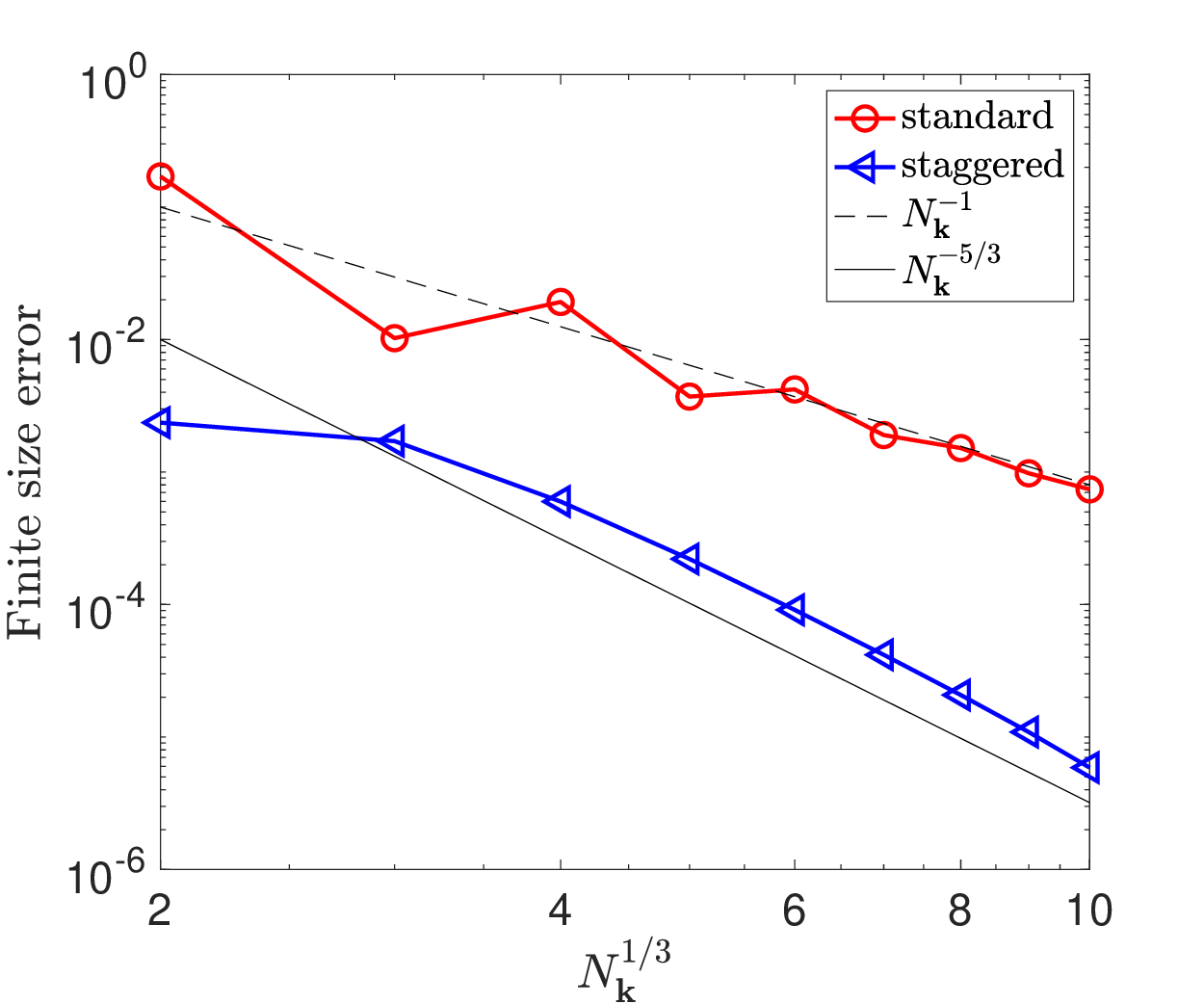}     
        }
        \subfloat[Quasi-1D system]
        {
                \includegraphics[width=0.45\textwidth]{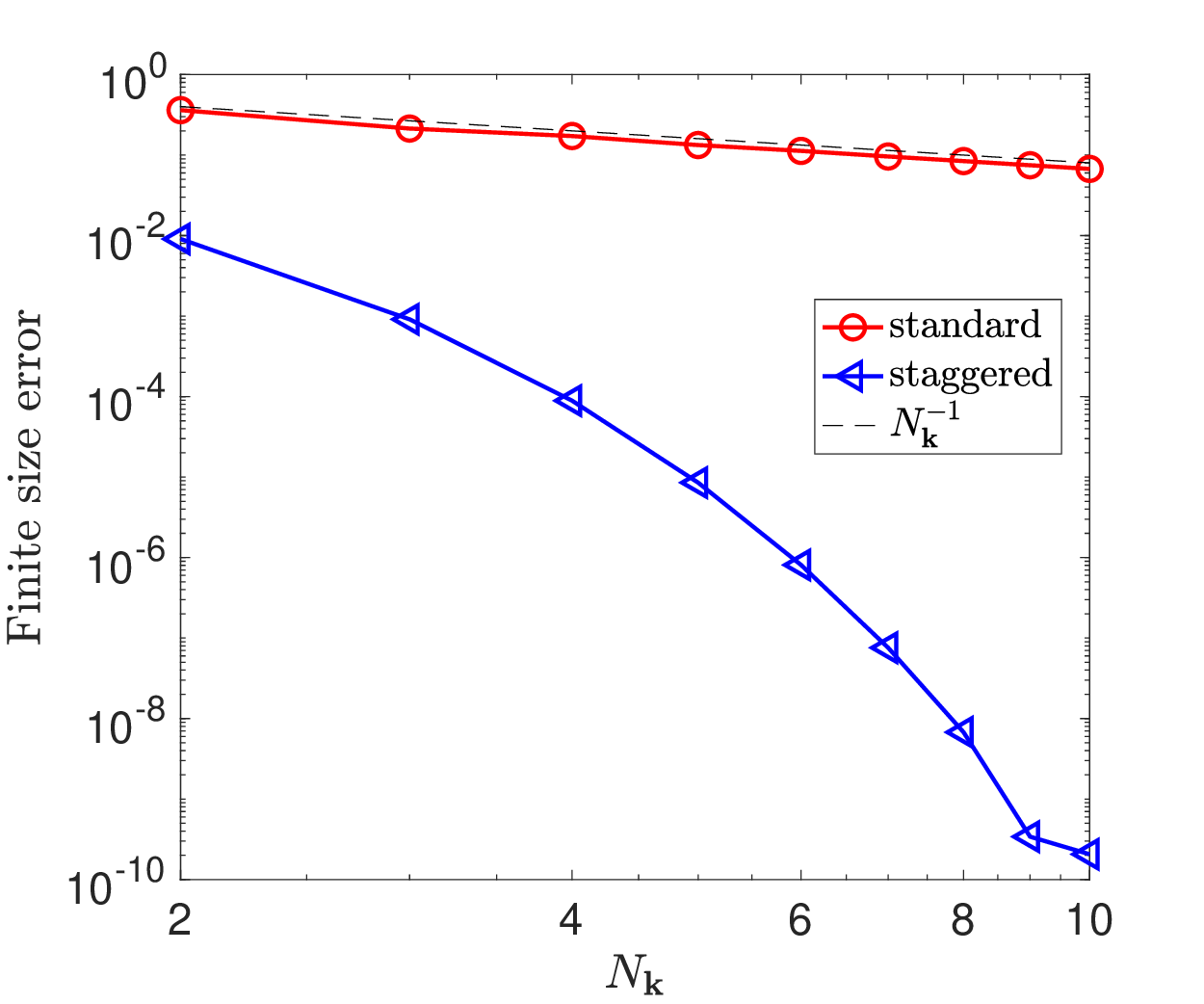}      
                }       
        \caption{Finite-size error of the corrected exchange energy calculation by the standard and the staggered mesh methods for a 3D and a quasi-1D model systems with MP meshes of sizes $N_\vk^\frac13 \times N_\vk^\frac13 \times N_\vk^\frac13$ and $1\times 1\times N_\vk$, respectively. 
        The reference energies are computed by the staggered mesh method with mesh size $14\times 14 \times 14$ and $1\times 1\times 20$.
        To avoid the effect of basis function incompleteness while illustrating the super-algebraic convergence, we increase the number of planewave functions to $40\times 40\times 40$ for the quasi-1D system in (b). Note that the error of the staggered mesh method oscillates around $10^{-10}$ after $N_\vk=9$.
        (When the number of planewave functions is $20\times 20 \times 20$, the oscillation starts at $N_\vk = 7$ and is around $10^{-8}$. )
        }
        \label{fig:stagger_ex_num}
\end{figure}

\section{Conclusion}

From the unified analysis of finite-size errors of the periodic HF theory and the MP2 theory, an immediate question is whether the finite-size errors of higher order M{\o}ller-Plesset perturbation theories (MPn) for periodic systems can be analyzed in a similar fashion. This is also a timely question, given the recent resurgence of interests on the third and fourth order perturbation theories in quantum chemistry~\cite{BanerjeeSokolov2019,BertelsLeeHead-Gordon2019,LeeLinHead-Gordon2020,RettigHaitBertelsEtAl2020,DoranHirata2021}.
We expect that the quadrature based analysis of finite-size errors can be carried out to all finite-order perturbation theories, where each energy term in the TDL is a multi-layer integral over $\Omega^*$, and its numerical calculation corresponds to a trapezoidal quadrature rule. 
The main challenge is to examine all possible non-smooth terms in the integrands and analyze their quadrature errors via a special Euler-Maclaurin formula similar to that in \cref{thm:em_n_fraction}. 
However, preliminary analysis indicates that even for the third order M{\o}ller-Plesset perturbation theories (MP3), there exists certain non-smooth components that are not in the fractional form \cref{eqn:fractional_form} and can not be readily analyzed using \cref{thm:em_n_fraction}. 
We expect that the result in \cref{thm:em_n_fraction} can be generalized to a broader class of non-smooth functions, which could then enable the analysis of finite-size errors of MPn energy calculations for fixed $n>2$.

\REV{Our error analysis focuses on insulating systems with a direct gap where the main source of the finite-size error is 
the Coulomb singularity in energy calculations. 
For gapless systems (e.g., metals), additional singularities are introduced by the orbital energy fractions $1/\varepsilon_{i\vk_i,j\vk_j}^{a\vk_a,b\vk_b}$ 
and the occupation number near the Fermi surface. 
Therefore quadrature error analysis in this case needs to take into account of these additional singularity structures, which  are not available in general except
in some special cases (such as homogeneous electron gas).
Moreover, we note that the MP2 energy calculation may diverge in the TDL for gapless systems \cite{GruneisMarsmanKresse2010,GellMannBrueckner1957}. Therefore for gapless systems, the finite-size error analysis of correlated electronic structure theories remains an open question in general.
}

Besides finite-order perturbation theories, another possible generalization is to consider certain infinite order perturbation energies with a selected set of Feynman diagrams. Examples include the random phase approximation (RPA) \REV{(see \cite{XingLin2022})} and the coupled-cluster theory (CC). 
While it may be possible to analyze the quadrature error of the contribution from each order of the diagram, the infinite summation can still pose a significant challenge for the rigorous analysis of the finite-size error. 
The generalization of the singularity subtraction method (related to the Madelung constant correction) and the staggered mesh method is also of practical interest for such higher-order and infinite-order perturbation theory calculations.

\vspace{1em}
\noindent\textbf{Acknowledgement:}

\REV{This material is based upon work supported by the U.S. Department of Energy, Office of Science, Office of Advanced Scientific Computing Research and Office of Basic Energy Sciences, Scientific Discovery through Advanced Computing (SciDAC) program (X.X.).
This work was also partially supported by the China Scholarship Council under File No. 201906040071. (X.L), the Air Force Office of Scientific Research under award number FA9550-18-1-0095, by the Department of Energy under Grant No.\ DE-SC0017867,  the Center for Advanced Mathematics for Energy Research Applications (CAMERA) program (L.L.). L.L. is a Simons Investigator.}
We thank Timothy Berkelbach and Garnet Chan for insightful discussions on the finite size  effects.

\bibliographystyle{abbrv}
\bibliography{mp2}

\newpage
\begin{appendices}
\crefalias{section}{appendix}

\section{Euler-Maclaurin formula for a special class of non-smooth functions}\label{sec:em}

In this section, we introduce the main technical results that have been used throughout the paper, i.e., the improved quadrature error analysis for a special class of non-smooth integrands that are in the fractional form \cref{eqn:fractional_form}. 
We first recall the standard Euler-Maclaurin analysis of the quadrature error for a trapezoidal rule in \Cref{sec:standard_em}, and then introduce the generalized Euler-Maclaurin analysis in \Cref{sec:improve_em}.
The sharpness of the convergence rates is demonstrated using numerical results for various types of singular integrands in \Cref{sec:numer_em}.
Our analysis generalizes the analysis of Lyness \cite{Lyness1976} to a broader class of singular integrands.
Compared to \cite{Lyness1976}, our analysis is also simpler and in particular does not rely on certain special properties of homogeneous polynomials.

\subsection{Standard Euler-Maclaurin formula}\label{sec:standard_em}
Consider a hypercube $V = [0, L]^d + \vb$ of edge length $L$ and cornered at point $\vb$ in $\mathbb{R}^d$. Let $\mathcal{X}$ be an $m\times \cdots \times m$ uniform mesh inside $V$ defined as
\[
\mathcal{X}  = \left\{\vb + \frac{L}{m}\left(\left(j_1, j_2,\cdots, j_d\right) + \vx_*\right),\ j_1,j_2,\ldots,j_d = 0,1,\ldots, m-1 \right\},
\]
where $\vx_* \in [0,1]^d$ is referred to as the \textit{relative offset} of $\mathcal{X}$ with respect to $V$. 
Alternatively, we may also first partition $V$ uniformly into $m\times\cdots \times m$ hypercubes as 
\[ 
\{V_i\} := \left\{ \vb + \dfrac{L}{m}(j_1,j_2,\ldots, j_d) + \left [0, \frac{L}{m}\right]^d,\ j_1,j_2,\ldots,j_d = 0,1,\ldots, m-1\right\},
\]
and then sampling one mesh point $\vx_i$ from each hypercube $V_i$ with offset $\frac{L}{m}\vx_*$ to generate the set $\mathcal{X}$. 
 
The trapezoidal rule over $V$ using the uniform mesh $\mathcal{X}$ is defined as
\begin{equation}\label{eqn:trapezoidal_rule}
        \mathcal{Q}_V(g, \mathcal{X}) = \dfrac{|V|}{|\mathcal{X}|}\sum_{\vx_i \in \mathcal{X}} g(\vx_i)  = \sum_{\vx_i \in \mathcal{X}}  |V_i|g(\vx_i) = \sum_{V_i} \mathcal{Q}_{V_i}(g, \{\vx_i\}), 
\end{equation}
which is equivalent to applying a single-point quadrature rule to each subdomain $V_i$ using node $\vx_i$. 
More precisely, $\mathcal{Q}_V(g, \mathcal{X})$ is an $m^d$-sized composition of the single-point quadrature rule 
\[
\mathcal{Q}_{[0,1]^d}(g, \{\vx_*\}) =g(\vx_*). 
\]

For sufficiently smooth functions, \cref{thm:em_general} gives the standard  Euler-Maclaurin formula that explicitly characterizes the quadrature error of a trapezoidal rule over a hypercube \cite{LynessMcHugh1970,BakerHodgson1971}. 
\begin{thm}[Standard Euler-Maclaurin formula]
\label{thm:em_general}
        Given a hypercube $V = [0, L]^d + \vb$ and an $m^d$-sized uniform mesh $\mathcal{X}$ in $V$ with relative offset $\vx_*$, for $g\in C^l(V)$, the quadrature error of the trapezoidal rule can be expressed as
        \begin{equation}\label{eqn:em_general}
                \mathcal{E}_V(g, \mathcal{X}) = \sum_{s=1}^{l-1} \dfrac{L^s}{m^s} \sum_{|\vbeta| = s} c_\vbeta(\vx_*) \int_V g^{(\vbeta)}(\vx) \ud \vx + \dfrac{L^l}{m^l} \sum_{|\vbeta| = l} \int_V h_{\vbeta,\vx_*} \left(m\frac{\vx - \vb}{L}\right)g^{(\vbeta)}(\vx)\ud \vx.
        \end{equation}
        The kernel function $h_{\vbeta, \vx_*}(\vx)$ is bounded and is periodic along each dimension with period $1$. The explicit form of $h_{\vbeta, \vx_*}(\vx)$ in one- and two-dimensional spaces can be found in \cite{BakerHodgson1971}, and its general form in $\mathbb{R}^d$ is studied in \cite{LynessMcHugh1970}. 
        The coefficient $c_\vbeta(\vx)$ is defined as 
        \begin{align*}
                c_\vbeta(\vx) 
                &=
                - \dfrac{B_{\beta_1}(x_1)}{\beta_1!}\dfrac{B_{\beta_2}(x_2)}{\beta_2!}\cdots \dfrac{B_{\beta_d}(x_d)}{\beta_d!},
        \end{align*}
        where $B_k(x)$ is the periodic Bernoulli polynomial of order $k$.
\end{thm}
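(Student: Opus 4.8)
The plan is to reduce the $d$-dimensional statement to the classical one-dimensional Euler--Maclaurin expansion for a composite one-point rule with an arbitrary sampling offset $x_*\in[0,1]$, and then assemble the $d$-dimensional formula by a tensor-product (telescoping) argument, in the spirit of \cite{LynessMcHugh1970,BakerHodgson1971}. Throughout, $g\in C^l(V)$ guarantees that all derivatives $g^{(\vbeta)}$ with $|\vbeta|\le l$ exist, are continuous, and are symmetric, so that the integrations by parts below are legitimate.

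\textbf{Step 1 (one-dimensional building block).} Take $d=1$ and write $h=L/m$. Splitting the rule cell by cell, with $\phi_j(t)=g(b+h(j+t))$ on $[0,1]$ one has $\mathcal E_V(g,\mathcal X)=h\sum_{j=0}^{m-1}\big(\int_0^1\phi_j-\phi_j(x_*)\big)$. Viewing $\mathcal Q_V(g,\mathcal X)$ as integration of $g$ against the signed measure ``Lebesgue minus the Dirac comb $h\sum_j\delta_{x_j}$'', a single integration by parts gives $\mathcal E_V(g,\mathcal X)=-\int_V g'(x)F_1(x)\,\ud x$, where $F_1$ is, up to the scaling $h$ and the argument rescaling $m(x-b)/L$, the $1$-periodic Bernoulli function $B_1$ shifted by $-x_*$, plus a constant $hB_1(x_*)$ accounting for the nonzero mean of $B_1(\cdot-x_*)$; the interior boundary terms vanish since $F_1(b)=F_1(b+L)=0$. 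I would then integrate by parts $l-1$ more times, each step using $B_k'=kB_{k-1}$ (for $k\ge 2$) to pass to the next periodic Bernoulli function and producing a boundary contribution proportional to $B_k(x_*)[g^{(k-1)}]_{\partial V}=B_k(x_*)\int_V g^{(k)}$. Collecting these yields precisely
\[
\mathcal E_V(g,\mathcal X)=\sum_{s=1}^{l-1}\frac{L^s}{m^s}\Big(-\frac{B_s(x_*)}{s!}\Big)\int_V g^{(s)}(x)\,\ud x+\frac{L^l}{m^l}\int_V h_{l,x_*}\!\Big(m\tfrac{x-b}{L}\Big)\,g^{(l)}(x)\,\ud x,
\]
with $h_{l,x_*}(u)=\pm B_l(u-x_*)/l!$ bounded and $1$-periodic. (The $s=1$ term requires the usual convention for $B_1$ at integers if $x_*\in\{0,1\}$.)

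\textbf{Step 2 (tensorization and bookkeeping).} The trapezoidal rule factorizes as $\mathcal Q_V=\bigotimes_{i=1}^d Q^{(i)}$, where $Q^{(i)}$ is the one-dimensional offset rule in coordinate $i$, and $\mathcal I_V=\bigotimes_i I^{(i)}$. I would expand the error with the telescoping identity
\[
\mathcal I_V-\mathcal Q_V=\sum_{i=1}^d\Big(\bigotimes_{i'<i}Q^{(i')}\Big)\otimes\big(I^{(i)}-Q^{(i)}\big)\otimes\Big(\bigotimes_{i'>i}I^{(i')}\Big),
\]
substitute the Step-1 expansion into each factor $I^{(i)}-Q^{(i)}$, multiply out, and regroup by total differentiation order. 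All products of total order $s<l$ assemble into $\sum_{|\vbeta|=s}c_\vbeta(\vx_*)\int_V g^{(\vbeta)}$, and the coefficient comes out as a product $c_\vbeta(\vx_*)=-\prod_i B_{\beta_i}(x_{*,i})/\beta_i!$ --- a single overall minus sign, because exactly one factor in each such product is a ``correction'' contribution (carrying the $-1$) while the remaining factors are plain integrations $I^{(i')}$. Every product of total order $\ge l$ is gathered into one remainder integral over $V$ whose kernel $h_{\vbeta,\vx_*}$ is a finite product of one-dimensional bounded $1$-periodic kernels composed with the bounded quadrature operators $Q^{(i')}$; hence $h_{\vbeta,\vx_*}$ is bounded and $1$-periodic along each coordinate, as claimed. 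The fundamental theorem of calculus then recasts each order-$s$ term as $c_\vbeta(\vx_*)\int_V g^{(\vbeta)}$.

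\textbf{Main obstacle.} The one-dimensional integrations by parts are routine; the delicate part is the bookkeeping in Step 2 --- verifying that, after expanding the tensor product, all cross terms of total order exactly $l$ (or more) really do recombine into \emph{one} integral against a kernel bounded \emph{uniformly in $m$}. This relies on each $Q^{(i')}$ being a quadrature operator with operator norm independent of $m$, so that applying it in the ``already summed'' coordinates does not spoil the $\Or(m^{-l})$ remainder estimate, together with uniform boundedness of the truncated one-dimensional remainder kernels. Tracking signs through the repeated integrations by parts and treating the endpoint offsets $x_*\in\{0,1\}$ in the $s=1$ term also need care, but are standard; the explicit closed forms of $h_{\vbeta,\vx_*}$ are those recorded in \cite{LynessMcHugh1970,BakerHodgson1971}.
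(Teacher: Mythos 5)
The paper itself does not prove this theorem: it is quoted as the classical result of Lyness--McHugh \cite{LynessMcHugh1970} and Baker--Hodgson \cite{BakerHodgson1971}, so your reconstruction has to stand on its own, and your route (one-dimensional offset Euler--Maclaurin via repeated integration by parts against periodic Bernoulli functions, then tensorization) is exactly the classical one; Step 1 is fine. In Step 2, however, your sign explanation is wrong even though the coefficient formula is right. Writing $Q^{(i)}=I^{(i)}-E^{(i)}$ and expanding $\prod_i I^{(i)}-\prod_i\bigl(I^{(i)}-E^{(i)}\bigr)$, a term in which $k\ge 2$ coordinates carry corrections enters with the inclusion--exclusion sign $(-1)^{k+1}$ times $k$ one-dimensional coefficients, each of the form $-B_{\beta_i}(x_{*,i})/\beta_i!$; the single overall minus in $c_\vbeta$ is the net outcome of this cancellation, not a consequence of ``exactly one correction factor''. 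For instance, for $\vbeta=(1,1)$ the coefficient $-B_1(x_{*,1})B_1(x_{*,2})$ comes from the term $-E^{(1)}E^{(2)}$, in which both factors are corrections. (A quick check with $g(x,y)=xy$, $m=1$, $\vx_*=\bm{0}$ confirms the paper's coefficients.)

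The genuine gap is the regularity bookkeeping, and it is not the one you flag. If you expand every one-dimensional factor to full order $l$ and ``multiply out'', the cross terms involve $g^{(\vbeta)}$ with each $\beta_i\le l$ but with $|\vbeta|$ as large as $dl$; a function that is merely $C^l(V)$ need not possess these mixed derivatives, and in any case such terms do not match the stated expansion, whose remainder contains only $|\vbeta|=l$. The actual content of \cite{LynessMcHugh1970} is a nested (adaptive) truncation: after extracting an order-$s_1$ term in the first coordinate, the next coordinate may be expanded only up to order $l-s_1$, and so on, so that every surviving term has total order at most $l$ and the exactly-order-$l$ contributions are collected into kernels $h_{\vbeta,\vx_*}$ built as products of the fixed bounded $1$-periodic one-dimensional remainder kernels and Bernoulli-polynomial constants. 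Once this is done, uniformity in $m$ — the point you single out as the main obstacle — is the easy part: the kernels are fixed functions evaluated at $m(\vx-\vb)/L$, and each $Q^{(i')}$ has operator norm $L$ on continuous functions independently of $m$. With the adaptive truncation inserted, your argument does reproduce the theorem as stated for $g\in C^l(V)$.
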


Note  that when  $g(\vx)$ and its derivatives up to $(l-2)$th order satisfy the periodic boundary condition on $\partial V$, all the integrals of $g^{(\vbeta)}(\vx)$ in the formula \cref{eqn:em_general} vanish. 
Thus, we can prove in \cref{cor:euler_maclaurin} that, in this case, the quadrature error scale as $\Or(m^{-l})$. 
\begin{cor}[Standard Euler-Maclaurin formula for functions with periodic boundary condition]\label{cor:euler_maclaurin}If $g(\vx)$ and its derivatives up to $(l-2)$th order satisfy the periodic boundary condition on $\partial V$, all the integrals over $g^{(\vbeta)}(\vx)$ in \cref{eqn:em_general} vanish and the quadrature error satisfies, 
        \[
        \mathcal{E}_V(g, \mathcal{X}) = \Or(m^{-l}).
        \]
        Furthermore, if $g$ and its derivatives are smooth (i.e., $l = \infty$) and all satisfy the periodic boundary condition on $\partial V$, the quadrature error decays super-algebraically, i.e., faster than $\Or(m^{-s})$ with any $s > 0$.
\end{cor}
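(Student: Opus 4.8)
The plan is to feed the hypotheses straight into the standard Euler--Maclaurin formula of \cref{thm:em_general} and show that the entire asymptotic sum there collapses, leaving only the built-in $\Or(m^{-l})$ remainder. Concretely, I would split the right-hand side of \cref{eqn:em_general} into the ``bulk'' terms $\frac{L^s}{m^s}\sum_{|\vbeta|=s}c_\vbeta(\vx_*)\int_V g^{(\vbeta)}(\vx)\ud\vx$ for $1\le s\le l-1$ and the remainder term $\frac{L^l}{m^l}\sum_{|\vbeta|=l}\int_V h_{\vbeta,\vx_*}\!\left(m(\vx-\vb)/L\right)g^{(\vbeta)}(\vx)\ud\vx$, and treat the two pieces separately.

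First I would show that every bulk integral $\int_V g^{(\vbeta)}(\vx)\ud\vx$ with $1\le|\vbeta|\le l-1$ vanishes. Since $|\vbeta|\ge 1$, pick a direction $j$ with $\beta_j\ge 1$ and let $\vbeta'$ denote $\vbeta$ with its $j$-th entry lowered by one, so $|\vbeta'|=|\vbeta|-1\le l-2$ and $g^{(\vbeta)}=\partial_{x_j}g^{(\vbeta')}$ (legitimate since $g\in C^l(V)$). Integrating in $x_j$ first by the fundamental theorem of calculus rewrites $\int_V g^{(\vbeta)}\ud\vx$ as the integral over the remaining $d-1$ coordinates of the difference of $g^{(\vbeta')}$ evaluated on the two opposite faces of $V$ orthogonal to direction $j$. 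Because $|\vbeta'|\le l-2$, this derivative satisfies the periodic boundary condition on $\partial V$ by hypothesis, so the face difference vanishes identically; hence the integral is zero, and therefore the whole bulk sum is zero.

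Next I would bound the remainder. As recorded in \cref{thm:em_general} each kernel $h_{\vbeta,\vx_*}$ is bounded, with a bound independent of $m$ (it is a fixed $1$-periodic function evaluated at rescaled arguments), and since $g\in C^l(V)$ its order-$l$ derivatives are bounded on the compact set $V$; consequently the remainder is at most a constant (depending on $l$, $d$, $L$, $|V|$, $\sup_V|g^{(\vbeta)}|$) times $m^{-l}$. Combining this with the vanishing of the bulk sum gives $\mathcal{E}_V(g,\mathcal{X})=\Or(m^{-l})$. For the final assertion, if $g\in C^\infty(V)$ and all of its derivatives satisfy the periodic boundary condition on $\partial V$, then the above applies for every $l$, so $\mathcal{E}_V(g,\mathcal{X})=\Or(m^{-l})$ for all $l>0$; since $l$ is arbitrary this is exactly super-algebraic decay.

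The argument is essentially bookkeeping and I do not expect a genuine obstacle; the one point that needs care is matching the index ranges with the precise reading of ``periodic boundary condition.'' One must check that the hypothesis (derivatives up to order $l-2$ periodic) is exactly what is needed to kill the face terms coming from bulk indices with $|\vbeta|\le l-1$ (it is, since those produce $g^{(\vbeta')}$ with $|\vbeta'|\le l-2$, including mixed partials), and that ``periodic on $\partial V$'' is interpreted as equality of the relevant derivative on each pair of opposite faces of the hypercube, which is precisely what the telescoping in the first step uses.
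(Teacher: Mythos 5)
Your proposal is correct and follows essentially the same route as the paper, which states the corollary as an immediate consequence of \cref{thm:em_general}: the periodicity of $g$ and its derivatives up to order $l-2$ makes every bulk integral $\int_V g^{(\vbeta)}\ud\vx$ with $1\le|\vbeta|\le l-1$ vanish, and the boundedness of the kernels $h_{\vbeta,\vx_*}$ together with the boundedness of the order-$l$ derivatives on $V$ leaves only the $\Or(m^{-l})$ remainder, with the smooth case obtained by letting $l$ be arbitrary. Your explicit one-direction integration argument for the vanishing of the bulk integrals is exactly the detail the paper leaves implicit, and your index bookkeeping ($|\vbeta'|\le l-2$) matches the hypothesis precisely.
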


In the exchange and MP2 energy calculations, their integrands are all periodic and smooth everywhere except at a measure-zero set of points, and the standard Euler-Maclaurin formula above cannot be applied directly.
On the other hand, an accurate estimate of their quadrature errors need to account for both the function periodicity and non-smoothness.

\subsection{Generalized Euler-Maclaurin formula for a special class of non-smooth functions}\label{sec:improve_em}
The key idea used in the analysis below is that when partitioning $V$ into subdomains $\{V_i\}$, an integrand could be smooth in many of these subdomains, and the standard Euler-Maclaurin formula can be applied to the single-point quadrature rule \cref{eqn:trapezoidal_rule}  in each subdomain. 
Specifically, consider $g(\vx)$ in $V = [0,L]^d + \vb$ that is smooth everywhere except at subdomains $\{V_i\}_{i\in \mathcal{T}}$ indexed by $\mathcal{T}$. Define $V_\mathcal{T} = \bigcup_{i\in\mathcal{T}} V_i$. 
The quadrature error can then be split into two parts: 
\begin{equation}\label{eqn:error_split}
        \mathcal{E}_V(g,\mathcal{X})
        = 
        \sum_{i\in\mathcal{T}}\mathcal{E}_{V_i}(g,\{\vx_i\})
        +
        \sum_{i \not\in \mathcal{T}} \mathcal{E}_{V_i}(g,\{\vx_i\}). 
\end{equation}

In each $V_i = [0, \frac{L}{m}]^d + \vb_i$ with $i\not\in \mathcal{T}$, $g(\vx)$ is smooth and its quadrature error in $V_i$ can be formulated by applying \cref{thm:em_general} to the single-point quadrature rule $\mathcal{Q}_{V_i}(g, \{\vx_i\})$ as 
\[
\mathcal{E}_{V_i}(g,\{\vx_i\})
=
\sum_{s=1}^{l-1}\dfrac{L^s}{m^s}
\sum_{|\vbeta| = s} c_\vbeta(\vx_*) \int_{V_i} g^{(\vbeta)}(\vx) \ud \vx+ \dfrac{L^l}{m^l} \sum_{|\vbeta| = l} \int_{V_i} h_{\vbeta,\vx_*} \left(\frac{\vx - \vb_i}{L/m}\right)g^{(\vbeta)}(\vx)\ud \vx,
\]
where the factor $\frac{L}{m}$ comes from the edge length $\frac{L}{m}$ of $V_i$ and the mesh size $1$ of the quadrature, and the order $l$ could be arbitrarily large. 
Summing over all $V_i$ with $i\not\in \mathcal{T}$, we can write the smooth part of the overall error as 
\begin{equation}\label{eqn:error_split_smooth}
\sum_{i \not\in \mathcal{T}}\mathcal{E}_{V_i}(g,\{\vx_i\})
= 
\sum_{s=1}^{l-1}\dfrac{L^s}{m^s}
\sum_{|\vbeta| = s} c_\vbeta(\vx_*) \int_{V\setminus V_\mathcal{T}} g^{(\vbeta)}(\vx) \ud \vx
+
\dfrac{L^l}{m^l} \sum_{|\vbeta| = l} \int_{V\setminus V_\mathcal{T}} h_{\vbeta,\vx_*} \left(\frac{\vx - \vb}{L/m}\right)g^{(\vbeta)}(\vx)\ud \vx, 
\end{equation}
which uses the fact that $h_{\vbeta,\vx_*} \left(\frac{\vx - \vb}{L/m}\right) = h_{\vbeta,\vx_*} \left(\frac{\vx - \vb_i}{L/m}\right)$ by the periodicity of $h_{\vbeta, \vx_*}(\vx)$. 
This gives a \textit{partial Euler-Maclaurin formula} for a trapezoidal rule only over the subdomains where $g(\vx)$ is sufficiently smooth. 

Using this formula, it is possible to exploit the boundary conditions of $g(\vx)$ to estimate the smooth part of the quadrature error. 
For example, if $g^{(\vbeta)}(\vx)$ with $|\vbeta| < l$ is integrable in $V_\mathcal{T}$, the first integral above can be further split as 
\[
\int_{V\setminus V_\mathcal{T}} g^{(\vbeta)}(\vx) \ud \vx
= 
\int_{V} g^{(\vbeta)}(\vx) \ud\vx - \int_{V_\mathcal{T}} g^{(\vbeta)}(\vx) \ud\vx,
\]
where the first term vanishes if $g(\vx)$ and its derivatives are  periodic.  
Further using the boundedness of $h_{\vbeta, \vx_*}$, we could get a preliminary estimate of the overall quadrature error as 
\begin{align*}
        \mathcal{E}_V(g,\mathcal{X})
        \lesssim 
        \sum_{i\in\mathcal{T}}\mathcal{E}_{V_i}(g,\{\vx_i\}) + \sum_{s=1}^{l-1}\dfrac{L^s}{m^s}
        \sum_{|\vbeta| = s} |c_\vbeta(\vx_*)| \mathcal{I}_{V_\mathcal{T}}(|g^{(\vbeta)}|)
        + 
        \dfrac{L^l}{m^l} \sum_{|\vbeta| = l}  \mathcal{I}_{V \setminus V_\mathcal{T}}(|g^{(\vbeta)}|). 
\end{align*}

Based on the above idea of partial Euler-Maclaurin formula, we now prove \cref{thm:em_n_fraction} which gives a generalized Euler-Maclaurin formula for non-smooth functions in the fractional form.

\begin{thm}[Generalized Euler-Maclaurin formula for functions in the fractional form]
\label{thm:em_n_fraction}
        Consider $n$ smooth functions $\{f_i(\vx_1, \vx_2, \ldots, \vx_n)\}_{i=1}^n$ in $\mathbb{R}^d \times \cdots \times \mathbb{R}^d$. 
        For each $i = 1,\ldots, n$,  $f_i(\vx_1, \vx_2, \ldots, \vx_n)$ is analytic at $\vx_i = \bm{0}$ and scales as $\Or(|\vx_i|^{a_i})$ near $\vx_i = \bm{0}$.
        Define the integrand
        \[
        g(\vx_1, \vx_2, \ldots, \vx_n) = 
        \dfrac{f_1(\vx_1, \vx_2, \ldots, \vx_n)}{ (\vx_1^T M \vx_1)^{p_1}} 
        \dfrac{f_2(\vx_1, \vx_2, \ldots, \vx_n)}{ (\vx_2^T M \vx_2)^{p_2}} 
        \cdots 
        \dfrac{f_n(\vx_1, \vx_2, \ldots, \vx_n)}{  (\vx_n^T M \vx_n)^{p_n}},
                \]
        where $M\in\mathbb{R}^{d\times d}$ is a symmetric positive definite matrix and the exponent $p_i\in\mathbb{Z}$ satisfies $\gamma_i  = a_i - 2p_i  > -d$ 
        \REV{and $\gamma_i\in\mathbb{Z}$} for $i = 1,2,\ldots, n$. 
        Define $\gamma_\textup{min} = \min_i \gamma_i$. 
        
        The trapezoidal rule for $g(\vx_1, \vx_2, \ldots, \vx_n)$ over $V^{\times n}$ with $V= [-\frac12, \frac12]^d$ using an $m^{nd}$-sized uniform mesh $\mathcal{X}^{(n)} = \mathcal{X}_1\times \ldots \times \mathcal{X}_n$ with relative offset $\vx^{(n)}\in [0,1]^{nd}$ has quadrature error 
        \[
        \mathcal{E}_{V^{\times n}}(g, \mathcal{X}^{(n)}) 
        = \sum_{s=1}^{d+\gamma_\textup{min}-1} \dfrac{1}{m^s} 
        \left(
        \sum_{|\vbeta| = s} c_\vbeta(\vx^{(n)}) \int_{V^{\times n}} g^{(\vbeta)}\ud \vx_1\ldots \ud\vx_n
        \right) 
        + \Or\left(\dfrac{\ln m}{m^{d+\gamma_\textup{min}}}\right).
        \]
        Here $\vbeta$ is an $nd$-dimensional multi-index. 
        When $(\vx_1, \ldots, \vx_n)$ with some $\vx_i = \bm{0}$ is a quadrature node, $g(\vx_1,\ldots, \vx_n)$ is indeterminate and set to $0$ in the numerical quadrature. 

        The prefactor of the $\Or(m^{-(d+\gamma_\textup{min})}\ln m)$ remainder is bounded by an $\Or(1)$ constant that depends on the upperbounds of functions 
         $|\partial^{\valpha}f_i|/|\vx_i|^{a_i-|\valpha|}, 0 \leqslant |\valpha| \leqslant  a_i$ and $|\partial^{\valpha}f_i|, a_i < |\valpha| \leqslant d+\gamma_\textup{min}$ with $(\vx_1,\ldots, \vx_n)\in V^{\times n}$. 
\end{thm}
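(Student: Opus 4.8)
The plan is to deduce the formula from the partial Euler--Maclaurin identity of \cref{eqn:error_split} and \cref{eqn:error_split_smooth}, by carefully accounting for the contributions of the subcubes that meet the singular set $S := \bigcup_{i=1}^{n}\{\vx_i = \bm 0\}$, in whose complement $g$ is real-analytic. Partition $V^{\times n}$ into the $m^{nd}$ subcubes $\{V_\vi\}$ of edge length $1/m$ and let $\mathcal{T}$ index those $\vi$ with $\overline{V_\vi}\cap S \neq \emptyset$; since $S$ is a union of $n$ codimension-$d$ coordinate subspaces, $|\mathcal{T}| = \Or(m^{(n-1)d})$ and $V_\mathcal{T} := \bigcup_{\vi\in\mathcal{T}}V_\vi$ lies in an $\Or(1/m)$-neighborhood of $S$. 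The first ingredient is the pointwise bound
\[
\bigl|g^{(\vbeta)}(\vx_1,\ldots,\vx_n)\bigr| \;\lesssim\; \prod_{i=1}^{n}|\vx_i|^{\gamma_i - |\vbeta_i|}, \qquad (\vx_1,\ldots,\vx_n)\notin S,
\]
where $\vbeta_i$ is the restriction of the $nd$-dimensional multi-index $\vbeta$ to the $i$-th block of variables; this follows from the analyticity of each $f_i$ at $\vx_i = \bm 0$, the vanishing $f_i = \Or(|\vx_i|^{a_i})$, and positive-definiteness of $M$, via the Leibniz rule. Keeping the constants explicit in this step is exactly what produces the stated dependence of the remainder prefactor on the derivative bounds of the $f_i$ (the split $|\valpha|\le a_i$ versus $a_i < |\valpha| \le d+\gamma_\textup{min}$ reflects whether the power $|\vx_i|^{\gamma_i - |\valpha|}$ is supplied by the Taylor expansion of $f_i$ or by its higher derivatives).

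Next I apply the standard Euler--Maclaurin formula \cref{thm:em_general} on each regular subcube ($\vi\notin\mathcal{T}$) to an order $l$ fixed to be any integer $\ge d+\gamma_\textup{min}$, and sum; by \cref{eqn:error_split_smooth} this gives $\sum_{\vi\notin\mathcal{T}}\mathcal{E}_{V_\vi} = \sum_{s=1}^{l-1}m^{-s}\sum_{|\vbeta|=s}c_\vbeta(\vx^{(n)})\int_{V^{\times n}\setminus V_\mathcal{T}}g^{(\vbeta)} + m^{-l}\sum_{|\vbeta|=l}\int_{V^{\times n}\setminus V_\mathcal{T}}h_{\vbeta,\vx^{(n)}}(\cdots)\,g^{(\vbeta)}$. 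For $1 \le s \le d+\gamma_\textup{min}-1$ one has $|\vbeta_i|\le s < d+\gamma_i$ for every $i$, so $g^{(\vbeta)}$ is integrable over all of $V^{\times n}$; writing $\int_{V^{\times n}\setminus V_\mathcal{T}} = \int_{V^{\times n}} - \int_{V_\mathcal{T}}$ extracts the claimed main terms, while the pointwise bound together with $\int_{|\vx_i|\le c/m}|\vx_i|^{\gamma_i-|\vbeta_i|}\,\ud\vx_i = \Or(m^{|\vbeta_i|-d-\gamma_i})$ gives $m^{-s}\int_{V_\mathcal{T}}|g^{(\vbeta)}| = \Or(m^{-(d+\gamma_\textup{min})})$. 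For $d+\gamma_\textup{min}\le s\le l$ (which absorbs the $h_\vbeta$ remainder) I estimate $\int_{V^{\times n}\setminus V_\mathcal{T}}|g^{(\vbeta)}|$ directly, bounding each one-dimensional factor $\int_{|\vx_i|\ge c/m,\ \vx_i\in V}|\vx_i|^{\gamma_i-|\vbeta_i|}\,\ud\vx_i$ by $\Or(1)$, $\Or(\ln m)$, or $\Or(m^{|\vbeta_i|-d-\gamma_i})$ according to the sign of $|\vbeta_i|-d-\gamma_i$; since $\sum_i|\vbeta_i| = s \ge d+\gamma_\textup{min}$, a short case analysis shows $m^{-s}$ times the resulting product is $\Or(m^{-(d+\gamma_\textup{min})}\ln m)$, with the logarithm appearing only at the critical value $s = d+\gamma_\textup{min}$.

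Finally I bound the $\Or(m^{(n-1)d})$ singular subcubes: on each, $|\mathcal{E}_{V_\vi}(g,\{\vx_\vi\})|\le \int_{V_\vi}|g| + |V_\vi|\,|g(\vx_\vi)|$. The global bound $|g|\lesssim\prod_i|\vx_i|^{\gamma_i}$ and $\gamma_i > -d$ give $\sum_{\vi\in\mathcal{T}}\int_{V_\vi}|g| \le \sum_i\int_{\{|\vx_i|\le c/m\}}\prod_j|\vx_j|^{\gamma_j} = \Or(m^{-(d+\gamma_\textup{min})})$. For the node term I use that, because the offset $\vx^{(n)}$ is fixed, each block $\vx_\vi^{(i)}$ of a mesh node either keeps a distance $\gtrsim 1/m$ from $\{\vx_i=\bm 0\}$ or lies exactly on it (in which case $g(\vx_\vi)$ is assigned the value $0$), so that $|g(\vx_\vi)|\lesssim\prod_i|\vx_\vi^{(i)}|^{\gamma_i}$ and a Riemann-sum estimate in which the contribution of the node closest to each hyperplane is controlled precisely by $\gamma_i > -d$ yields $m^{-nd}\sum_{\vi\in\mathcal{T}}|g(\vx_\vi)| = \Or(m^{-(d+\gamma_\textup{min})})$. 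Adding the three pieces gives the theorem. I expect the main obstacle to be the two estimates in which $g^{(\vbeta)}$ (or $g$ itself) is genuinely non-integrable near $S$: there one cannot ``fill in'' the singular region, and, just as in Lyness's treatment of homogeneous functions but without invoking any special structure of homogeneous polynomials, the crux is to show that the error accumulated over the $\Or(m^{(n-1)d})$ subcubes clustered along $S$ is governed entirely by the single worst exponent $\gamma_\textup{min}$, the lone logarithmic loss arising from the scale $|\vx_i|\sim 1/m$.
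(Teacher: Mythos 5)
Your proposal is correct and follows essentially the same route as the paper's proof: the same decomposition of $V^{\times n}$ into subcubes meeting or avoiding the singular set, the same Leibniz-rule bound $|g^{(\vbeta)}|\lesssim\prod_i|\vx_i|^{\gamma_i-|\vbeta_i|}$ (which also yields the stated prefactor dependence), the partial Euler--Maclaurin formula on the regular subcubes with the split $\int_{V^{\times n}\setminus V_{\mathcal{T}}}=\int_{V^{\times n}}-\int_{V_{\mathcal{T}}}$ for $s\leqslant d+\gamma_{\textup{min}}-1$, the logarithmic loss isolated at the critical order, and the direct integral-plus-node bound on the singular subcubes using $\gamma_i>-d$. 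The only cosmetic difference is that you let the Euler--Maclaurin order $l$ exceed $d+\gamma_{\textup{min}}$ and absorb the kernel remainder into your case analysis, whereas the paper fixes $l=d+\gamma_{\textup{min}}$; this changes nothing essential.
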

\begin{proof}   
        Let $\{V_i\}$ be the $m^d$-sized uniform partitioning of $V$, and  $\mathcal{T}$ be the indices of all subdomains $V_i$ intersecting with $\vx = \bm{0}$.
        Let $V_\mathcal{T} = \bigcup_{i\in\mathcal{T}}V_i$. 
        When $m$ is even, there are $2^d$ subdomains with $\vx = \bm{0}$ on their vertices and \REV{$V_\mathcal{T} = [-\frac{1}{m}, \frac{1}{m}]^d$}. 
        When $m$ is odd, there is one subdomain with $\vx=\bm{0}$ at its center and \REV{$V_\mathcal{T} = [-\frac{1}{2m}, \frac{1}{2m}]^d$}. 
        Correspondingly, $V^{\times n}$ is partitioned into $m^{nd}$ subdomains $\{V_{j_1} \times \cdots \times V_{j_n}\}$ in $\mathbb{R}^{nd}$. 
        Multi-indices $(j_1,\ldots, j_n)$ of the subdomains where $g(\vx_1,\ldots, \vx_n)$ is non-smooth are collected as 
        \[
        \mathcal{T}_n = \{(j_1,\ldots, j_n): \exists 1 \leqslant k\leqslant n\ \text{ s.t. }\ j_k \in \mathcal{T}\}. 
        \]
        Following the idea in \cref{eqn:error_split} and \cref{eqn:error_split_smooth}, the quadrature error can be split as
        \begin{align}
                \mathcal{E}_{V^{\times n}}(g,\mathcal{X}^{(n)})
                & = 
                \sum_{(j_1,\ldots, j_n)\in\mathcal{T}_n}
                \mathcal{E}_{V_{j_1}\times \cdots \times V_{j_n}}(g,\{(\vx_{j_1},\cdots, \vx_{j_n})\})
                + 
                \sum_{(j_1,\ldots, j_n)\not\in \mathcal{T}_n} 
                \mathcal{E}_{V_{j_1}\times \cdots \times V_{j_n}}(g,\{(\vx_{j_1},\cdots, \vx_{j_n})\})
                \nonumber\\
                & = 
                \mathcal{E}_{V^{\times n}_{\mathcal{T}_n}}(g, \mathcal{X}^{(n)}_{\mathcal{T}_n})
                + 
                \sum_{s=1}^{d+\gamma_\text{min}-1} \dfrac{A_s}{m^s} + \dfrac{R_{d+\gamma_\text{min}}}{m^{d+\gamma_\text{min}}},
                \label{eqn:error_split_n}
        \end{align}
        where $V^{\times n}_{\mathcal{T}_n} = \bigcup_{(j_1, \ldots, j_n)\in \mathcal{T}_n}\left(V_{j_1}\times \cdots \times V_{j_n}\right)$, $\mathcal{X}^{(n)}_{\mathcal{T}_n} 
        = \bigcup_{(j_1, \ldots, j_n)\in \mathcal{T}_n}\{(\vx_{j_1}, \ldots, \vx_{j_n})\}$, and 
        \begin{align*}
                A_s 
                & = \sum_{|\vbeta| = s} c_\vbeta(\vx_*^{(n)}) \int_{V^{\times n} \setminus V^{\times n}_{ \mathcal{T}_n}} g^{(\vbeta)}(\vx_1, \ldots, \vx_n) \ud \vx_1\ldots \ud \vx_n, 
                \\
                R_{d+\gamma_\text{min}}
                & 
                = 
                \sum_{|\vbeta| = d+\gamma_\text{min}}  \int_{V^{\times n} \setminus V^{\times n}_{\mathcal{T}_n}}  h_{\vbeta, \vx_*^{(n)}} \left( m((\vx_1,\ldots, \vx_n)- (\vb, \ldots, \vb))\right) g^{(\vbeta)}(\vx_1, \ldots, \vx_n) \ud \vx_1\ldots \ud \vx_n. 
        \end{align*}
        Here, $\vb = (-\frac12, \ldots, -\frac12)$ denotes the corner of hypercube $V$. 
        
        The main proof below involves four parts: 
        \begin{enumerate}
                \item The derivatives $g^{(\vbeta)}$ in $A_s$ with $|\vbeta| = s \leqslant d+\gamma_\text{min}-1$ are integrable in $V^{\times n}$ and thus $A_s$ can be split as 
                \begin{equation}\label{eqn:As_split}
                        A_s  = \sum_{|\vbeta| = s} c_\vbeta(\vx_*^{(n)}) \int_{V^{\times n}} g^{(\vbeta)}\ud \vx_1\ldots \ud \vx_n 
                        - 
                        \sum_{|\vbeta| = s} c_\vbeta(\vx_*^{(n)}) \int_{V^{\times n}_{\mathcal{T}_n}} g^{(\vbeta)}\ud \vx_1\ldots \ud \vx_n.  
                \end{equation}
                \item Estimate of the integrals in the second part of the above splitting of $A_s$ as 
                \[
                \int_{V^{\times n}_{\mathcal{T}_n}} g^{(\vbeta)}\ud \vx_1\ldots \ud \vx_n   = \Or\left( \frac{1}{m^{d+\gamma_\text{min} - |\vbeta|} }\right). 
                \]
                \item Estimate of the remainder term $R_{d+\gamma_\text{min}}$ as 
                \[
                R_{d+\gamma_\text{min}} = \Or(\ln m). 
                \]
                \item Estimate of the quadrature error in the volume elements with non-smooth integrands as 
                \[
                \mathcal{E}_{V^{\times n}_{\mathcal{T}_n}}(g, \mathcal{X}^{(n)}_{\mathcal{T}_n}) = 
                \Or\left( \frac{1}{m^{d+\gamma_\text{min}}}\right). 
                \] 
        \end{enumerate}
        Combining these four results with \cref{eqn:error_split_n} gives the final formula of the theorem.
        
        We first introduce some basic tools.
        The derivative $g^{(\vbeta)}$ can be expanded as a linear combination of terms 
        \begin{equation}\label{eqn:expansion_derivative}
                g^{(\vbeta_1, \ldots, \vbeta_{n})}(\vx_1,\ldots, \vx_n)
                =
                \prod_{i=1}^n
                \dfrac{\partial^{|\vbeta_i|}}{\partial (\vx_1,\ldots, \vx_n)^{\vbeta_i}}
                \dfrac{f_i(\vx_1,\ldots, \vx_n)}{ (\vx_i^T M \vx_i)^{p_i}},
        \end{equation}
        with $\vbeta_1 + \cdots + \vbeta_{n} = \vbeta$.  
        Note that $\vbeta_i$'s are all $nd$-dimensional multi-indices. 
        Using the condition that $f_i = \Or(|\vx_i|^{a_i})$ near $\vx_i = \bm{0}$, it can be shown that near $\vx_i = \bm{0}$, 
        \begin{align}
                \left|
                \partial^{\vbeta_i}\dfrac{f_i}{ (\vx_i^T M \vx_i)^{p_i}}
                \right|
                & = 
                \left|
                \sum_{\valpha_1 + \valpha_2 = \vbeta_i} {\vbeta_i \choose \valpha_1} \partial^{\valpha_1}  f_i(\vx_1,\ldots, \vx_n) \partial^{\valpha_2} \dfrac{1}{(\vx_i^TM\vx_i)^{p_i}}
                \right|
                \nonumber \\
                & 
                = 
                \sum_{\valpha_1 + \valpha_2 = \vbeta_i}  \Or( |\vx_i|^{\max\{a_i - |\valpha_1|, 0\}}|\vx_i|^{-2p_i - |\valpha_2|})
                = \Or(|\vx_i|^{\gamma_i - |\vbeta_i|}). 
                \label{eqn:estimate_derivative}
        \end{align}
        This estimate can be extended to all $(\vx_1,\ldots, \vx_n)\in V^{\times n}$ by the function smoothness outside $\vx_i =\bm{0}$.  
        
        In the following discussions, we detail the four parts of the proof. 
        \paragraph{Part 1: Integrability of $g^{(\vbeta)}$ in $V^{\times n}$ with $|\vbeta|\leqslant d+\gamma_\text{min}-1$}
        Since $g^{(\vbeta)}$ is a linear combination of $g^{(\vbeta_1, \ldots, \vbeta_{n})}$ in \cref{eqn:expansion_derivative} with $\vbeta_1+ \ldots + \vbeta_{n} = \vbeta$, we only need to prove the integrability of each $g^{(\vbeta_1, \ldots, \vbeta_{n})}$ in $V^{\times n}$. 
        According to \cref{eqn:estimate_derivative}, we have 
        \[
                \int_{V^{\times n}} 
                \ud\vx_1\ldots \ud\vx_n
                |g^{(\vbeta_1, \ldots, \vbeta_{n})}|
                \lesssim
                \int_{V^{\times n}}
                \ud\vx_1\ldots \ud\vx_n
                |\vx_1|^{\gamma_1 - |\vbeta_1|}
                \ldots  
                |\vx_n|^{\gamma_n - |\vbeta_n|}.
        \]
        Since $\gamma_i - |\vbeta_i| \geqslant \gamma_\text{min}- |\vbeta| \geqslant -d +1$, the right hand side of the above inequality is finite and thus $g^{(\vbeta_1, \ldots, \vbeta_{n})}$ is integrable in $V^{\times n}$. 
        
        \paragraph{Part 2: Estimate of the integrals in $A_s$ over the volume elements with non-smooth integrands}
        Note that 
        \[
                V^{\times n}_{\mathcal{T}_n} = \bigcup_{k=1}^n (V \times \ldots \times \underbrace{V_\mathcal{T}}_{\text{for}\ \vx_k} \times \ldots \times V). 
                \]
        Since $g^{(\vbeta)}$ is a linear combination of $g^{(\vbeta_1, \ldots, \vbeta_{n})}$ with $\Or(1)$ coefficients, we instead estimate the integral of each $g^{(\vbeta_1, \ldots, \vbeta_{n})}$ with $\vbeta_1+ \ldots + \vbeta_{n} = \vbeta$ where $|\vbeta| =s \leqslant d+\gamma_\text{min}-1$. 
        The integral of $|g^{(\vbeta_1, \ldots, \vbeta_{n})}|$ over $V^{\times n}_{\mathcal{T}_n}$ in the splitting \cref{eqn:As_split} of $A_s$ can be first bounded as
        \[
        \int_{ V^{\times n}_{\mathcal{T}_n}} 
        \left|g^{(\vbeta_1, \ldots, \vbeta_n)}\right|
        \ud \vx_1\ldots \ud \vx_n       
        \leqslant 
        \sum_{k=1}^n
        \int_{V\times \ldots \times V_\mathcal{T} \times \ldots \times V} 
        \left|g^{(\vbeta_1, \ldots, \vbeta_n)}  \right|\ud \vx_1\ldots \ud \vx_n.
        \]
        Without loss of generality let us consider $k=1$. The corresponding term of the right hand side above could be further estimated as,
        \begin{align*}
                & 
                \int_{V_\mathcal{T}\times V^{\times (n-1)}} 
                \left|g^{(\vbeta_1, \ldots, \vbeta_n)}  \right|\ud \vx_1\ldots \ud \vx_n
                \\
                \lesssim & 
                \int_{V_\mathcal{T}\times V^{\times (n-1)}} \ud\vx_1\ldots \ud\vx_n 
                |\vx_1|^{\gamma_1 - |\vbeta_1|}\cdots |\vx_n|^{\gamma_n - |\vbeta_n|}
                \\
                \lesssim & 
                \int_{B(\bm{0}, Cm^{-1})\times V^{\times (n-1)}} \ud\vx_1\ldots \ud\vx_n
                |\vx_1|^{\gamma_1 - |\vbeta_1|}\cdots |\vx_n|^{\gamma_n - |\vbeta_n|}
                \\
                = & 
                \int_{B(\bm{0}, Cm^{-1})} |\vx_1|^{\gamma_1 - |\vbeta_1|}\ud\vx_1
                \int_{V^{\times (n-1)}} \ud\vx_2\ldots \ud\vx_n
                |\vx_2|^{\gamma_2 - |\vbeta_2|}\cdots |\vx_n|^{\gamma_n - |\vbeta_n|}
                \\
                = & 
                \Or\left(
                \dfrac{1}{m^{d+\gamma_1 - |\vbeta_1|}}
                \right),
        \end{align*}
        where $B(\vz, r)$ denotes a ball in $\mathbb{R}^d$ centered at $\vz$ with radius $r$, and the second inequality uses the fact that 
        \[
        V_{\mathcal{T}} \subset B\left(\bm{0}, Cm^{-1}\right), 
        \]
        with some $\Or(1)$ constant $C$. 
        Thus, we have 
        \begin{align}
                \left|
                \int_{ V^{\times n}_{\mathcal{T}_n}} 
                g^{(\vbeta)}\ud \vx_1\ldots \ud \vx_n
                \right|
                & \lesssim
                \sum_{\vbeta_1 + \ldots + \vbeta_n = \vbeta}
                \left|
                \int_{ V^{\times n}_{\mathcal{T}_n}} 
                g^{(\vbeta_1, \ldots, \vbeta_n)}\ud \vx_1\ldots \ud \vx_n
                \right|
                \nonumber \\
                & =
                \sum_{k=1}^n
                \Or\left(
                \dfrac{1}{m^{d+\gamma_k - |\vbeta_k|}}
                \right)
                = \Or\left(
                \dfrac{1}{m^{d+\gamma_\text{min} - |\vbeta|}}
                \right). 
                \label{eqn:bound_gbeta2}
        \end{align}
        
        \paragraph{Part 3: Estimate of the remainder term $R_{d+\gamma_\text{min}}$}
        To estimate each integral over $g^{(\vbeta)}$ in the remainder term, we first note that $h_{\vbeta, \vx_*^{(n)}}$ is bounded (see \cref{thm:em_general}) and we have 
        \begin{align}
                & 
                \left|
                \int_{V^{\times n} \setminus V^{\times n}_{\mathcal{T}_n}}  h_{\vbeta, \vx_*^{(n)}}(\cdots)g^{(\vbeta)}(\vx_1, \ldots, \vx_n) \ud \vx_1\ldots \ud \vx_n
                \right|
                \nonumber \\
                \lesssim 
                & 
                \int_{V^{\times n} \setminus V^{\times n}_{\mathcal{T}_n}}         \left| g^{(\vbeta)}(\vx_1, \ldots, \vx_n)       \right|\ud \vx_1\ldots \ud \vx_n
                \nonumber\\
                \lesssim 
                & 
                \sum_{\vbeta_1 + \ldots + \vbeta_n = \vbeta}
                \int_{V^{\times n} \setminus V^{\times n}_{\mathcal{T}_n}}
                |\vx_1|^{\gamma_1 - |\vbeta_1|}\ldots 
                |\vx_n|^{\gamma_n - |\vbeta_n|}
                \ud\vx_1\ldots \ud\vx_n.
                \label{eqn:bound_remainder}
        \end{align}
        It is thus sufficient to estimate the last integral above with $|\vbeta_1|+ \ldots + |\vbeta_{n+m}| = |\vbeta| = d+\gamma_\text{min}$.  
        There are two scenarios to consider. 
        \begin{enumerate}
                \item $\gamma_i - |\vbeta_i| = -d$ for some $i$.  
                Without loss of generality consider $i = 1$. Since $|\vbeta_1|\leqslant |\vbeta| = d+\gamma_\text{min}$ and $\gamma_1 \geqslant \gamma_\text{min}$, we have $\gamma_1 = \gamma_\text{min}$, $|\vbeta_1| = |\vbeta|$ and $\vbeta_i = \bm{0}$ for all $i \neq 1$.
                Then the integral on the right hand side of \cref{eqn:bound_remainder} can be bounded by
                \begin{align*}
                        &\quad \int_{V^{\times n} \setminus V^{\times n}_{\mathcal{T}_n}} 
                        |\vx_1|^{-d}|\vx_2|^{\gamma_2}
                        \cdots 
                        |\vx_n|^{\gamma_n}
                        \ud \vx_1\ldots \ud \vx_n\\
                        & \leqslant
                        \int_{(V\setminus V_\mathcal{T})\times V^{\times (n-1)}}  
                        |\vx_1|^{-d}|\vx_2|^{\gamma_2}
                        \cdots 
                        |\vx_n|^{\gamma_n}
                        \ud \vx_1\ldots \ud \vx_n
                        \\
                        &
                        \lesssim
                        \int_{V\setminus V_\mathcal{T}} |\vx_1|^{-d}  \ud \vx_1
                        \\
                        & 
                        \lesssim\int_{1/m}^{1}  r^{-d + d-1} \ud r 
                        = \Or( \ln m),
                \end{align*}
                where the first inequality uses the fact that $V^{\times n} \setminus V^{\times n}_{\mathcal{T}_n}\subset (V\setminus V_\mathcal{T})\times V^{\times (n-1)}$.
                
                \item $\gamma_i - |\vbeta_i| \geqslant -d+1$ for $i = 1,\ldots, n$. In this case, $\int_{V^{\times n}}          |\vx_1|^{\gamma_1 - |\vbeta_1|}
                \cdots 
                |\vx_n|^{\gamma_n - |\vbeta_n|} \ud \vx_1\ldots \ud \vx_n$ is finite and thus 
                \begin{align*}
                        \int_{V^{\times n} \setminus V^{\times n}_{\mathcal{T}_n}}                 
                        |\vx_1|^{\gamma_1 - |\vbeta_1|}
                        \cdots 
                        |\vx_n|^{\gamma_n - |\vbeta_n|}
                        \ud \vx_1\ldots \ud \vx_n
                        & 
                        = \Or(1).
                \end{align*}
        \end{enumerate}
        Combining the above estimation with \cref{eqn:bound_remainder}, we obtain
        \[
        R_{d+\gamma_\text{min}} = \Or(\ln m).
        \]
        
        \paragraph{Part 4: Estimate of the quadrature error in the volume elements with non-smooth integrand}
        Using a similar discussion as in \cref{eqn:bound_gbeta2}, the integral involved in $\mathcal{E}_{V^{\times n}_{\mathcal{T}_n}}(g, \mathcal{X}^{(n)}_{\mathcal{T}_n})$ can be first estimated as
        \begin{align*}
                \left|\mathcal{I}_{V^{\times n}_{\mathcal{T}_n}}(g)\right|
                =
                \Or\left(\dfrac{1}{m^{\gamma_\text{min} + d} }\right). 
        \end{align*}
        The quadrature part can be estimated as 
        \begin{align*}
                \left|
                \mathcal{Q}_{V^{\times n}_{\mathcal{T}_n}}(g, \mathcal{X}^{\times n}_{\mathcal{T}_n})
                \right|
                & \leqslant
                \dfrac{1}{m^{nd}}
                \sum_{(j_1,\ldots, j_n)\in\mathcal{T}_n}
                \left|
                g(\vx_{j_1},\cdots, \vx_{j_n})
                \right|
                \\
                & \leqslant 
                \dfrac{1}{m^{nd}}
                \sum_{i=1}^n
                \sum_{j_i \in \mathcal{T},\  j_1,\ldots,j_{i-1}, j_{i+1},\ldots  j_n}
                \left|
                g(\vx_{j_1},\cdots, \vx_{j_n})
                \right|
                \\
                & \lesssim
                \dfrac{1}{m^{nd}}
                \sum_{i=1}^n
                \sum_{j_i \in \mathcal{T},\  j_1,\ldots,j_{i-1}, j_{i+1},\ldots  j_n}
                \dfrac{1}{m^{\gamma_i}}
                \\
                & = 
                \dfrac{1}{m^{d}}
                \sum_{i=1}^n
                \sum_{j_i \in \mathcal{T}}
                \dfrac{1}{m^{\gamma_i}} = \Or\left(\dfrac{1}{m^{\gamma_\text{min}+ d} }\right).
        \end{align*}
        Thus, the quadrature error in all the volume elements with non-smooth integrands can be estimated as
        \begin{equation*}
                \mathcal{E}_{V^{\times n}_{\mathcal{T}_n}}(g, \mathcal{X}^{\times n}_{\mathcal{T}_n}) \le \left|\mathcal{I}_{V^{\times n}_{\mathcal{T}_n}}(g)\right|+\left|
                \mathcal{Q}_{V^{\times n}_{\mathcal{T}_n}}(g, \mathcal{X}^{\times n}_{\mathcal{T}_n})
                \right|=  \Or\left(\dfrac{1}{m^{\gamma_\text{min} + d} }\right). 
        \end{equation*}
        
        In the analysis of the four parts above, it can be noted that the prefactor of all the $\Or(m^{-(d+\gamma_\text{min})})$ and $\Or(\ln m)$ estimates depends on the prefactor of the estimate in \cref{eqn:estimate_derivative}, i.e., 
        \[
        \left|
        \partial^{\vbeta_i}\dfrac{f_i}{ (\vx_i^T M \vx_i)^{p_i}}
        \right|
        = \Or(|\vx_i|^{\gamma_i - |\vbeta_i|}), \quad (\vx_1,\ldots, \vx_n)\in V^{\times n},
        \]
        which is further proportional to the
        upper bound of all the functions $|\partial^{\valpha}f_i|/|\vx_i|^{a_i-|\valpha|}, 0 \leqslant |\valpha| \leqslant  a_i$ and $|\partial^{\valpha}f_i|, a_i < |\valpha| \leqslant d+\gamma_\text{min}$. 
        This concludes the characterization of the prefactor in the $\Or(m^{-(d+\gamma_\text{min})}\ln m)$ remainder term in the obtained Euler-Maclaurin formula. 
\end{proof}

\REV{
Although not directly related to the application in this paper, this quadrature error analysis result can be generalized to the case where $\gamma_i$ is non-integer.}
Based on \cref{thm:em_n_fraction}, \cref{cor:em_n_fraction} characterizes the quadrature error for fractional-form functions that also satisfy periodic boundary condition on $\partial V^{\times n}$. 

\begin{cor}[Generalized Euler-Maclaurin formula with periodic boundary conditions]
\label{cor:em_n_fraction}
        Under the setting of \cref{thm:em_n_fraction}, if $g(\vx_1,\ldots, \vx_n)$ and its derivatives also satisfy the periodic boundary condition on $\partial V^{\times n}$, all the integrals of $g^{(\vbeta)}$ in the derived Euler-Maclaurin formula vanish and 
        \[
        \mathcal{E}_{V^{\times n}}(g, \mathcal{X}^{(n)})  = \Or\left(\dfrac{\ln m}{m^{d+\gamma_\text{min}}}\right). 
        \]
\end{cor}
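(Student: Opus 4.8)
The plan is to read the result off directly from \cref{thm:em_n_fraction}. That theorem gives
\[
\mathcal{E}_{V^{\times n}}(g, \mathcal{X}^{(n)})
= \sum_{s=1}^{d+\gamma_\text{min}-1} \dfrac{1}{m^s}
\sum_{|\vbeta| = s} c_\vbeta(\vx^{(n)}) \int_{V^{\times n}} g^{(\vbeta)}\ud \vx_1\ldots \ud\vx_n
+ \Or\left(\dfrac{\ln m}{m^{d+\gamma_\text{min}}}\right),
\]
so the entire claim reduces to showing $\int_{V^{\times n}} g^{(\vbeta)}\ud\vx_1\ldots\ud\vx_n = 0$ for every multi-index $\vbeta$ with $1\leqslant|\vbeta|\leqslant d+\gamma_\text{min}-1$. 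This is the non-smooth analogue of the step that takes \cref{thm:em_general} to \cref{cor:euler_maclaurin}: once the main sum is annihilated, only the remainder survives.

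To establish the vanishing I would fix such a $\vbeta$, label the $nd$ scalar coordinates by $1,\ldots,nd$, pick a slot $\ell$ with $\beta_\ell\geqslant 1$, and peel off one derivative: writing $\vbeta=\vbeta'+\ve_\ell$ with $\ve_\ell$ the unit multi-index in direction $\ell$, we have $g^{(\vbeta)}=\partial_{x_\ell}g^{(\vbeta')}$ and $|\vbeta'|=|\vbeta|-1\leqslant d+\gamma_\text{min}-2$. Part~1 of the proof of \cref{thm:em_n_fraction} already supplies the integrability of $g^{(\vbeta)}$ on $V^{\times n}$, so Fubini reduces the claim to integrating $\partial_{x_\ell}g^{(\vbeta')}$ in $x_\ell$ over $[-\tfrac12,\tfrac12]$ and then over the other coordinates. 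For almost every value of those other coordinates the corresponding $x_\ell$-segment is disjoint from the closed, measure-zero singular set $\bigcup_i\{\vx_i=\bm{0}\}$, hence $g^{(\vbeta')}$ is $C^\infty$ along it and the one-dimensional fundamental theorem of calculus gives
\[
\int_{-1/2}^{1/2}\partial_{x_\ell}g^{(\vbeta')}\,\ud x_\ell
= g^{(\vbeta')}\big|_{x_\ell=1/2}-g^{(\vbeta')}\big|_{x_\ell=-1/2}
= 0,
\]
the last equality being the periodic boundary condition for $g^{(\vbeta')}$, a derivative of $g$ of order at most $d+\gamma_\text{min}-2$, on the pair of faces $\{x_\ell=\pm\tfrac12\}$ of $V^{\times n}$. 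Integrating this identity over the remaining coordinates yields $\int_{V^{\times n}}g^{(\vbeta)}=0$; substituting back into the displayed formula collapses the finite sum and leaves $\mathcal{E}_{V^{\times n}}(g,\mathcal{X}^{(n)})=\Or(m^{-(d+\gamma_\text{min})}\ln m)$.

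The one place that needs care, and what I expect to be the main obstacle, is the measure-theoretic bookkeeping at the singularities: justifying the Fubini interchange and the fundamental theorem of calculus along lines that may pass arbitrarily close to (or, in degenerate directions, through) the singular set of $g$. I would handle this exactly as in the passage to \cref{cor:euler_maclaurin}, using the integrability estimates from Part~1 of \cref{thm:em_n_fraction} together with the fact that the singular set is a finite union of coordinate subspaces, so that almost every $x_\ell$-slice lies in a region where $g$ is smooth (the slice can fail to do so only when all coordinates of some block other than the one containing $\ell$ vanish, or all the other coordinates of the block containing $\ell$ vanish, each a measure-zero condition when $d\geqslant 2$; and when $d=1$ the finite sum is non-empty only for $\gamma_\text{min}\geqslant 1$, in which case $g$ is already continuous on all of $V^{\times n}$ and no exceptional slices arise). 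No regularity beyond the stated hypotheses---periodicity of $g$ and of its derivatives up to order $d+\gamma_\text{min}-2$---is used.
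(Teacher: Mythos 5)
Your proposal is correct and follows the same route the paper intends: the corollary is read off from \cref{thm:em_n_fraction} exactly as \cref{cor:euler_maclaurin} is read off from \cref{thm:em_general}, with the integrals $\int_{V^{\times n}} g^{(\vbeta)}$ annihilated by the periodic boundary condition (your Fubini/fundamental-theorem argument, using the Part~1 integrability and the fact that the singular set is a finite union of coordinate subspaces, is a legitimate fleshing-out of this step, which the paper treats as immediate). No gap.
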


\REV{
\begin{rem}[Quadrature errors inside and outside the subdomains with singularity]
        \label{rem:error_splitting}
        According to \cref{eqn:error_split_n}, the overall quadrature error can be split into the error in the volume elements $V^{\times n}_{\mathcal{T}^n}$ 
        that contain the singular point $\vx = \bm{0}$ and the error in remaining volume elements. 
        From the estimate in \cref{thm:em_n_fraction}, these two parts contribute equally to the overall $\Or(m^{-(d+\gamma_\text{min})}\ln m)$ quadrature error for periodic functions in \cref{cor:em_n_fraction}. 
        For finite-size error corrections that compensate the omitted integral at the Coulomb singularity (such as the structure factor interpolation method mentioned in Introduction), 
        only the first part of the quadrature error above is corrected while the second part remains. 
        As a result, the overall quadrature error in general will not be reduced asymptotically. 
\end{rem}
}

\subsection{Numerical results}\label{sec:numer_em}
To demonstrate the sharpness of our error estimate in \cref{cor:em_n_fraction}, we consider a set of compactly supported functions listed in \cref{tab:testfunction} that are of the fractional form discussed in \cref{thm:em_n_fraction}. 
\cref{fig:cor_em} plots the numerical quadrature errors for these example functions by trapezoidal rules. 
The asymptotic scaling of these numerical results  is consistent with the analytic estimate in \cref{tab:testfunction} according to \cref{cor:em_n_fraction}.

\begin{table}[htbp]
        \centering
        \caption{Example functions for the numerical quadrature calculations in \cref{fig:cor_em}.
        The domain of integration for each variable  is $[-\frac12, \frac12]^d$. 
        Function $H(\vx)$ is the localizer defined in \cref{eqn:localizer} and $M$ equals $\text{diag}(10,1,0.1)$ for $d = 3$ and $\text{diag}(10,0.1)$ for $d = 2$.   
        The estimate of the quadrature error scaling is obtained according to \cref{cor:em_n_fraction}. 
        \label{tab:testfunction}}
        \begin{tabular}{l|c|c|l|c}
                \toprule
                 \multicolumn{1}{c}{} & \multicolumn{1}{c}{function form}                                                                                                                              & \multicolumn{1}{c}{dimension }                         &                \multicolumn{1}{c}{parameters}                                              & error scaling\\
                \midrule[0.2pt]
                $f_1(\vx)$         & \multirow{4}{*}{$\frac{H(\vx)\vx^\valpha}{(\vx^T M \vx)^p}$}         & $d=2$               & $\valpha=(0,2), p=1$                                              & $m^{-2}\ln m$          \\
                $f_2(\vx)$         &                                                                                                                                                        & $d=2$               & $\valpha=(0,4), p=2$                                              & $m^{-2}\ln m$          \\
                $f_3(\vx)$         &                                                                                                                                                        & $d=3$               & $\valpha=(0,0,0), p=1$                                            & $m^{-1}\ln m$          \\
                $f_4(\vx)$         &                                                                                                                                                        & $d=3$               & $\valpha=(0,2,2), p=1$                                            & $m^{-5}\ln m$          \\
                \midrule[0.2pt]
                $f_5(\vx_1,\vx_2)$ & 
                \multirow{2}{*}[-1ex]{$\frac{H(\vx_1)\vx_1^{\valpha_1}}{(\vx_1^T M \vx_1)^{p_1}}\frac{H(\vx_2)\vx_2^{\valpha_2}}{(\vx_2^T M \vx_2)^{p_2}}e^{-|\vx_1 + \vx_2|^2}$} 
                & $d=2$ & 
                \makecell[l]{$\valpha_1 = (0,2)$, $p_1=1$, \\\quad $\valpha_2 = (2,0)$, $p_2=1$} & $m^{-2}\ln m$          \\
                $f_6(\vx_1,\vx_2)$ &                                          & $d=2$ & 
                \makecell[l]{$\valpha_1 = (0,2)$, $p_1=1/2$,\\\quad $\valpha_2 = (2,0)$, $p_2=1/2$}    & $m^{-3}\ln m$         
                \\
                \bottomrule
        \end{tabular}
\end{table}

\begin{figure}[htbp]
        \centering
        \subfloat
        {
                \includegraphics[width=0.31\textwidth]{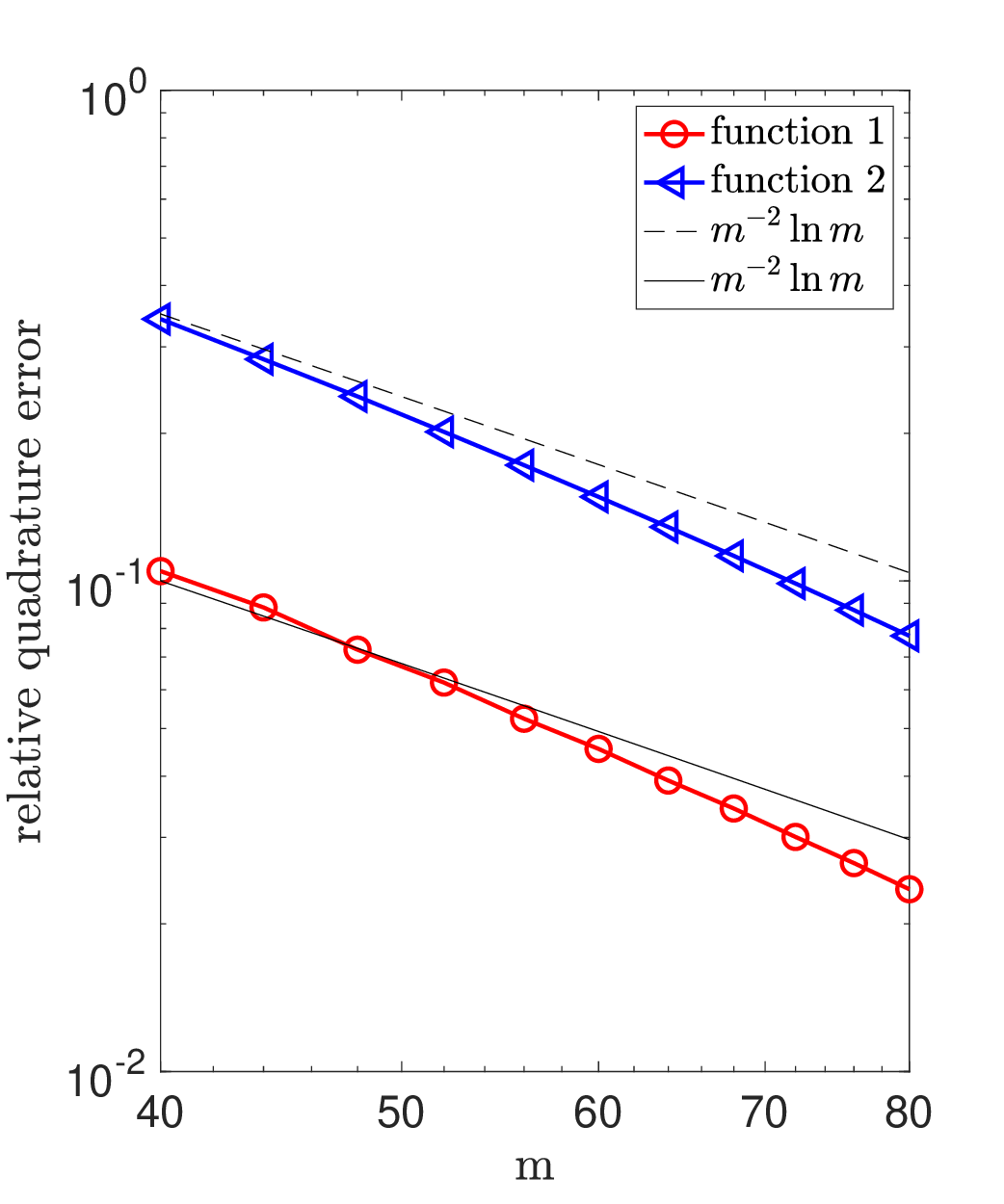}
        }
        \subfloat
        {
                \includegraphics[width=0.31\textwidth]{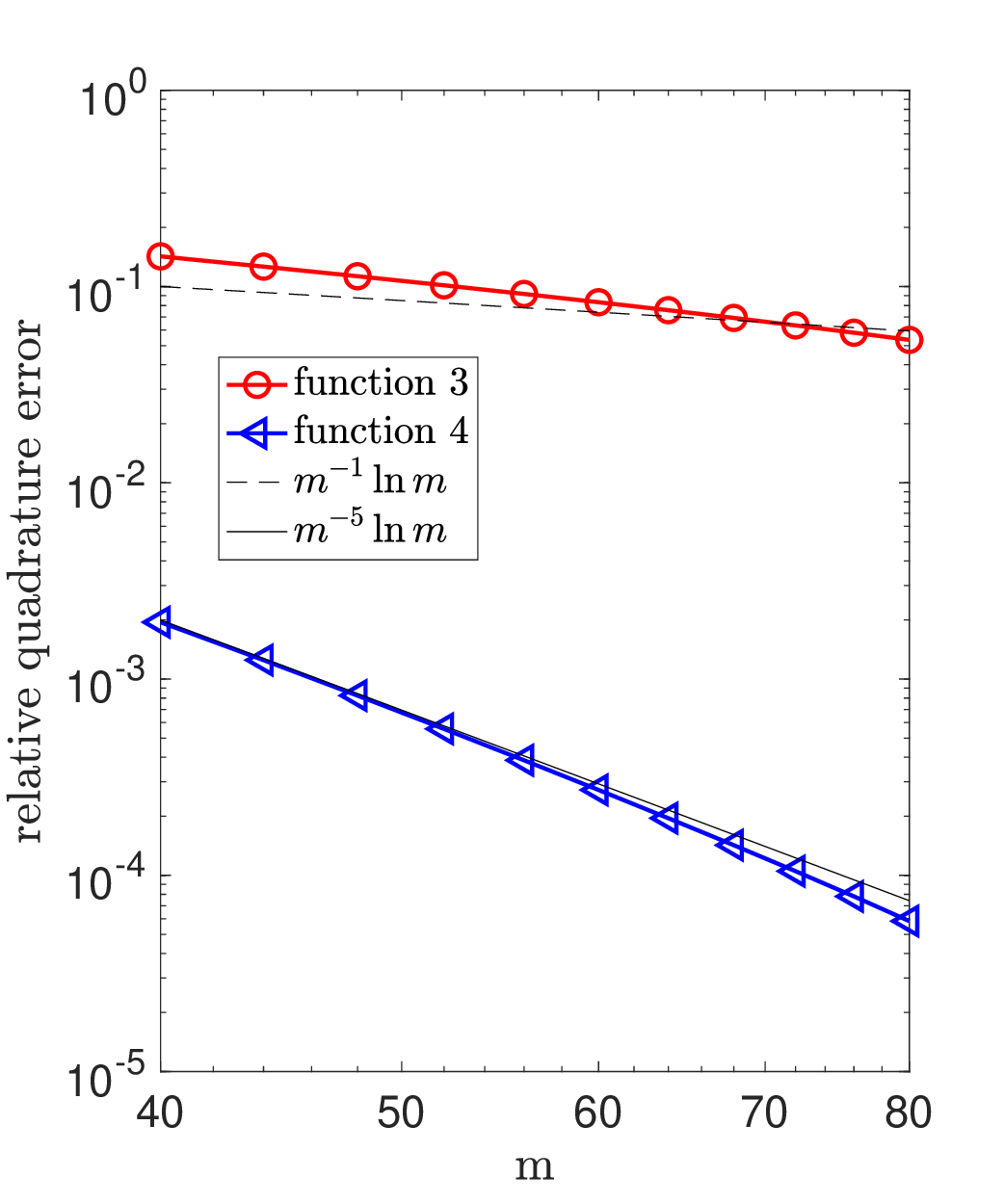}
        }
        \subfloat
        {
                \includegraphics[width=0.31\textwidth]{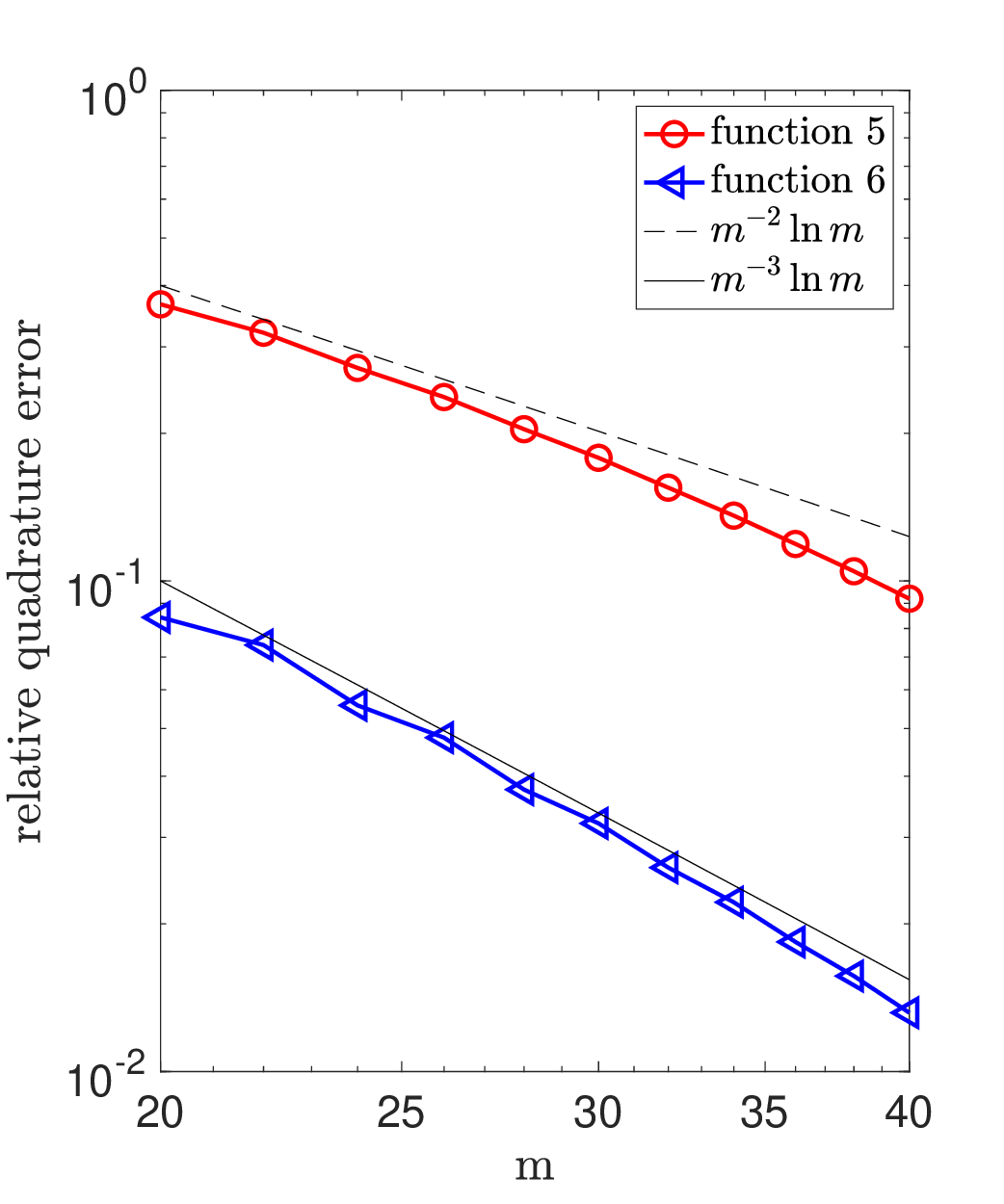}
        }
        \captionsetup{singlelinecheck=off}
        \caption[]{
                Relative quadrature errors of the example functions in \cref{tab:testfunction}. 
                For each test, a $\Gamma$-centered uniform mesh is used with $m$ points along each dimension in the integration domain.
                Reference integral values are computed with a sufficiently large $m$. 
                \label{fig:cor_em}
        }
\end{figure}

\section{Finite-size errors in the Hartree, potential, and kinetic energies}\label{appendix}
For completeness, we analyze the finite-size errors in the Hartree-Fock theory other than the Fock exchange term, namely the Hartree, potential, and kinetic energies. Unlike the Fock exchange energy and the MP2 correlation energy, the analysis of these terms does not involve singular integrands, and it is sufficient to analyze the finite-size errors using the standard Euler-Maclaurin formula in \cref{cor:euler_maclaurin}.

The Hartree energy with a finite MP mesh $\mathcal{K}$ can be computed as
\begin{align*}
        E_\text{h}(N_\vk)
        &  = \dfrac{1}{N_\vk} \sum_{ij}\sum_{\vk_i\vk_j\in\mathcal{K}} 
        \braket{i\vk_i,j \vk_j |i\vk_i, j\vk_j}
        \\&= 
        \dfrac{1}{|\Omega|}
        \xsum_{\vG\in\mathbb{L}^*}
        \frac{4\pi}{\abs{\vG}^2}  
        \left|
        \dfrac{1}{N_\vk}\sum_{i \vk_i}
        \hat{\varrho}_{i\vk_i, i\vk_i}(\mathbf{G}) 
        \right|^2
        \\&= 
        \dfrac{1}{|\Omega|}
        \xsum_{ \vG\in\mathbb{L}^*}
        \frac{4\pi}{\abs{\vG}^2}  
        \left|
        \int_{\Omega}\ud\vr e^{-\I\vG\cdot \vr}
        \dfrac{1}{N_\vk}\sum_{i \vk_i}
        |u_{i\vk_i}(\vr)|^2             \right|^2
        \\&= 
        \dfrac{1}{|\Omega|}
        \xsum_{\vG\in\mathbb{L}^*}
        \frac{4\pi}{\abs{\vG}^2}  
        \left|
        \int_{\Omega}\ud\vr e^{-\I\vG\cdot \vr}
        \rho_{N_\vk}(\vr)               \right|^2
        \\&= 
        \dfrac{1}{|\Omega|}
        \xsum_{\vG\in\mathbb{L}^*}
        \frac{4\pi}{\abs{\vG}^2}  
        |\hat{\rho}_{N_\vk}(\vG)|^2.
\end{align*}
Here $\rho_{N_\vk}(\vr) = \frac{1}{N_\vk}\sum_{i\vk_i}|u_{i\vk_i}(\vr)|^2$ (and its Fourier transform $\hat{\rho}_{N_\vk}(\vG)$) denotes the electron density obtained from the Hartree-Fock calculation. 
The Hartree energy in the TDL thus can be written as
\[
E_\text{h}^{\text{TDL}}=               \dfrac{1}{|\Omega|}
\xsum_{\vG\in\mathbb{L}^*}
\frac{4\pi}{\abs{\vG}^2}  
|\hat{\rho}^{\text{TDL}}(\vG)|^2.
\]
Therefore the finite-size error $E_\text{h}^\text{TDL} - E_\text{h}(N_\vk)$ only comes from the finite-size error of the electron density.
Following the same assumption used throughout the paper that all HF orbitals can be evaluated exactly at any $\vk\in \Omega^*$, the approximation $\hat{\rho}^\text{TDL}(\vG) \approx \hat{\rho}_{N_\vk}(\vG)$ for each $\vG \in \mathbb{L}^*\setminus\{\bm{0}\}$ can be treated as a numerical quadrature and the quadrature error is
\begin{align*}
        \hat{\rho}^\text{TDL}(\vG) - \hat{\rho}_{N_\vk}(\vG)
        =       
        \int_{\Omega^*}\ud\vk_i 
        \left(
        \sum_{i}
        \hat{\varrho}_{i\vk_i, i\vk_i}(\mathbf{G}) 
        \right)
        - 
        \dfrac{1}{N_\vk}\sum_{\vk_i}
        \left(
        \sum_i
        \hat{\varrho}_{i\vk_i, i\vk_i}(\mathbf{G}) 
        \right).
\end{align*}
Note that $\sum_i\hat{\varrho}_{i\vk_i, i\vk_i}(\mathbf{G})$ is a smooth and periodic function of $\vk_i$ over $\Omega^*$. 
By the standard Euler-Maclaurin formula in \cref{cor:euler_maclaurin}, the quadrature error above for each fixed $\vG$ decays super-algebraically. 
Assuming $|\hat{\rho}_{N_\vk}(\vG)|^2$ to be negligible with sufficiently large $\vG$, the summation in the Hartree energy calculation can thus be well approximated over a finite set of $\vG\in \mathbb{L}^*$. 
Then the finite-size error $E_\text{h}^\text{TDL} - E_\text{h}(N_\vk)$ decays super-algebraically with respect to $N_\vk$.

Since the potential energy due to an external potential field solely depends on the electron density, we could similarly show that the quadrature error in the potential energy also decays super-algebraically with respect to $N_k$. 

The kinetic energy in the Hartree-Fock calculation with MP mesh $\mathcal{K}$ is  computed as
\[
        E_\text{k}(N_\vk)
        = 
        \dfrac{1}{N_\vk} \sum_{\vk_i\in\mathcal{K}}
        \left( 
        \sum_{i}\int_{\Omega} |(\nabla+\I\vk_i) u_{i\vk_i}(\vr)|^2 \ud\vr 
        \right),
\]
and its TDL can be written as 
\[
E_\text{k}^\text{TDL}
= 
\dfrac{1}{|\Omega^*|}\int_{\Omega^*}\ud\vk_i
\left( 
\sum_{i}\int_{\Omega} |(\nabla+\I\vk_i) u_{i\vk_i}(\vr)|^2 \ud\vr 
\right).
\]
Thus the finite-size error in the kinetic energy can also be interpreted as the quadrature error 
\begin{equation}\label{eqn:kinetic}
E_\text{k}^\text{TDL} -   E_\text{k}(N_\vk)
=
\left(
\dfrac{1}{|\Omega^*|}\int_{\Omega^*}\ud\vk_i - \dfrac{1}{N_\vk}\sum_{\vk_i\in\mathcal{K}}
\right)\left( 
\sum_{i}\int_{\Omega} |(\nabla+\I\vk_i) u_{i\vk_i}(\vr)|^2 \ud\vr 
\right).
\end{equation}

Noting that $u_{i(\vk_i + \vG)}(\vr) = e^{-\I\vG\cdot\vr}u_{i\vk_i}(\vr)$ with any $\vG \in\mathbb{L}^*$, we could show that $|(\nabla+\I\vk_i) u_{i\vk_i}(\vr)|^2 $ is periodic with respect to $\vk_i$ over $\Omega^*$, i.e., for any $\vG \in \mathbb{L}^*$
\begin{align*}
|(\nabla+\I(\vk_i+\vG)) u_{i(\vk_i + \vG)}(\vr)|^2  
& = |\nabla ( e^{-\I\vG\cdot\vr}u_{i\vk_i}(\vr)) + \I(\vk_i+\vG)e^{-\I\vG\cdot\vr}u_{i\vk_i}(\vr)|^2
\\
& = |e^{-\I\vG\cdot\vr}  (\nabla -  \I \vG) u_{i\vk_i}(\vr) + \I(\vk_i+\vG)e^{-\I\vG\cdot\vr}u_{i\vk_i}(\vr)|^2
\\
& = |(\nabla + \I\vk_i) u_{i\vk_i}(\vr)|^2. 
\end{align*}
Thus, the integrand $\sum_{i}\int_{\Omega} |(\nabla+\I\vk_i) u_{i\vk_i}(\vr)|^2 \ud\vr$ in \cref{eqn:kinetic} is a smooth and periodic function over $\vk_i \in \Omega^*$. 
By the standard Euler-Maclaurin formula in \cref{cor:euler_maclaurin}, the quadrature error of the kinetic energy  decays super-algebraically.

\section{Low-dimensional periodic model}\label{appendix_low_dim}

The low-dimensional periodic model we consider in this paper samples $\vk$ points on a 1D-axis/2D-plane $\Omega^*_\text{low}$ in $\Omega^*$, and uses  the shifted Ewald kernel \cref{eqn:shifted_ewald} for particle interactions. 
The Madelung constant correction to the Ewald kernel is introduced based on a physical argument that the artificial interactions between a particles and its periodic images need to be removed.
From the numerical quadrature perspective, the Madelung correction is necessary since, otherwise, the leading non-smooth term $\frac{4\pi N_\text{occ}}{|\Omega||\vq|^2}$ in $\sum_{ij}\wt{F}_\text{X}^{ij}(\vk_i, \vq)$ (see \cref{eqn:leading_nonsmooth}) is not integrable over $\vq$ in $\Omega^*_\text{low}$ and thus
the exchange energy would diverge as $N_\vk \rightarrow \infty$ in the TDL. 

In this low-dimensional model, similar to \cref{eqn:madelung}, the Madelung constant is defined as 
\begin{equation*}
        \xi
        = 
        \dfrac{|\Omega^*|}{(2\pi)^3N_\vk}\sum_{\vq\in\mathcal{K}_\vq}
        \xsum_{\vG \in \mathbb{L}^*}\dfrac{4\pi e^{-\varepsilon|\vq+\vG|^2}}{|\vq + \vG|^2}
        - \dfrac{1}{(2\pi)^3}\int_{\mathbb{R}^3}\ud \vq\dfrac{4\pi e^{-\varepsilon|\vq|^2}}{|\vq|^2}
        -\frac{4\pi \varepsilon}{| \Omega |N_\vk}
        + \xsum_{\mathbf{R} \in \mathbb{L}_{\mathcal{K}_\vq}} \frac{\operatorname{erfc}\left(\varepsilon^{-1/2}|\mathbf{R}|/2\right)}{|\mathbf{R}|}.
\end{equation*}
However, the $\Gamma$-centered mesh $\mathcal{K}_\vq$ is now sampled in  $\Omega_\text{low}^*$, and the real-space lattice $\mathbb{L}_{\mathcal{K}_\vq}$ associated with $\vq+\vG$ is defined accordingly. 
For example, for an $m\times 1\times 1$ and an $m\times m \times 1$ MP meshes $\mathcal{K}_\vq$ for a quasi-1D and a quasi-2D systems, respectively, 
the lattice $\mathbb{L}_{\mathcal{K}_\vq}$ is defined as  (recall $\mathbb{L} = \{c_1\va_1 + c_2\va_2 + c_3\va_3: c_1,c_2,c_3\in\mathbb{Z}\}$ for the unit cell)
\begin{equation}\label{eqn:LKq_low}
        \begin{split}
                \mathbb{L}_{\mathcal{K}_\vq} &= \{c_1m\va_1 + c_2\va_2 + c_3\va_3: c_1,c_2,c_3\in \mathbb{Z}\},
                \\
                \mathbb{L}_{\mathcal{K}_\vq} &= \{c_1m\va_1 + c_2m\va_2 + c_3\va_3: c_1,c_2,c_3\in \mathbb{Z}\}.
        \end{split}
\end{equation}
It is worth noting that, with $N_\vk\rightarrow \infty$ and $\mathcal{K}_\vq \rightarrow \Omega_\text{low}^*$, the Madelung constant does not scale as $\Or(N_\vk^{-\frac13})$ anymore, but instead diverges to infinity, and the finite-size correction is no longer optional.

In this model, the exchange energy with a finite mesh $\mathcal{K}$ in $\Omega^*_\text{low}$ is computed as 
\begin{align}
        E_\text{x,low}(N_\vk)
        & = -\dfrac{1}{|\Omega^*_\text{low}|^2}\mathcal{Q}_{\Omega_\text{low}^*\times \Omega_\text{low}^*}(\textstyle\sum_{ij}\wt{F}_\text{X}^{ij}, \mathcal{K}\times\mathcal{K}_\vq)  + N_\text{occ}\xi
        \nonumber\\
        & = -\dfrac{1}{|\Omega^*_\text{low}|^2}\mathcal{Q}_{\Omega_\text{low}^*\times \Omega_\text{low}^*}(\textstyle\sum_{ij}\wt{F}_\text{X}^{ij} - N_\text{occ}h_\varepsilon, \mathcal{K}\times\mathcal{K}_\vq)  
        \nonumber\\
         & \quad   +N_\text{occ}
        \left(
        - \dfrac{1}{(2\pi)^3}\int_{\mathbb{R}^3}\ud \vq\dfrac{4\pi e^{-\varepsilon|\vq|^2}}{|\vq|^2}
        -\frac{4\pi \varepsilon}{| \Omega |N_\vk}
        + \xsum_{\mathbf{R} \in \mathbb{L}_{\mathcal{K}_\vq}} \frac{\operatorname{erfc}\left(\varepsilon^{-1/2}|\mathbf{R}|/2\right)}{|\mathbf{R}|}
        \right)
        \label{eqn:exchange_low_1}
\end{align}
where $h_\varepsilon$ is the auxiliary function defined in \cref{eqn:h_eps} that connects the Madelung constant correction with the singularity subtraction method in \cref{thm:exchange_madelung_error}. 
In the TDL, the exchange energy converges to 
\begin{align*}
        E_\text{x,low}^\text{TDL}
        & = 
        -\dfrac{\mathcal{I}_{\Omega_\text{low}^*\times \Omega_\text{low}^*}(\textstyle\sum_{ij}\wt{F}_\text{X}^{ij} -N_\text{occ} h_\varepsilon)}{|\Omega^*_\text{low}|^2}
        +
        N_\text{occ}
        \left(
        - \dfrac{1}{(2\pi)^3}\int_{\mathbb{R}^3}\ud \vq\dfrac{4\pi e^{-\varepsilon|\vq|^2}}{|\vq|^2}
        + \xsum_{\mathbf{R} \in \mathbb{L}_\text{low}} \frac{\operatorname{erfc}\left(\varepsilon^{-1/2}|\mathbf{R}|/2\right)}{|\mathbf{R}|}
        \right),
\end{align*}
where $\mathbb{L}_\text{low}$ denotes the lattice vectors in $\mathbb{L}$ that is perpendicular  to the extended directions, e.g., for the two $\mathbb{L}_{\mathcal{K}_\vq}$ in \cref{eqn:LKq_low}, 
\[
\mathbb{L}_\text{low} = \{c_2\va_2+c_3\va_3: c_2,c_3\in\mathbb{Z}\} \quad \text{in~quasi-1D} \quad \text{and} \quad \mathbb{L}_\text{low} = \{c_3\va_3: c_3\in\mathbb{Z}\} \quad \text{in~quasi-2D}. 
\]

We remark that this is only one  way of defining the exchange energy for low-dimensional systems and other models can lead to different definitions.
The physical reason for such ambiguity is that the electrostatic interaction of a periodic array of charged particles is not well defined without additional constraints~\cite{LeeuwPerramSmith1980}.
Mathematically, as demonstrated in \cref{thm:exchange_madelung_error}, $N_\text{occ}h_\varepsilon$ removes the leading singular term $\frac{4\pi N_\text{occ}}{|\Omega||\vq|^2}$ in $\textstyle\sum_{ij}\wt{F}_\text{X}^{ij}$. The difference is still non-smooth but scales as $\frac{\Or(|\vq|^2)}{|\vq|^2}$  near $\vq = \bm{0}$, and thus $\mathcal{I}_{\Omega_\text{low}^*\times \Omega_\text{low}^*}(\textstyle\sum_{ij}\wt{F}_\text{X}^{ij} - N_\text{occ}h_\varepsilon)$ is finite and $E_\text{x,low}^\text{TDL}$ is well-defined. 

Then the quadrature error of this model exchange energy calculation for quasi-1D and quasi-2D systems satisfies 
\[
E_\text{x,low}^\text{TDL} - E_\text{x,low}(N_\vk)
= 
-\dfrac{1}{|\Omega^*|^2}
\mathcal{E}_{\Omega_\text{low}^*\times \Omega_\text{low}^*}(\textstyle\sum_{ij}\wt{F}_\text{X}^{ij} - N_\text{occ}h_\varepsilon, \mathcal{K}\times\mathcal{K}_\vq)
+ \Or\left(N_\vk^{-1} \right)
= \Or(N_\vk^{-1}).
\]
We note that since the Madelung constant $\xi$ does not vary with respect to parameter $\varepsilon$, the definitions of both $E_\text{x,low}^\text{TDL}$ 
and $E_\text{x,low}(N_\vk)$ also do not depend on $\varepsilon$. 

\begin{rem}[Alternative correction scheme for the exchange energy in low-dimensional systems]\label{rem:alternative_exchange_correction}
For the low-dimensional model with a shifted Ewald kernel, we note that some minor modifications need to be added to the singularity-subtraction-based correction in \cref{eqn:exchange_correction2}  to make the calculation converge to the same TDL energy $E_\textup{x,low}^\textup{TDL}$, i.e., 
\begin{align}
        E_\textup{x, low}^\textup{corrected, 2} (N_\vk)
        & = -\dfrac{1}{|\Omega^*_\textup{low}|^2}\mathcal{Q}_{\Omega_\textup{low}^*\times \Omega_\textup{low}^*}(\textstyle\sum_{ij}\wt{F}_\textup{x}^{ij} - N_\textup{occ}h_\varepsilon, \mathcal{K}\times\mathcal{K}_\vq)  
        \label{eqn:exchange_correction2_low}\\
        & \quad   + N_\textup{occ}
        \left(
        - \dfrac{1}{(2\pi)^3}\int_{\mathbb{R}^3}\ud \vq\dfrac{4\pi e^{-\varepsilon|\vq|^2}}{|\vq|^2}
        + \xsum_{\mathbf{R} \in \mathbb{L}_\textup{low}} \frac{\operatorname{erfc}\left(\varepsilon^{-1/2}|\mathbf{R}|/2\right)}{|\mathbf{R}|}
        \right). 
        \nonumber
\end{align}
Compared to $E_\textup{x,low}(N_\vk)$ in \cref{eqn:exchange_low_1}, $E_\textup{x, low}^\textup{corrected, 2} (N_\vk)$ drops the  term $ -\frac{4\pi \varepsilon}{| \Omega |N_\vk}$, and changes the real-space lattice $\mathbb{L}_{\mathcal{K}_\vq}$ to $\mathbb{L}_\textup{low}$.
Both exchange energy calculations converge to $E_\textup{x,low}^\textup{TDL}$ with $\Or(N_\vk^{-1})$ error with any fixed $\varepsilon$.
However, $E_\textup{x, low}(N_\vk)$ and the Madelung constant $\xi$ are only well defined with $\Gamma$-centered mesh $\mathcal{K}_\vq$, 
while $E_\textup{x, low}^\textup{corrected, 2} (N_\vk)$ is applicable for any MP mesh $\mathcal{K}_\vq$ in $\Omega_\textup{low}^*$ that is closed under inversion. 
In the staggered mesh method for computing the exchange energy (see \Cref{subsec:staggered_ex}) for low-dimensional systems, $E_\textup{x, low}^\textup{corrected, 2} (N_\vk)$ need to be used since the involved $\mathcal{K}_\vq$ does not contain the $\Gamma$ point.
\end{rem}

The Madelung constant correction to ERIs in \cref{eqn:eri_correction} is not invoked in the MP2 energy calculation. 
Assuming the orbitals and orbital energies are exact (if the orbital energies are obtained from the Hartree-Fock calculations, then the finite-size corrections should be applied to occupied orbital energies according to \cref{rem:orbital_madelung}), the analysis of the MP2 energy remains mostly the same simply with $\Omega^*$ changed to $\Omega_\text{low}^*$ and the quadrature error, now written as, 
\[
E_\text{MP2, low}^\text{TDL} - E_\text{MP2, low}(N_\vk)
=
\dfrac{1}{|\Omega^*_\text{low}|^3}
\mathcal{E}_{(\Omega_\text{low}^*)^{\times 3}}\left(\sum_{ijab} F_\text{MP2,d}^{ijab}(\vk_i, \vk_j, \vk_a) + F_\text{MP2,x}^{ijab}(\vk_i, \vk_j, \vk_a), (\mathcal{K})^{\times 3}\right), 
\]
still scales as $\Or(m^{-d}) = \Or(N_\vk^{-1})$ for both quasi-1D and quasi-2D systems.

\end{appendices}

\end{document}